\DeclareSymbolFont{tipa}{T3}{cmr}{m}{n}
\DeclareMathAccent{\invbreve}{\mathalpha}{tipa}{16}
\newtheorem{theorem}{Theorem}
\def\mathbi#1{\textbf{\em #1}}
\begin{document}

\title{Quantum-safe Encryption: A New Method to Reduce Complexity and/or Improve Security Level} 
\author{Amir K. Khandani \\ E\&CE Dept., Univ. of Waterloo, Waterloo, Ontario, Canada; khandani@uwaterloo.ca}

\date{}
 \maketitle   

\noindent
\underline{\bf Abstract:}
This article presents some novel techniques to enhance an encryption scheme motivated by classical McEliece cryptosystem. Contributions include: (1) using masking matrices to hide sensitive data, (2) allowing both legitimate parties to incorporate randomness in the public key without sharing any additional public information,  (3) using concatenation of a repetition code for error correction, permitting key recovery with a negligible decoding complexity,  (4) making attacks more difficult by increasing the complexity in verifying a given key candidate has resulted in the actual key,
(5) introducing memory in the error sequence such that: (i) error vector is composed of a random number of erroneous bits, (ii) errors can be all corrected when used in conjunction with concatenation of a repetition code of length 3. Proposed techniques allow generating significantly larger keys, at the same time, with a much lower complexity, as compared to known post-quantum key generation techniques relying on randomization.  

\section{Introduction}
Objective in key sharing is to generate two identical binary vectors at  locations of Alice and Bob, ideally without disclosing any information about the key to Eve.  Typically, keys of size 256 bits are needed to be used with Advanced Encryption System (AES) to encrypt data. Even though Information theory includes a number of  existence results for sharing keys (or secret messages) that are provably unbreakable, but their practical realization remains a challenge.
Due to these challenges, sharing of a symmetric key is typically performed by sending the key as an encrypted message using a public/private key pair; the so-called Public Key Infrastructure (PKI). This operation, refereed to as Key Encapsulation, is the target of attack in extracting the symmetric key. 

Conventional methods for key encapsulation, excluding Quantum Key Distribution (QKD), rely on three sets of techniques: 
 (1) Information theoretical methods that operate based on: (i) adding random noise to data \cite{wire}, or (ii) extracting the information common between two dependent random variables  \cite{common}.   (2) Mathematical methods to construct a one-way function that is hard to invert. (3) Methods motivated by McEliece cryptosystem which incorporate randomness in the public key matrix and rely on error correcting codes. Unfortunately, over time, these complementary sets of techniques have diverged. This paper is an attempt to bridge this gap.

Information theoretical methods, although known to (asymptotically) achieve perfect secrecy, rely on existence results, lacking a clear path to practical realization.  
Turbo-like codes are known
to approach channel capacity bounds with a small loss in energy efficiency. This shortcoming  can be handled by a small increase in signal power. However, unlike the case of channel capacity, it is difficult to quantify and/or deal with any remaining gap in the case of information theoretical security.

Current PKI relies on complex mathematical operations that are (ideally) one way, i.e., it is easy to apply the function, but very difficult to reverse it, unless one has access to a separate piece called the private key. 
In spite of underlying mathematical complexity, there is a one-to-one relationship between the original key and the version that is hidden behind the private key. This one-to-one mapping can be potentially reversed as the computing technology advances. As a remedy, over recent years, a  class of PKI techniques based on including randomness in the public key have received renewed attention. In McEliece Cryptosystem~\cite{R0} (also see \cite{Main-ref} to \cite{RN37} and references therein) and its variants, the randomness is added by randomly rearranging the generator matrix of an error correcting code (public key), such that the known decoding methods cannot be applied unless the randomness in removed using the corresponding private key. Key encapsulation is achieved by encoding a data vector using the publicly known generator matrix, and then an error vector is added to the encoded vector. Error vector falls  within error correcting capability of the underlying forward error correcting code, and hence can be corrected.   Readers are referred to Appendix~\ref{mac} for a review of McEliece Cryptosystem.  

A shortcoming of known McEliece cryptosystems, and its known variants, is that the forward error correcting code is randomized using linear operations, resulting in an equivalent code which can be potentially decoded.
Typically, the randomized code generates the same set of code-words, but with a random permutation of coordinates that renders currently known decoding methods ineffective.  With advancement in decoding of channel codes (see \cite{R3} to \cite{R11}) and continual advancement in computing technology, the danger exists that the underlying code, although modified in appearance, can be decoded. To overcome this shortcoming, it is desirable that the construction of key pair (public and private) incorporate some randomness which can be attacked only through an exhaustive search. This means, in any attempt to recover the actual key from its publicly available version, one has to exhaustively try all random combinations, significantly increasing the overall complexity. One also needs to be able to verify if a key candidate is indeed the actual key (verification phase). The complexity of the verification phase has not yet received the attention it deserves. The current article aims to address this issue. 

This article enhances this procedure is several ways explained in the following: 

Relying on basic McEliece cryptosystem, the proposed method hides the generator matrix of  the error correcting code through masking (addition of a random binary matrix). Such a masking fundamentally changes the generator matrix.  This makes the public key more resistant to attacks, at the same time, complexity of key recovery and storage requirements are significantly reduced as one can rely on simple forward error correcting codes. In particular, a concatenation of repetition codes of length 3 is used. 

In PKI based on classical McEliece cryptosystem and its reported variants, the public key is a randomized by one of the legitimate parties, say Alice. Unlike classical McEliece cryptosystem, in the current article, Bob is also able to introduce randomness  by discarding a randomly selected subset of columns in  the public key generator matrix received from Alice. Bob then uses a random data vector, encodes it using the reduced generator matrix, and adds error to the result. The key point is that, Alice will be able to correct the errors and derive the same key as Bob without knowing which columns are discarded by Bob.   However, Eve needs to exhaustively find the locations of discarded columns before any attempt to break the encryption.

Conventional McEliece cryptosystem relies on addition of an error vector with independent and identically distributed (i.i.d.) components. This is motivated by the model used in forward error correction over memory-less channels. Unlike the case of data transmission, in randomized cryptography, the error vector is constructed by one of the two participating parties, say Bob, and is under his control . This allows introducing memory into the error vector. The introduced memory considered in this article is such that the number of erroneous bits is random, while Alice can perfectly recover all errors. At the same time, the entropy of the error vector is high enough to achieve a highly secure system. Unlike McEliece cryptosystem, which relies on introducing a fixed/known number of errors, the proposed method includes a random number of erroneous bits. Having a random number of erroneous bits adds to the complexity of the attack. 

To improve clarity, in discussions related to classical McEliece cryposystem, italic bold fold notations, e.g.,  $\mathbi{A}$,  are used to represents matrices, 
$\eth$ represents the key data vector and $\epsilon$ represents the added error vector.  In discussions related to the proposed method, regular (capital) boldface notations, e.g.,  $\mathbf{A}$  are used  to represent matrices, and  (lower case) boldface notations are used to represent vectors, e.g., $\mathbf{d}$ represents the key data vector and $\mathbf{e}$ represents the added error vector.

\section{Proposed Method}
To have a basis for comparison, this article focuses on a small part in the overall complexities of McEliece  and Niederreiter Crypto-Systems, i.e., that of matrix multiplication by the public key for key encapsulation. These values are compared with the complexity of  matrix multiplication by the public key for key 
encapsulation in the proposed method. This computation forms the bulk of the complexity in the proposed method, while the bulk of complexity in earlier known techniques is that of recovering the message from the erroneous vector through decoding of the forward error correcting code. This operation has a trivial complexity in the proposed method.   Consequently, presented comparison results are to the disadvantage of the proposed method.

\subsection{Public Key Generation, Key Encapsulation and Recovery}
First, to establish a key between Alice and Bob, it is assumed that Alice is responsible for generating the public key; its associated randomization and then recovering the key from the information received from Bob. On the other hand, Bob 
encapsulates a message using a randomly punctured version of the public key; adds an error vector and sends the result to Alice. Alice then recovers the key. These two operations are explained next. 

\subsubsection{Structure of Public Key}

Alice generates a matrix $\mathbf{A}$ with the structure shown in Fig.~\ref{FigA}.
Alice also generates a matrix $\mathbf{B}$ (see Figs.~\ref{FNB1},\ref{FNB2}), selected to satisfy the conditions in Fig.~\ref{FNAB} for the product $\mathbf{AB}$.   The corresponding public key is equal to ${\mathbf P}={\mathbf B}{\mathbf G}$, with ${\mathbf G}$ to be explained later.  
  \begin{figure}[h]
   \centering
\hspace*{-0.8cm}
   \includegraphics[width=0.4\textwidth]{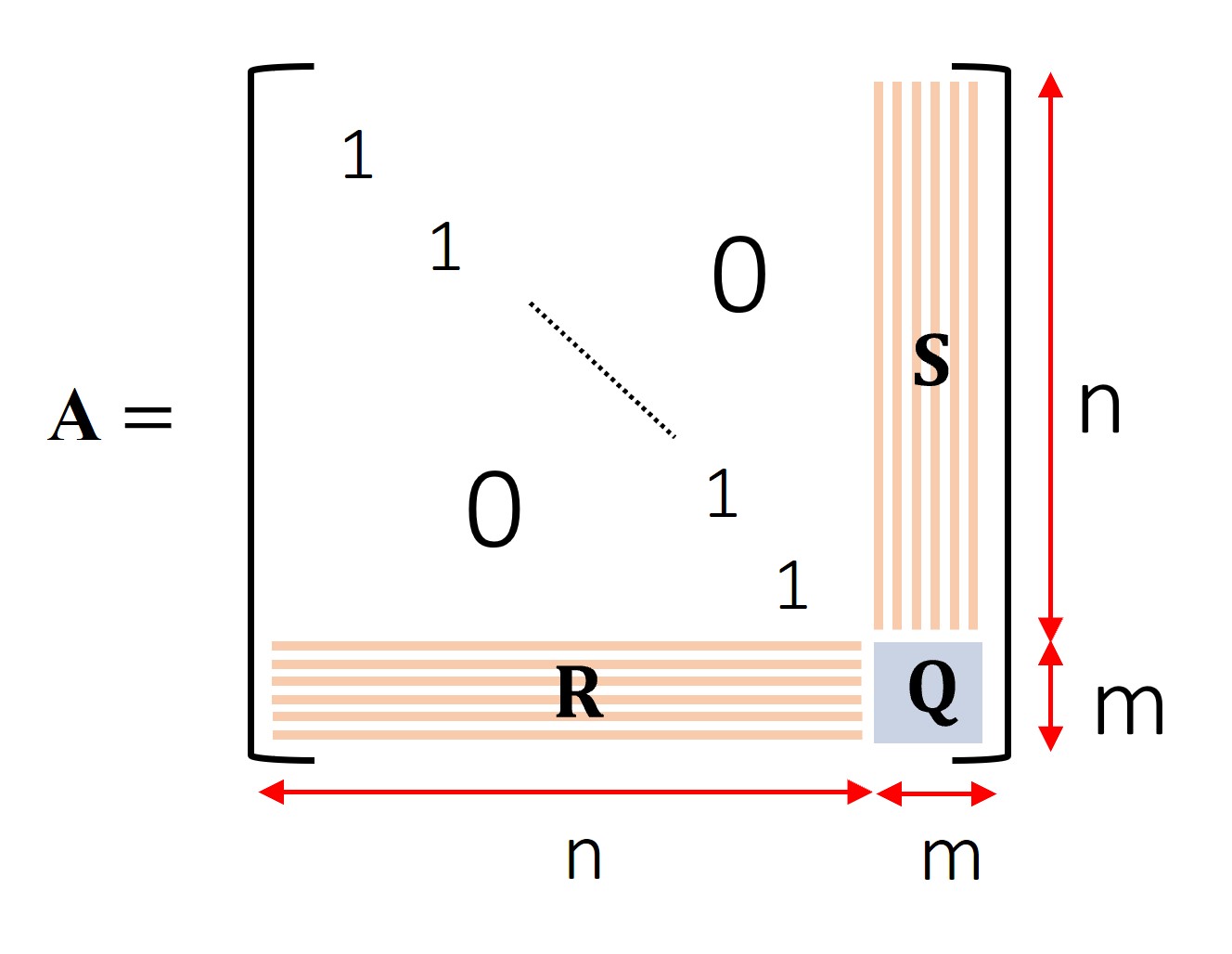}
   \caption{Matrix $\mathbf{A}$ composed of a unitary matrix in the upper left corner, and random rows and columns in $\mathbf{S}$, $\mathbf{R}$ and $\mathbf{Q}$  (used by Alice to recover the key). }
   \label{FigA}
 \end{figure}

\begin{figure}[h]
   \centering
\hspace*{-0.5cm}
   \includegraphics[width=0.4\textwidth]{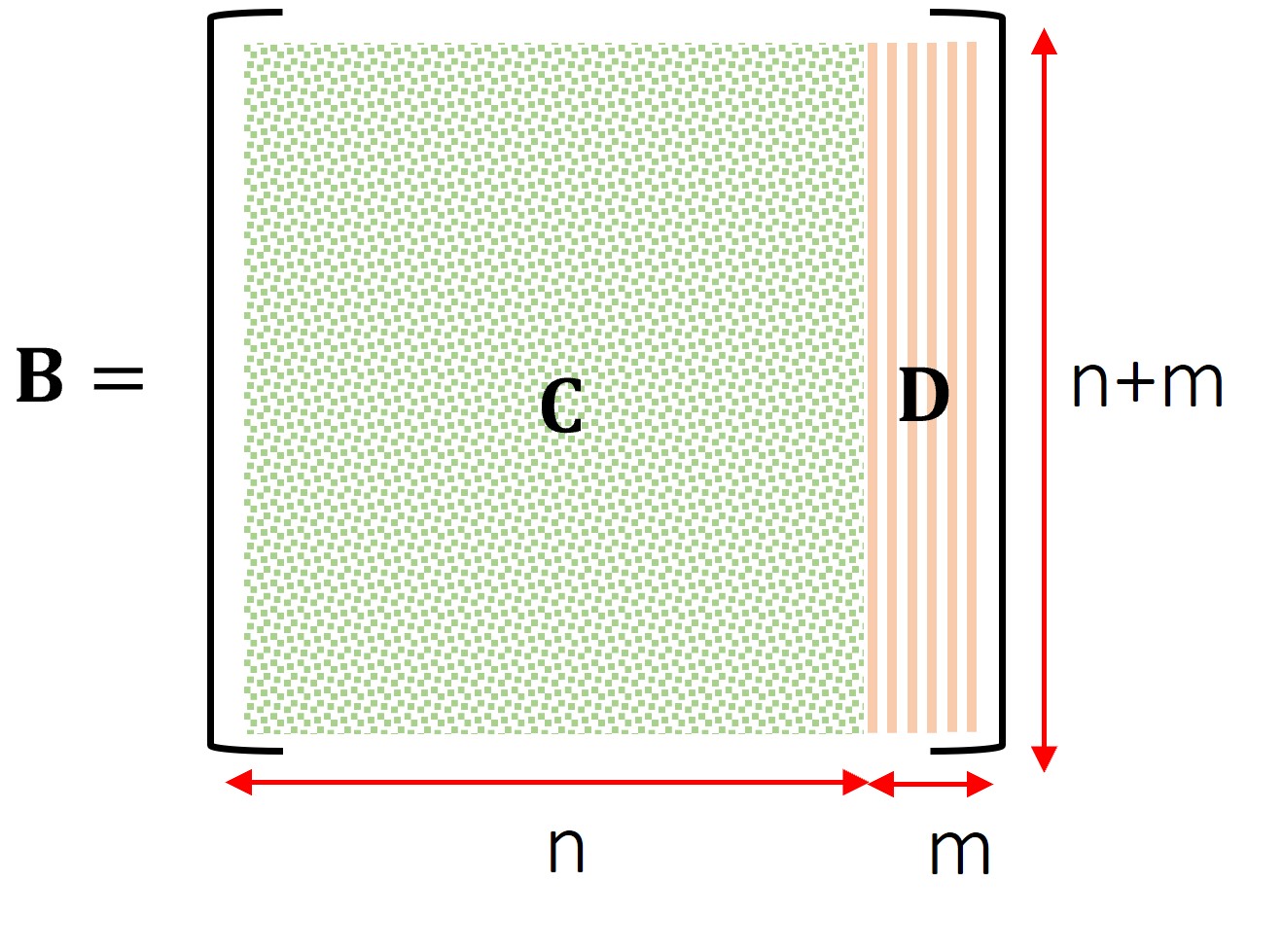}
   \caption{Structure of matrix $\mathbf{B}$ used in generating the public key.}
\label{FNB1}   
 \end{figure} 

    \begin{figure}[h]
   \centering
\hspace*{-0.45cm}
 \includegraphics[width=0.39\textwidth]{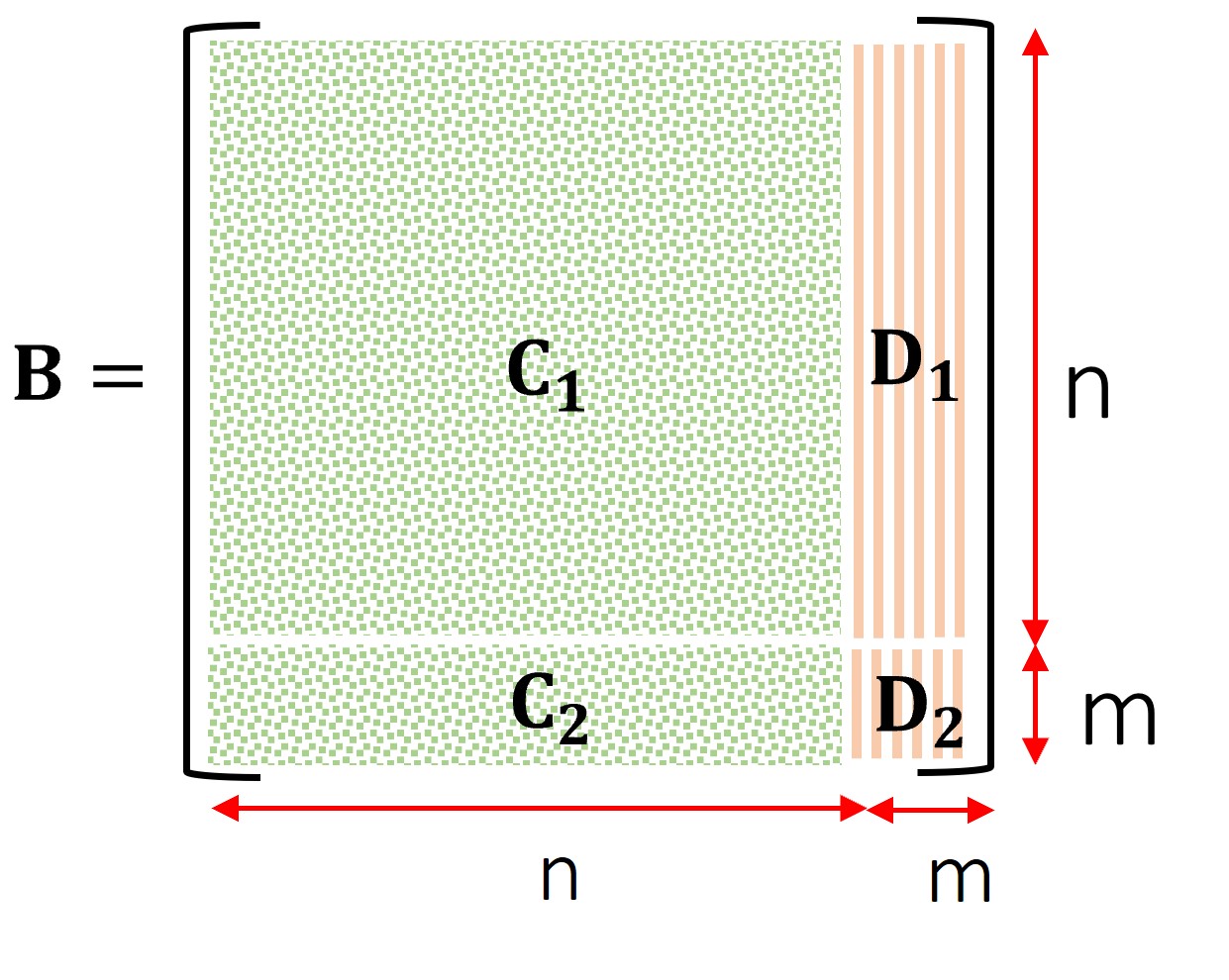}
   \caption{A decomposition of matrix $\mathbf{B}$.}
\label{FNB2} 
 \end{figure} 

    \begin{figure}[h]
   \centering
\hspace*{-0.75cm}
   \includegraphics[width=0.4\textwidth]{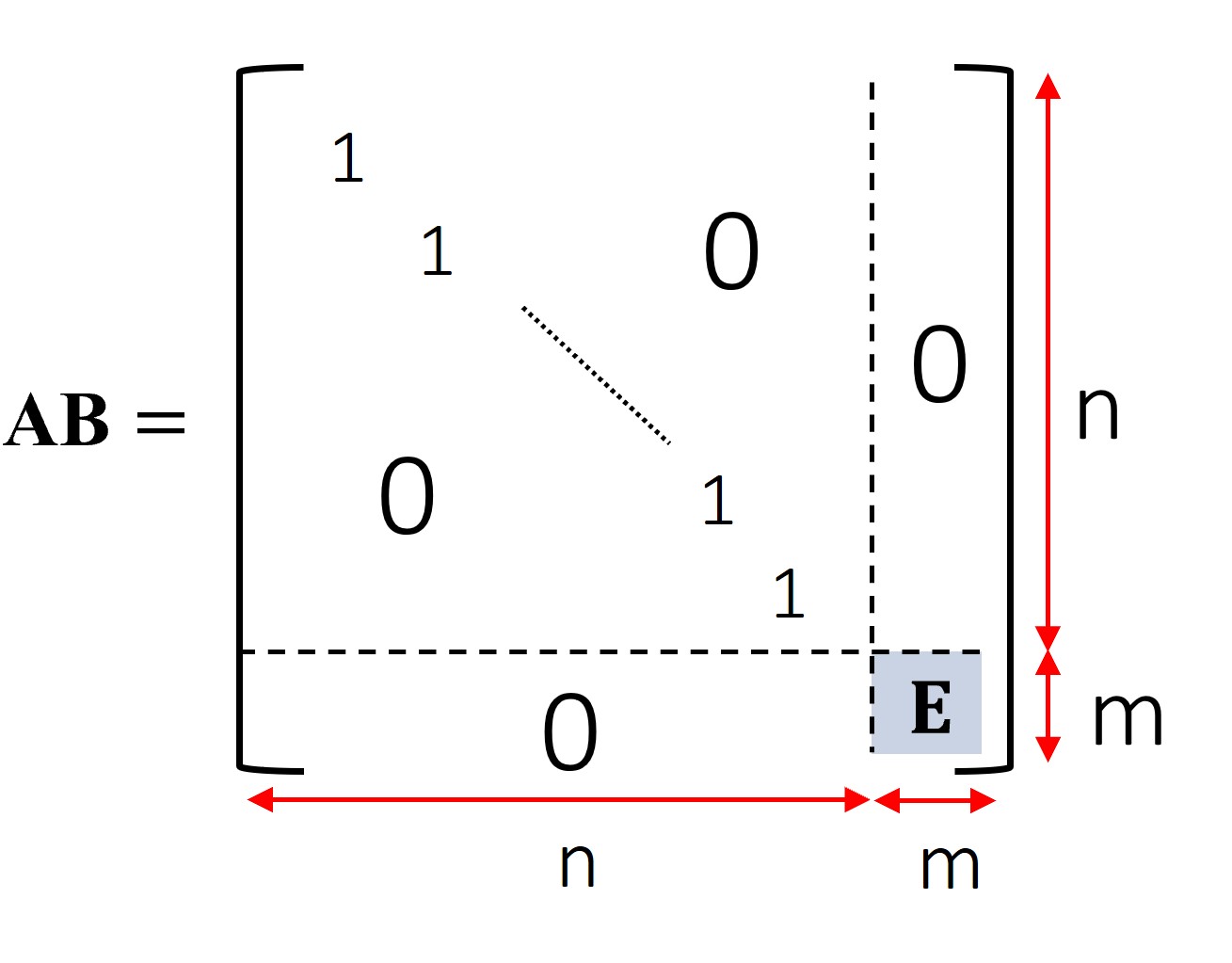}
   \caption{Structure of matrix $\mathbf{A}\mathbf{B}$ formed implicitly, as the 
first step in recovering the key, by Alice (also see Fig.~\ref{FNSet1}).}
\label{FNAB} 
 \end{figure} 

\begin{figure*}[h] 
\centering \hspace*{-0.55cm}\includegraphics[width=0.85\textwidth]{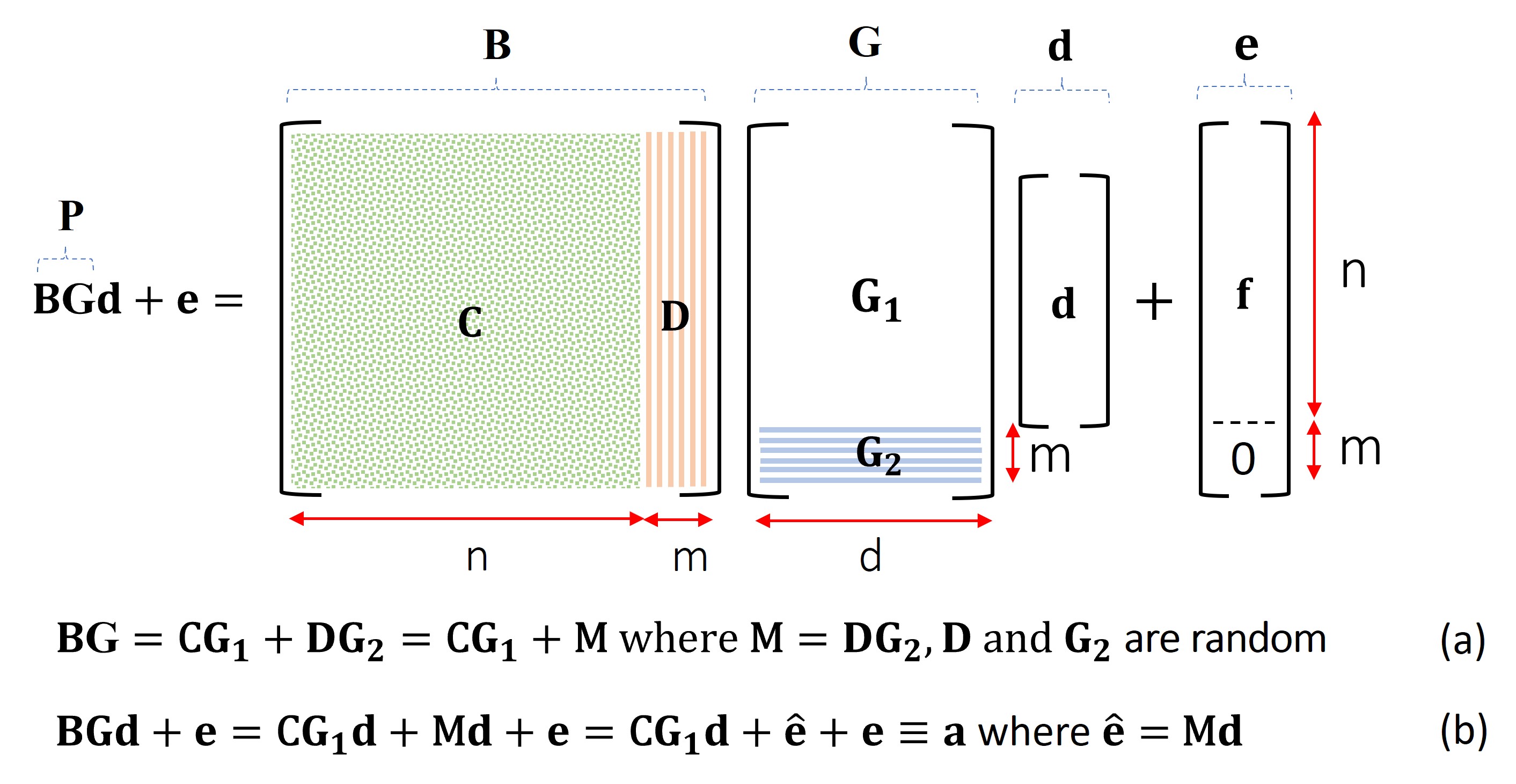} 
\caption{${\mathbf P}={\mathbf B}{\mathbf G}$ is the public key, ${\mathbf e}$ is the error vector following state diagram in Fig. \ref{FigS}, ${\mathbf M}$ is the masking matrix and $\hat{\mathbf e}$ is the error vector capturing the effect of virtual errors due to ${\mathbf M}$. To add further confusion, Bob removes $\mathsf{p}$  randomly selected columns from ${\mathbf P}$ to generate $\hat{\mathbf P}$ which is then multiplied by a shortened data vector $\hat{\mathbf d}$ of length $\mathsf{d-p}$. }
\label{FNSet2} 
\end{figure*}

Error correcting code used here is constructed by concatenation of $\mathsf{d}$ repetition codes of lengths 3. Each column of the generator $\mathbf{G}_1$  (see Fig. \ref{FNSet2}) is composed of a single repetition code of length 3, and bits constructing repetition codes in different columns do not overlap. The resulting row size of matrix $\mathbf{G}_1$ is equal to $\mathsf{n=3d}$,  $\mathbf{G}_1$ is an 
$\mathsf{n}\times \mathsf{d}$ matrix.  
Then columns of $\mathbf{G}_1$ are randomly permuted by Alice. 
Alice then adds $\mathsf{m}$ random rows at the end of the resulting $\mathbf{G}_1$ to construct the composite generator matrix $\mathbf{G}$ of size $\mathsf{n+m}\times \mathsf{d}$ (see Fig.~\ref{FNSet2}). Figures~\ref{FNAB}, \ref{FNSet1} include  a matrix ${\mathbf E}$ to be discussed/used later.  Finally, structure of the public key, i.e.,  $\mathbf{BG}$  is depicted in Fig.~\ref{FNSet2}.

\subsubsection{Key Encapsulation by Bob}

Having access to the public key, i.e., ${\mathbf P}={\mathbf B}{\mathbf G}$, 
Bob first selects $\mathsf{p}$ random  indices among columns of ${\mathbf P}$ and discards those columns. This random selection changes in each round of  key encapsulation. This allows for updating the public key by Bob, without the need for sending any pubic data. Random discarding of public key columns plays an important role in increasing the entropy in key encapsulation; as well as in increasing the attack difficulty. For simplicity, whenever removing operation does not play a direct role, the article relies on notations related to the original key and its associated components e.g., in Figs.~\ref{FNSet1},\ref{FNSet2}.  

\begin{figure}[h]
   \centering
\hspace*{-.2cm}
   \includegraphics[width=0.32\textwidth]{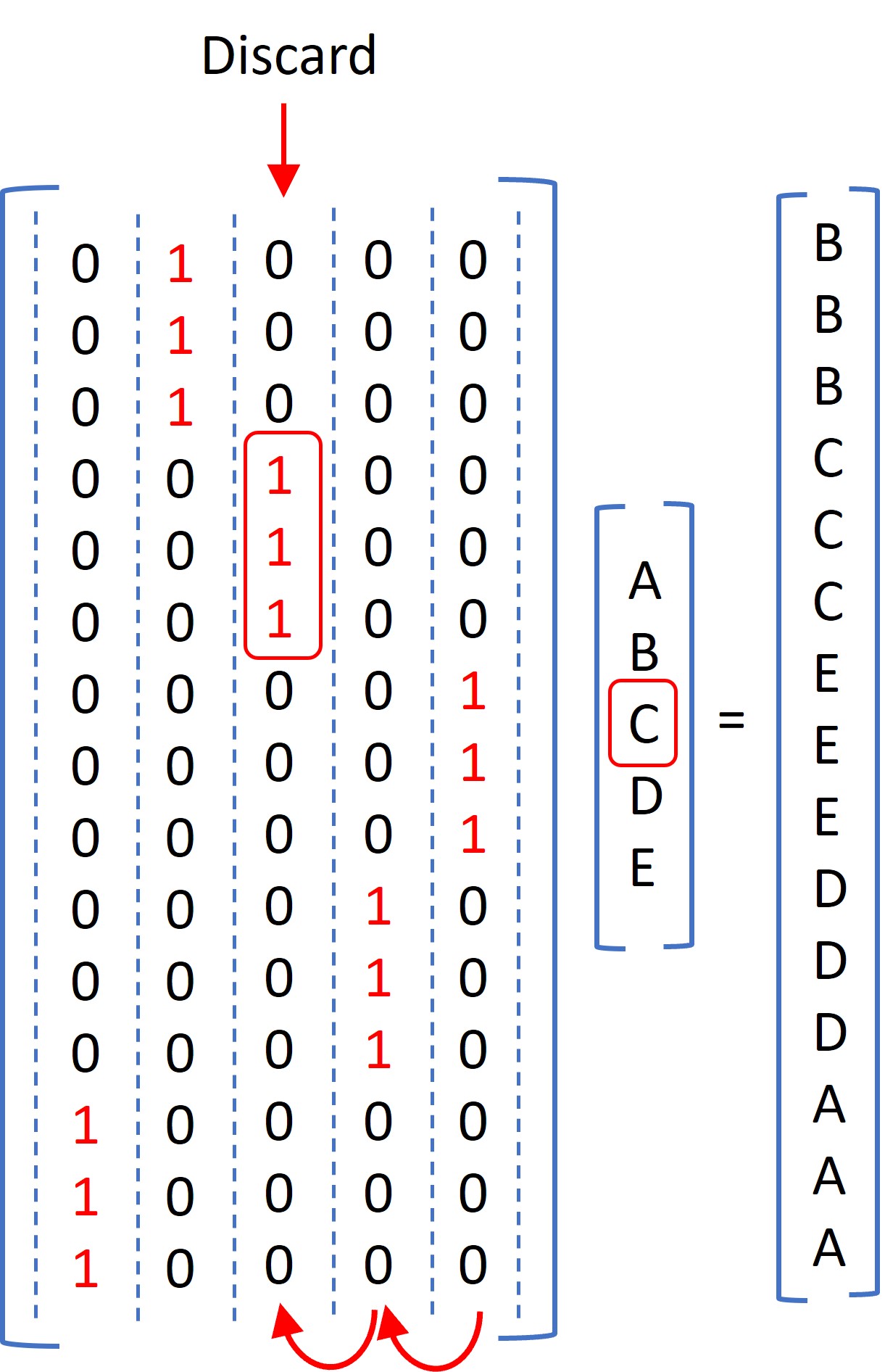}
   \caption{An example for a generator matrix representing column-wise permuted concatenation of length 3 repetition codes .}
   \label{Pru1}
 \end{figure}

 \begin{figure}[h]
   \centering
\hspace*{-.2cm}
   \includegraphics[width=0.29\textwidth]{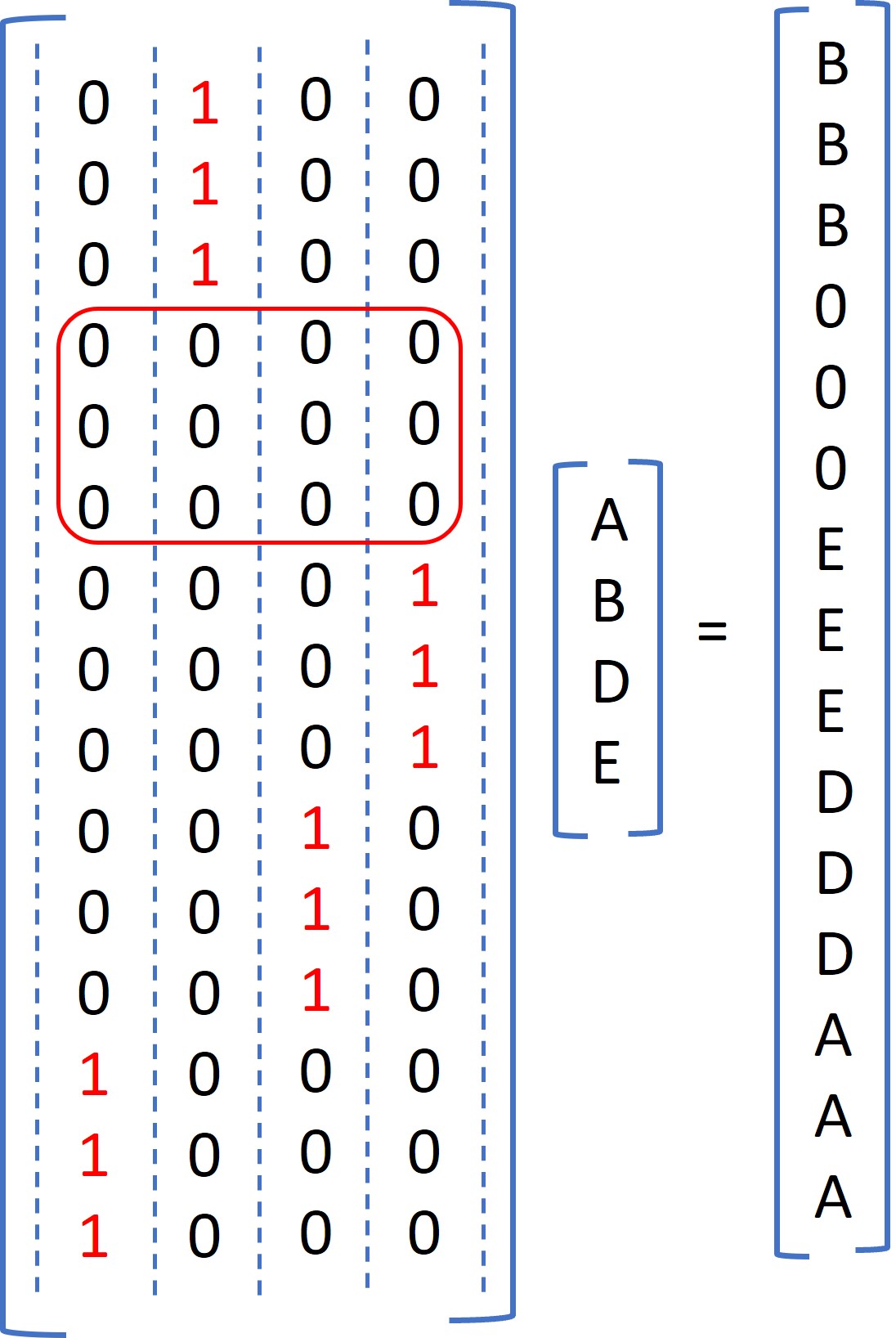}
   \caption{Effect of discarding a column in a concatenation of length 3 repetition codes in Fig. \ref{Pru1}.}
   \label{Pru2}
 \end{figure}

The reduced public key is denoted as $\hat{\mathbf P}$. Removing ${\mathsf p}$ columns from ${\mathbf P}$ is accompanied by using a shortened vector,  
$\hat{\mathbf d}$, of size 
$\mathsf{d}-\mathsf{p}$  to be multiplied by the reduced generator matrix.   
Bob encapsulates a randomly selected vector $\hat{\mathbf d}$ by computing 
$\hat{\mathbf P}\hat{\mathbf d}$, generates an error vector ${\mathbf e}$ following the state diagram in Fig.~\ref{FigS}, adds $\mathsf{m}$ zeros at the end of vector ${\mathbf e}$ (see Fig.~\ref{FNError}(a)). Then, Bob adds the result to the vector $\hat{\mathbf P}\hat{\mathbf d}$ and sends 
$\hat{\mathbf P}\hat{\mathbf d}+{\mathbf e}$ to Alice. 
Note that removing columns does not change the size of vectors 
$\hat{\mathbf P}\hat{\mathbf d}$ and ${\mathbf e}$, i.e., the size will be equal to 
$\mathsf{n}+\mathsf{m}=\mathsf{3d}+\mathsf{m}$.

Theorem \ref{ThNAK1} in Appendix~\ref{Theo1} establishes that in a chain of  multiplied matrices, say 
$\hat{\mathbb{M}}=\mathbb{M}_\alpha...\mathbb{M}_2\mathbb{M}_1$, removing a column indexed by  $\beta$ in $\mathbb{M}_1$ results in removing column $\beta$ in 
$\hat{\mathbb{M}}$. In case multiple columns are removed, a similar effect occurs for each removed column. The impact on the result of multiplication is the same as having a zero in positions within the original data vector ${\mathbf d}$ that correspond to indices of removed columns prior to computing ${\mathbf P}{\mathbf d}$. 

Figures~\ref{Pru1}, \ref{Pru2} show the effect of removing a column, and its net effect at the Alice's side upon multiplication by $\mathbf{A}$. As mentioned, Fig.~\ref{Pru2} shows that the removed column (third element, corresponding to the third column in Fig.~\ref{Pru1}) results in filling the corresponding three positions within the result of multiplication in Fig.~\ref{Pru2} with zeros. 
Referring to Figs.~\ref{Pru1},\ref{Pru2}, to arrive at a key that is the same as the version constructed at the Alice's side, Bob privately inserts $\mathsf{p}$ zeros within $\hat{\mathbf d}$ at positions corresponding to removed columns, increasing the key size to $\mathsf{d}$.  This operation is required to guarantee that Alice and Bob can derive the same key, although the positions of discarded columns remain a secret to public, as well as to Alice. Since the reconstructed erroneous vector at the Alice side is composed of a concatenation of repetition codes added to an error vector constructed using the state diagram in Fig.~\ref{FigS}, Alice will be able to correct all errors (since that each error bit, i.e., 1 is followed by at least two zeros Fig.~\ref{FigS}).

 As mentioned, the effect of removing of columns is equivalent to inserting three consecutive zeros at the corresponding positions within the vector $\hat{\mathbf P}\hat{\mathbf d}$. As a result, any repetitions of three zeros could be caused by an actual zero being part of vector $\hat{\mathbf{d}}$ formed by Bob, or due to removing of the corresponding column. Alice does not know which of these two is the actual case, but it does not matter since Alice will be able to correct all errors, resulting in a vector composed of zeros in all bits of  $\hat{\mathbf{d}}$ that were set to zero by Bob, as well as the zeros due to removed columns. Since Bob has (privately) inserted zeros in positions corresponding to discarded columns, upon removing the last $\mathsf{m}$ bits, Alice and Bob arrive at the same key of length $\mathsf{n}$. Alice and Bob have the option of using a hash function to reduce the size of the key to a shorter key aimed at having equal probabilities for zero and one, and an entropy capturing the entropy of the original key.

 \begin{figure}[h]
   \centering
\hspace*{-.2cm}
   \includegraphics[width=0.39\textwidth]{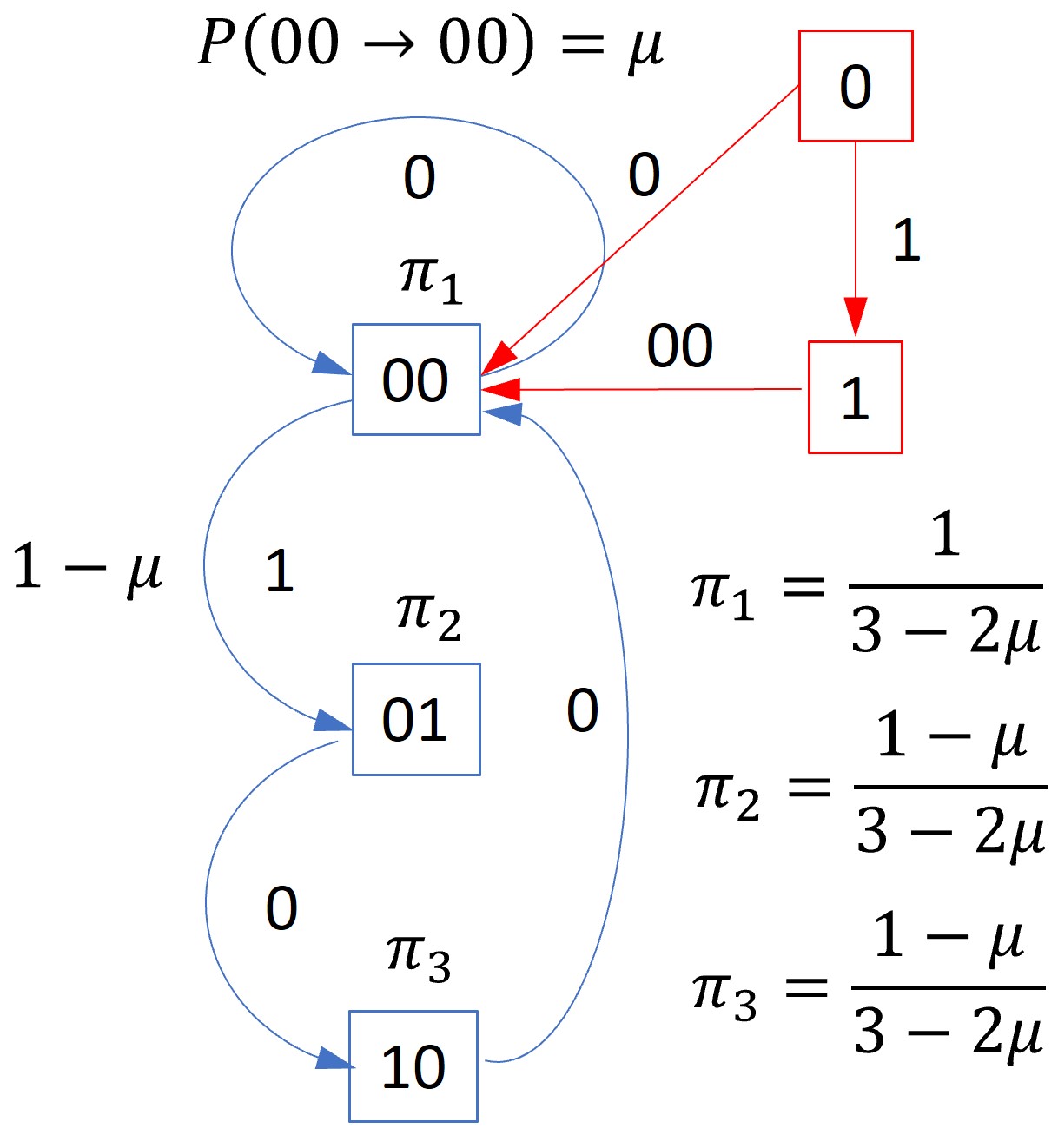}
   \caption{State diagram for generating $\mathbf{e}$ . Note that each error bit, i.e., a one, is followed by at least two zeros. Boxes in red depict the formation of the starting bits in 
$\mathsf{e}$ and can be ignored.}
   \label{FigS}
 \end{figure}

    \begin{figure}[h]
   \centering
\hspace*{-0.0cm}
   \includegraphics[width=0.38\textwidth]{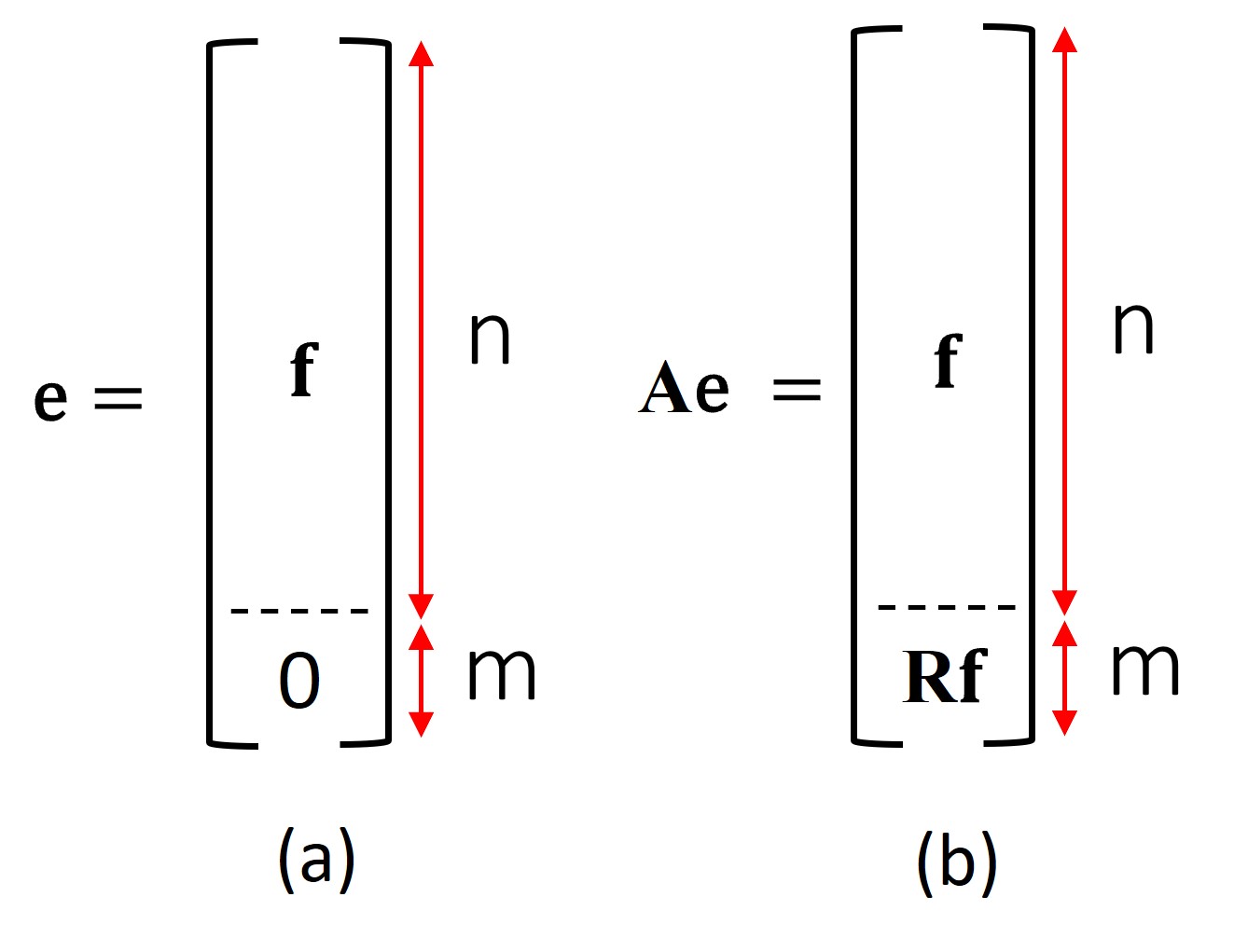}
   \caption{(a) Structure of the error vector $\mathbf{e}$ added to the public key, where  $\mathbf{f}$ is a binary vector with memory constructed based on the state diagram in Fig.~\ref{FigS}. (b) Modified error vector upon multiplication of public key with matrix $\mathbf{A}$ at the Alice side.}
\label{FNError} 
 \end{figure} 

    \begin{figure}[h]
   \centering
\hspace*{-0.0cm}
   \includegraphics[width=0.4\textwidth]{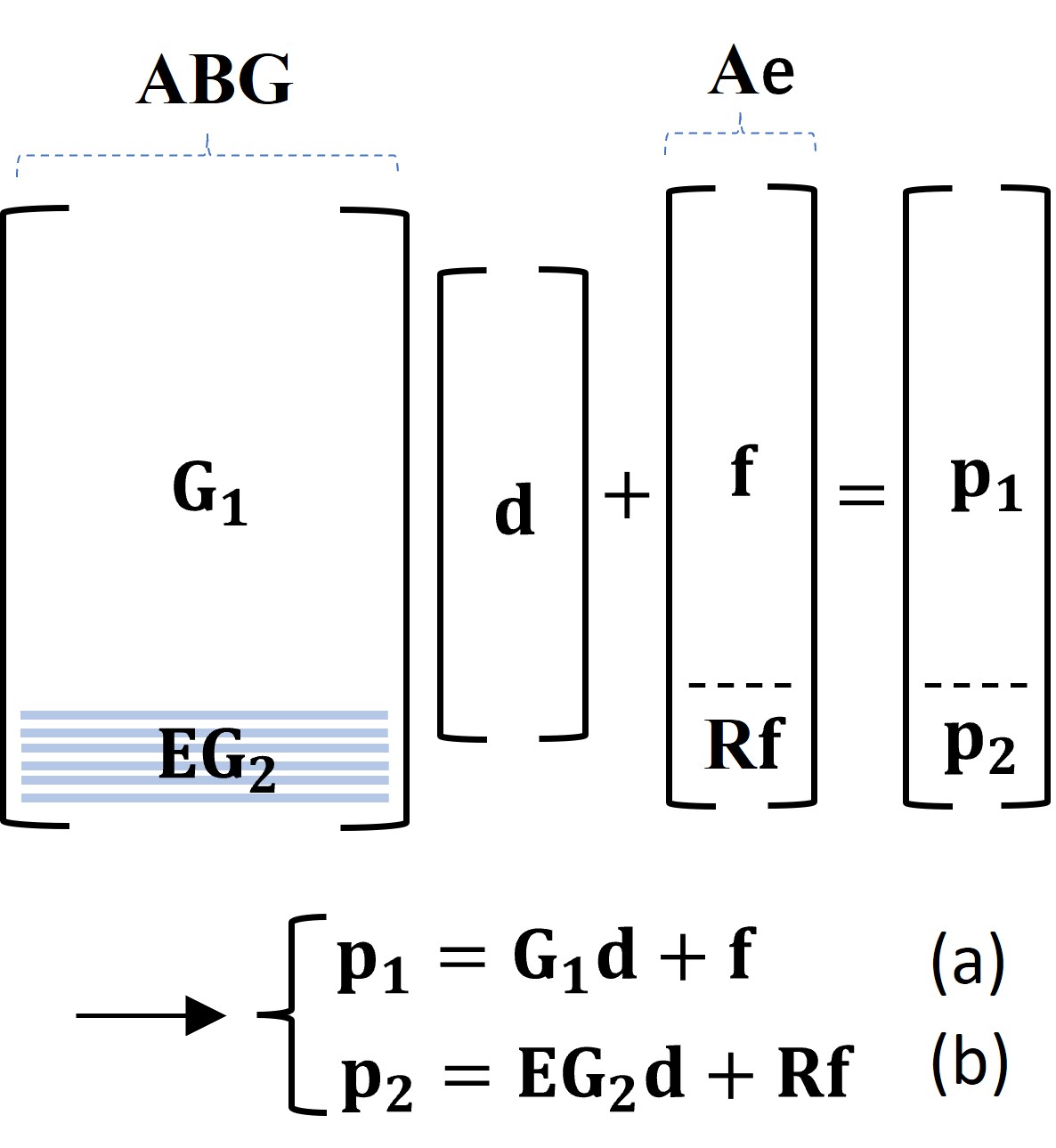}
   \caption{Operations used in recovering the encrypted message. Note that the product of matrices ${\mathbf A}\mathbf{B}$ is as shown in Fig.~\ref{FNAB}. 
See Figs.~\ref{FNAB},\ref{FNSet1} for the definition of  matrix ${\mathbf E}$.} 
\label{FNrecover} 
 \end{figure} 

\begin{figure*}[h] 
\centering \hspace*{0cm}\includegraphics[width=1\textwidth]{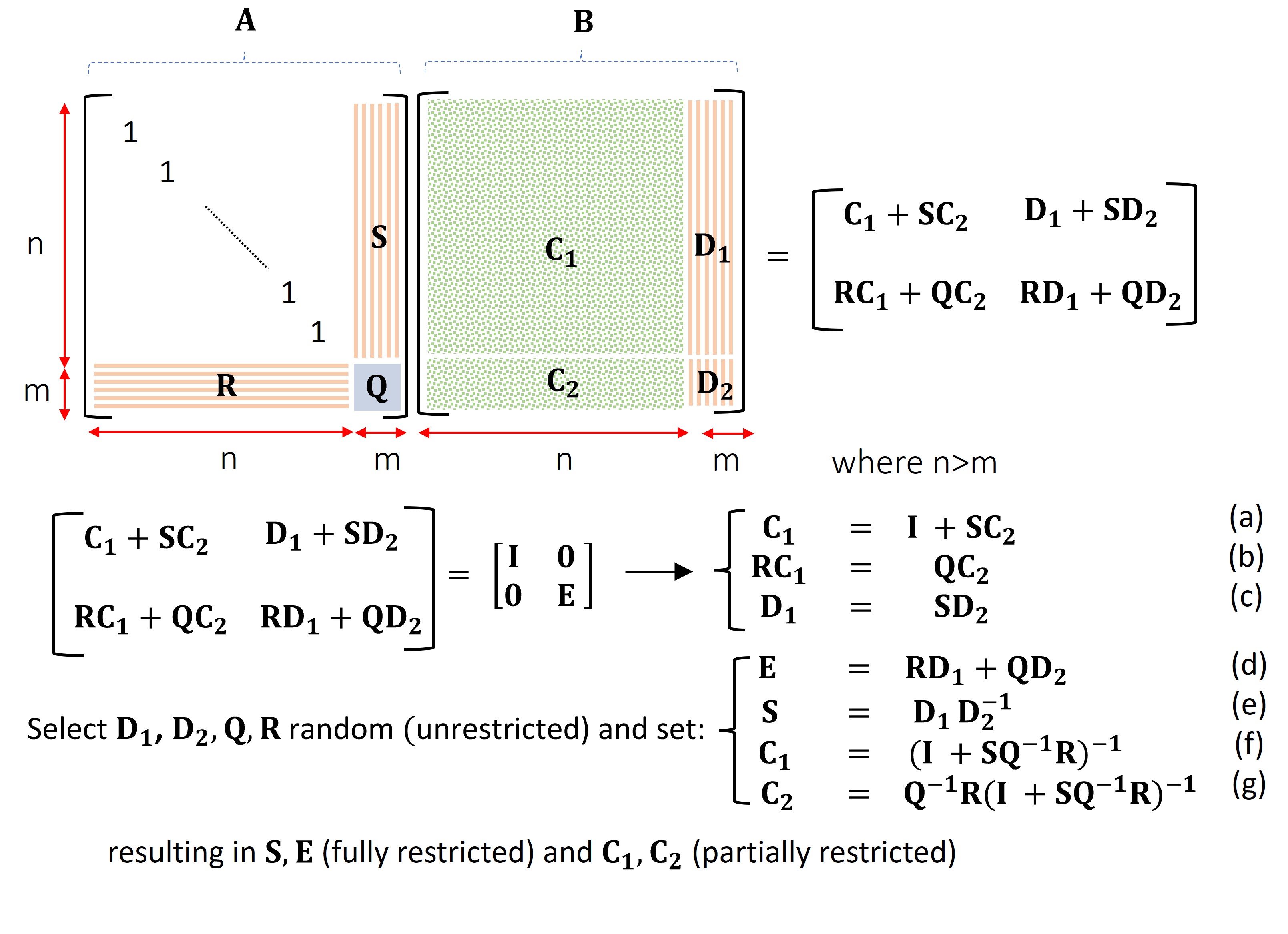} 
\caption{Conditions governing matrices ${\mathbf A}$ and ${\mathbf B}$ resulting in recovery expressions  in Fig.~\ref{FNrecover}.} 
\label{FNSet1}
\end{figure*}

\subsubsection{Key Recovery by Alice}

Alice starts the key recovery process by multiplying the received vector (constructed by Bob using the shortened vector, $\hat{\mathbf d}$) by matrix  ${\mathbf A}$.  Note that the effect of random permutation applied to the columns of ${\mathbf G}_1$ does not affect the outcome, i.e., repetition code occupying rows
$\kappa$, $\kappa+1$,  $\kappa+2$ in column $\xi$ act as a repetition code with bits ordered consecutively, multiplied by the bit at position $\xi$ of vector ${\mathbf d}$ and the result occupies positions $\kappa$, $\kappa+1$,  $\kappa+2$ in the vector obtained as a result of this multiplication. 

According to Theorem \ref{ThNAK1} in Appendix~\ref{Theo1}, the effect of removing the column  $\varkappa$ in the public  key matrix is equivalent to removing column 
$\varkappa$ in  ${\mathbf G}_1$. It results in 3 consecutive zeros in positions $\varkappa$,$\varkappa+1$ and $\varkappa+2$ in the outcome. Since the resulting structure resembles a concatenation of repetition codes of length 3,  Alice will be able to correct all errors. Note that the consecutive zeros in positions 
$\varkappa$,$\varkappa+1$ and $\varkappa+2$ of the resulting vector is equivalent to having a zero at the position $\varkappa$ in the original vector ${\mathbf d}$. 
Since two consecutive ones cannot appear in the error vector (see state diagram in Fig.~\ref{FigS}), one can conclude that addition of the error vector never turn a repetition of three ones (which can occur when the corresponding bit in data stream, ${\mathbf d}$, is a one)  into three zeros. Consequently, three consecutive zeros in the vector obtained after error correction could occur in one of two cases: (i) Corresponding bit in the extended ${\mathbf d}$ is zero, or (ii) corresponding column is among  columns removed by Bob. In either of these cases, Alice decides for a zero in the corresponding position within extended ${\mathbf d}$. Bob does the same, i.e., inserts zeros in the shortened encapsulated key at positions corresponding to discarded columns. As a result, Alice and Bob gain access to the same bit stream of size $\mathsf{n}=3\mathsf{d}$.  However, the resulting vectors have a higher probability for zero vs. one, due to including extra zeros in positions corresponding to columns removed by Bob.  Alice and Bob can use the resulting extended keys of size $\mathsf{d}$, or rely on some form of hashing to reduce the size to the actual entropy level of $\mathsf{d-p}$.  Readers are refereed to Section~\ref{compAK} for comparisons between proposed method with McEliece  and Niederreiter Crypto-Systems. Table~\ref{TabN1AK} includes some examples including key entropy and $\mathsf{SEC}$ ($\mathsf{SEC}$ is defined in expression \ref{AE4AK00}) value.

\subsubsection{Properties of Error Vector $\mathbf{f}$}

Role of $\mathbf{f}$ is to add a binary error vector with a particular memory which can be corrected by the underlying repetition code.  Noting the procedure involved in key recovery in Fig.~\ref{FNrecover}, at the Alice's side, the upper part of the error vector $\mathbf{f}$ will be multiplied by an identity matrix, and its lower part by matrix ${\mathbf R}$ which is the random matrix at the lower part of ${\mathbf A}$ (see Fig.~\ref{FigA}). This results in adding error to the last $\mathsf{m}$ bits. These bits will be discarded by Alice. As a result, the structure of the error vector $\mathbf{e}$ remains unchanged. The overall error vector is depicted in Figs.~\ref{FNError},\ref{FNrecover}. Noting the state diagram in Fig.~\ref{FigS}, each error bit (a bit of value one) is followed by at least two zeros.  Consequently, it is easy to see that a repetition code of length 3  can correct  all the erroneous bits. 
Since the last $\mathsf{m}$ bits are discarded by Alice, terms ${\mathbf R}{\mathbf f}$ and $\mathbf{P}_2$ are not given any consideration in this article.

\subsubsection{Benefits of Using $\mathbf{A}$ in the Construction of Public Key}

The aforementioned formation of public key using matrix $\mathbf{A}$ offers two benefits: 

\noindent {(1)} It allows keeping the first $\mathsf{n}$ components of the added error vector unaffected in the recovery operation performed by Alice. On the other hand, the operation used in key recovery depicted in Fig.~\ref{FNSet1}, brings back the structure of the generator matrix $\mathbf{G}_1$ to its original form. Alice can correct the added error vector   $\mathsf{f}$, and then, knowing how the columns of $\mathsf{G}_1$ are permuted, Alice can recover the encapsulated key selected by Bob. 
\newline
{(2)}  It spreads the randomness inserted in $\mathbf{R}$, $\mathbf{S}$ and 
$\mathbf{Q}$ (see Fig. \ref{FigA}) throughout the public key matrix  $\mathbf{BG}$. Indeed, in the sense discussed in Theorem \ref{fullrank2}, randomness will be uniformly distributed within the constructed masking matrix.
  
\section{Error Vectors with Memory}
Coding theory has been developed based on memory-less channels, i.e., error vector is composed of unknown i.i.d. bits to be detected at the receiver. Use of memory-less error vectors has propagated to the application of  forward error correcting codes in code-based cryptography. However, there is a difference in these two domains. In data transmission over a memory-less channel, the error sequence is out of transmitter's and/or receiver's control. However, in code-based cryptography, error vector is constructed by one of the legitimate parties and and thereby can be controlled.  In the following, this feature is exploited as a tool to improve code-based cryptography.  

\subsection{An Error Sequence Correctable by a Concatenation of Repetition Codes of Length 3}

Using the state diagram  in Fig.~\ref{FigS}, Bob generates an error vector that is completely detectable by a concatenation of repetition codes of length 3. The  constraint imposed by this state diagram is that an erroneous bit (one in the error sequence) is always followed by at least two error-free bits (two consecutive zeros in the error sequence). Note that erroneous bits can still occur in any position within the vector $\mathbf{e}$. It is easy to see that such an error sequence can be always corrected when a number of repetition codes of length 3 are concatenated. 

As mentioned earlier, properties of this state diagram, captured in the value of $\mu\in[0,1]$, govern two factors: (1) entropy of the error vector (determines the complexity in an exhaustive search attack), and (2) number of ones in the error vector (determines the complexity in an information set decoding attack). Figs.~\ref{Cross260} to \ref{Cross1000} show how these two criteria vary by changing $\mu$. The curves cross at a point that determines the security level, i.e.,  $\mathsf{SEC}$ ($\mathsf{SEC}$ is defined in expression \ref{AE4AK00}). Note that the entropy associated with identifying the location of discarded columns is included as an additive factor of 
\begin{equation}\label{additive} 
\mathsf{E}=\log_2{\mathsf{n} \choose \mathsf{p}}
\end{equation}
in both curves in Figs.~\ref{Cross260} to \ref{Cross1000}. 

 
\begin{figure}[h]
   \centering
\hspace*{-.35cm}
   \includegraphics[width=0.5\textwidth]{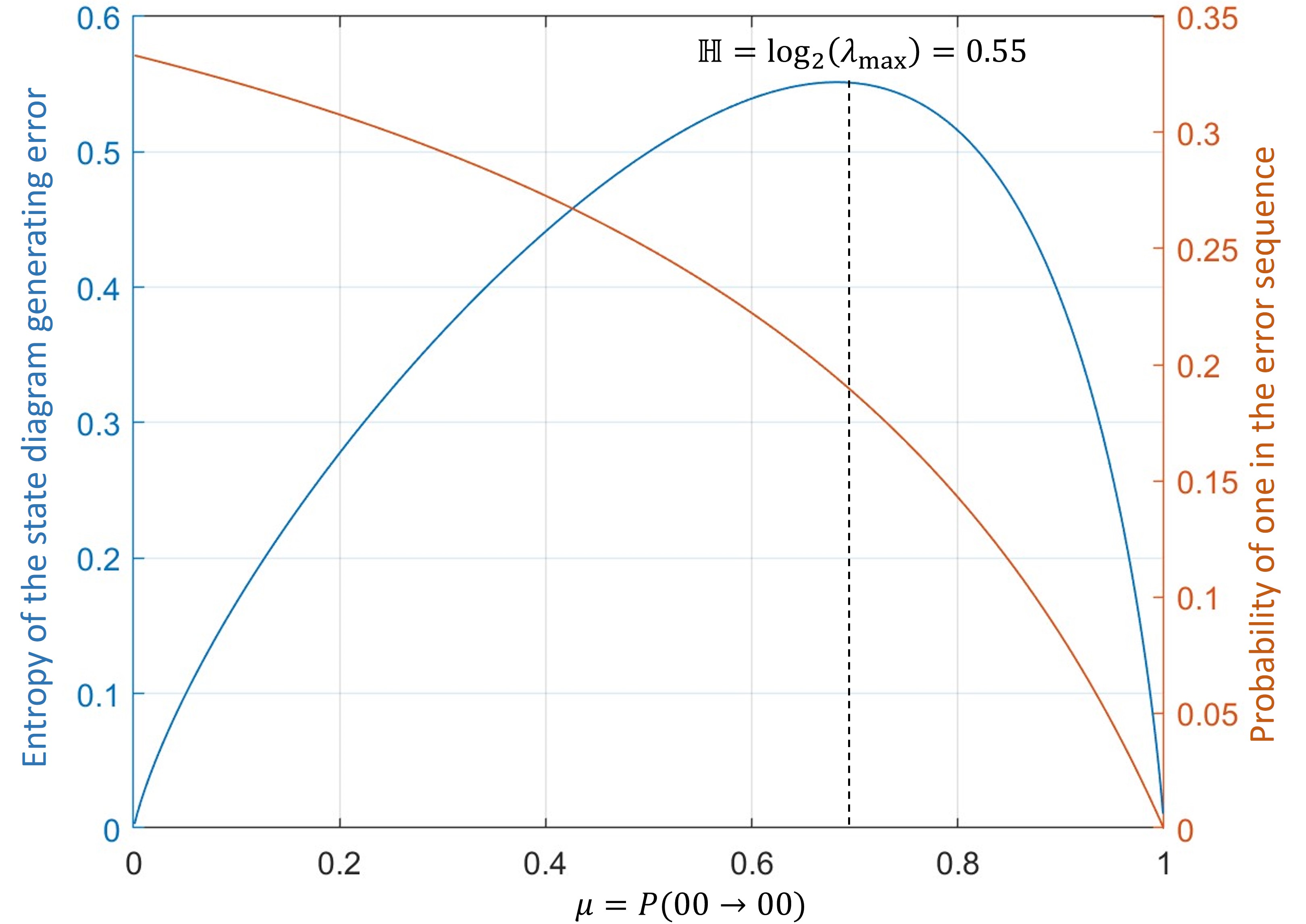}
   \caption{Entropy of the binary sequence (left Y-axis) and probability of one (right Y-axis) generated by the state diagram in Fig.~\ref{FigS}.}
   \label{FigPEE}
 \end{figure}

\subsubsection{Entropy of Error Sequence} \label{SECAK1}
In Fig.~\ref{FigS}, parameter $\mu$ determines  $\pi_1$, $\pi_2$ and $\pi_3$, i.e., probabilities of being in states 00, 01 and 10, respectively. 
Following standard methods of information theory~\cite{cover}, we have
\begin{eqnarray}
\pi_1 & = & \pi_1\mu+\pi_3 \\
\pi_2& = &\pi_1(1-\mu) \\
\pi_3& = &\pi_2 \\
\pi_1+\pi_2+\pi_3 & = & 1
\end{eqnarray}
\begin{eqnarray} \label{Ent1AK}
\pi_1 & = & \frac{1}{3-2\mu} \\  \label{Ent2AK}
\pi_2& = &\frac{1-\mu}{3-2\mu} \\  \label{Ent3AK}
\pi_3& = &\frac{1-\mu}{3-2\mu}.
\end{eqnarray}
Noting  that 
\begin{equation}
\mathrm{H} (\pi_2)=\mathrm{H} (\pi_3)=0.
\end{equation}
Entropy of the state diagram is equal to:  
\begin{equation}
\mathbb{H}_{s}(\mu)= \displaystyle  
\sum_{i=1}^{3} \pi_i 
\mathrm{H} (\pi_i)  = \pi_1 \mathcal{H}(\mu)=\frac{\mathcal{H}(\mu)}{3-2\mu}
\label{Ent3}
\end{equation}
where  $\mathcal{H} (\mu)$ is the binary entropy function defined as 
\begin{equation}
\mathcal{H} (\mu) =-\mu \log_2(\mu)-(1-\mu) \log_2(1-\mu).
\label{AK2}
\end{equation}
Expression \ref{AK2} is the information bit per binary symbol generated by the state diagram.  Result of the entropy vs. $\mu$ is plotted in Fig.~\ref{FigPEE} (left Y-axis).
Using standard techniques,  the probability of 1 in the error sequence, as a function of $\mu$, is computed as
\begin{equation}
\pi_1(1-\mu)=\frac{1-\mu}{3-2\mu},
\label{AK2AK}
\end{equation}
which is also plotted in in Fig.~\ref{FigPEE} (right  Y-axis).

The adjacency matrix of the state diagram in \ref{FigS} is
\begin{equation}
\left[
    \begin{array}{ccc} 
    1 & 1 & 0 \\
    0 & 0 & 1 \\
1 & 0 & 0 \\ 
\end{array} 
\right].
\label{mat-eig}
\end{equation}
Using standard methods of information theory, the corresponding maximum entropy per bit is computed as~\cite{cover}
\begin{equation}
\log_2(\lambda_{\max})=0.55
\label{maxE}
\end{equation}
where $\lambda_{\max}$ is the maximum eigenvalue of the matrix in \ref{maxE} (see the peak in the entropy curve  in Fig.~\ref{FigPEE}).


McEliece, in his groundbreaking work~\cite{R0}, introduces the information set decoding attack which relies on finding a set of equations, formed by selecting a subset of rows from the public generator matrix with an error-free right hand side, and solving them. Let us define $\mathsf{t}$ as the ``number of errors (ones)'' in $\mathbf{e}$, and accordingly in $\mathbf{f}$  (see Fig.~\ref{FNError} showing that $\mathbf{e}$ and $\mathbf{f}$ include the same number of ones in their first $\mathsf{n}$ components). Inclusion  of $\mathsf{m}$ zeros at the end of the error vector $\mathbf{f}$ provides attacker with $\mathsf{m}$ error free equations in an information set decoding attack. This factor is accounted for in computations. Attacker first performs an exhaustive search over location of removed columns, and then in each case, conducts a second exhaustive search to find the set of equations to be used in information set decoding.  To conduct the second phase in exhaustive search, the attacker requires finding 
$\mathsf{d}-\mathsf{m}$ error free equations (or $\hat{\mathsf{d}}-\mathsf{m}$ considering the effect of removed columns) among 
$\mathsf{n}$ equations. This point is included in deriving the security levels throughout this work. 

Let us rely on notations prior to discarding of columns, e.g., $\mathsf{d}$ instead of $\hat{\mathsf{d}}$. To compute the probability of locating $\mathsf{d-m}$ error free equations among a total of $\mathsf{n}$, where $\mathsf{t}$ errors are included in the sequence of $\mathsf{n}$ bits, one can use the product of conditional probabilities, i.e.,
\begin{equation}
\left(\frac{\mathsf{n}-\mathsf{t}}{\mathsf{n}}\right)\!\!
\left(\frac{\mathsf{n}-\mathsf{t}-1}{\mathsf{n}-1}\right)\!..\!
\left(\frac{\mathsf{n}-\mathsf{t}-\mathsf{d}+\mathsf{m}+1}{\mathsf{n}-\mathsf{d}+\mathsf{m}+1}\right)\!.
\label{RR2Ne}
\end{equation}
We refer to $-\log_2$ of quantity in \ref{RR2Ne} as ``entropy of information set decoding''. Since Bob discards $\mathsf{p}$ randomly selected columns, in all our numerical computations, entropy of information set decoding, $\mathsf{d}$, is replaced by $\mathsf{d}-\mathsf{p}$, and 
$\mathsf{E}$ defined in~\ref{additive} is added to the result (also see expression \ref{AE4AK00abcde} and its relevant explanations). 
  
After~\cite{R0}, many research works on McEliece cryptosystem have focused on finding techniques that could speedup the information set decoding attack. This has resulted in a plethora of excellent research contributions, resulting in various levels of reduction in ``security level" vs.  ``attack complexity". Our comparisons are based on a select subset of such results published in~\cite{Main-ref}. Results corresponding to 
Figs.~\ref{Cross260},\ref{Cross311},\ref{Cross512},\ref{Cross1000} are shown in rows 1, 2, 3, 4 of the Table~\ref{TabN1AK}, respectively.

    \begin{figure}[h]
   \centering
\hspace*{-0.5cm}
   \includegraphics[width=0.485\textwidth]{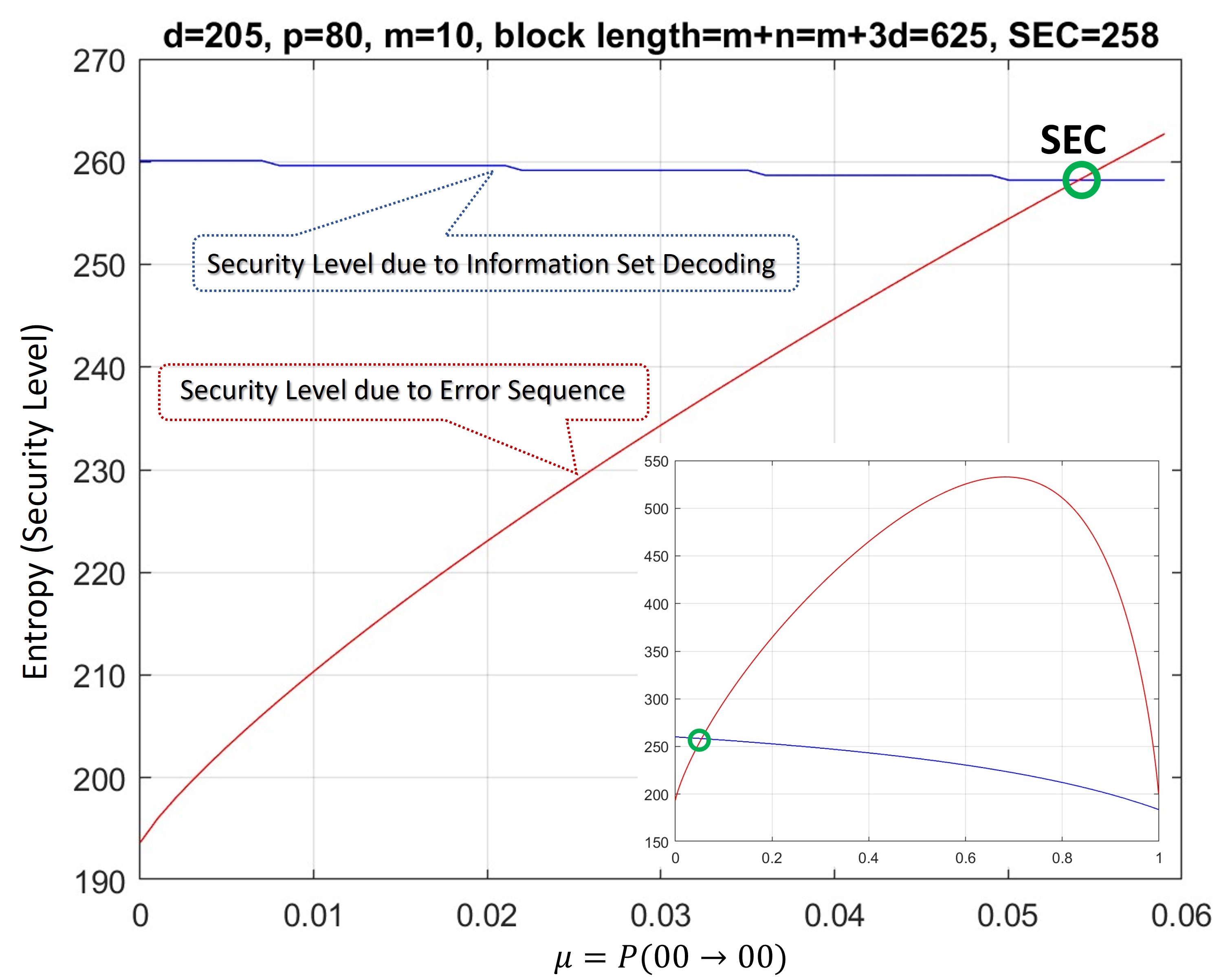}
   \caption{Example for security level due to the proposed method: 
$\mathsf{d}=205$, $\mathsf{p}=80$, $\mathsf{m}=10$, block length=625. Security level is 258 bits (circled point). Size of public key is $\mathsf{(n+m)d}=0.128$\,Mbits.}
\label{Cross260} 
 \end{figure}
    \begin{figure}[h]
   \centering
\hspace*{-0.45cm}
   \includegraphics[width=0.485\textwidth]{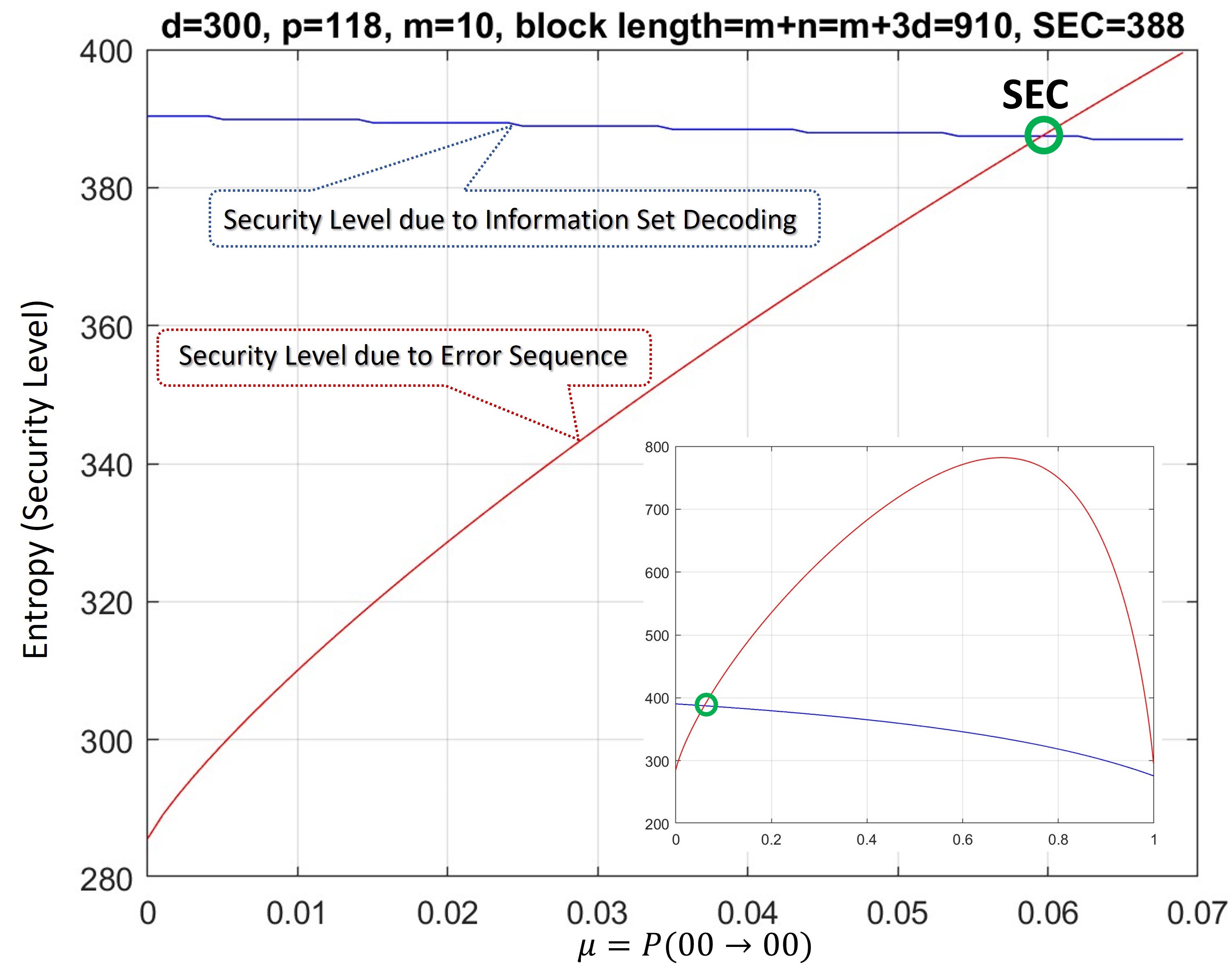}
   \caption{Example for security level due to the proposed method: 
$\mathsf{d}=300$, $\mathsf{p}=118$, $\mathsf{m}=10$, block length=910. Security level is 388 bits (circled point). Size of public key is $\mathsf{(n+m)d}=0.273$\,Mbits.}
\label{Cross311} 
 \end{figure}
    \begin{figure}[h]
   \centering
\hspace*{-0.5cm}
   \includegraphics[width=0.485\textwidth]{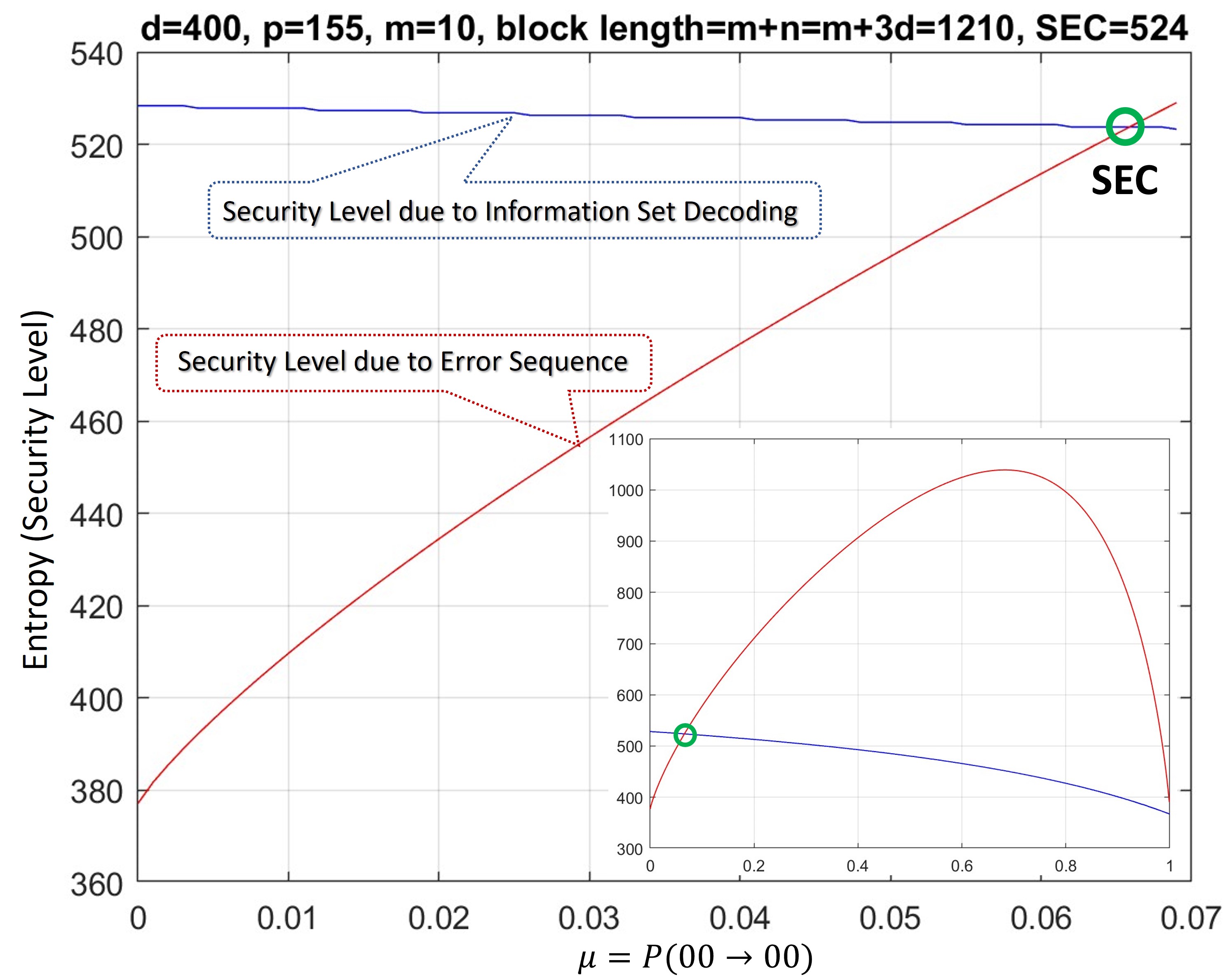}
   \caption{Example for security level due to the proposed method: 
$\mathsf{d}=400$, $\mathsf{p}=155$, $\mathsf{m}=10$, block length=1210.  Security level is 524 bits (circled point). Size of public key is $\mathsf{(n+m)d}=0.484$\,Mbits.}
\label{Cross512} 
 \end{figure}
    \begin{figure}[h]
   \centering
\hspace*{-0.45cm}
   \includegraphics[width=0.485\textwidth]{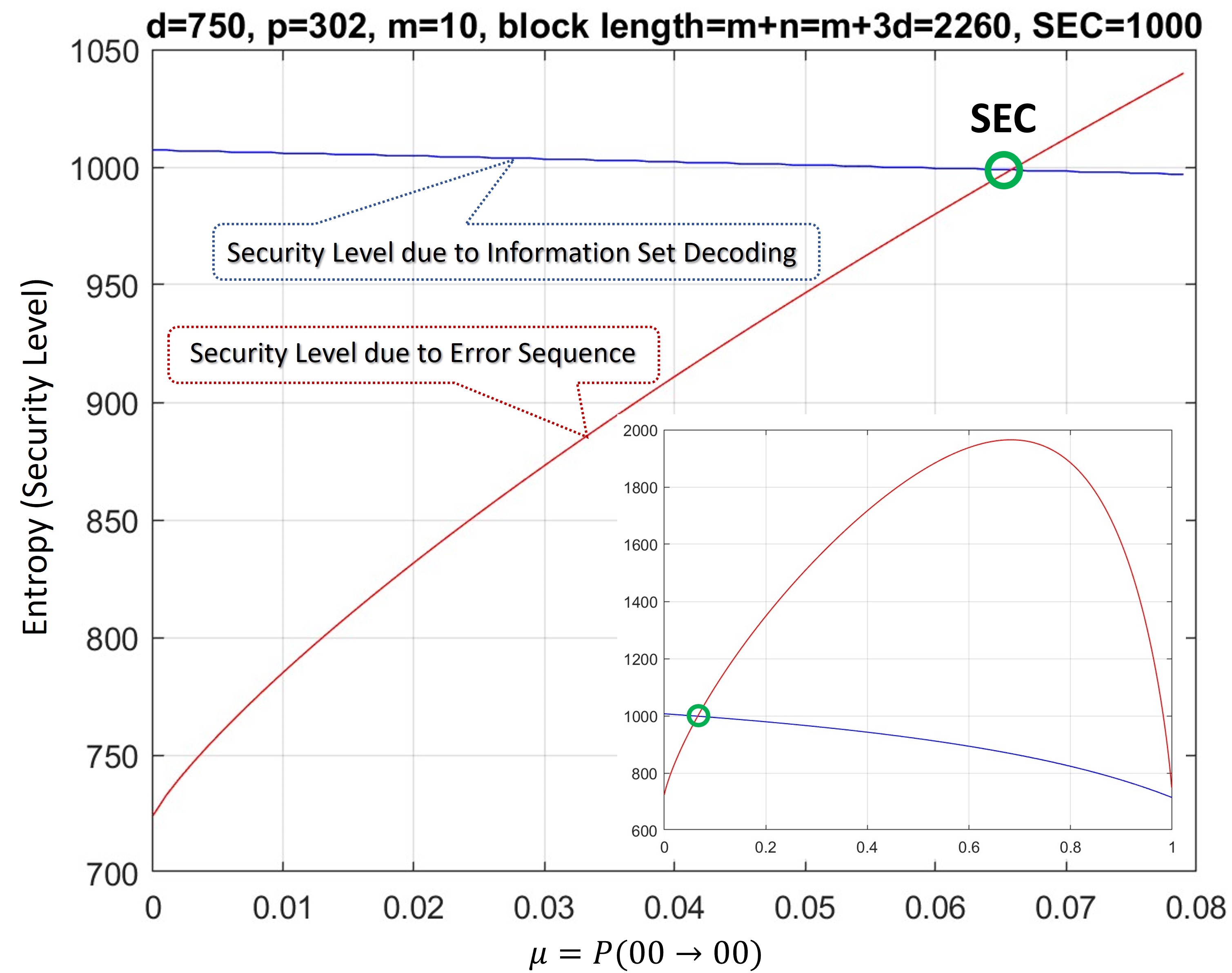}
   \caption{Example for security level due to the proposed method: 
$\mathsf{d}=750$, $\mathsf{p}=302$, $\mathsf{m}=10$, block length=2260.  Security level is 1000 bits (circled point). Size of public key is $\mathsf{(n+m)d}=1.695$\,Mbits.}
\label{Cross1000} 
 \end{figure}

\subsubsection{Verification Procedure in the Proposed Method}   
 
To verify if a  given iteration of information set decoding attack has resulted in the valid key, Eve should: (i) correctly guess which $\mathsf{p}$ columns are discarded,  (ii) find $\mathsf{d}-\mathsf{m}-\mathsf{p}$ equations (with an error-free right hand side) among  $\mathsf{n}$ equations, solve them, and then (iii) check the validity of the outcome by verifying that the corresponding syndrome is zero (verification).  Verification uses the parity check matrix obtained through steps (i), (ii); and then verifies if the candidate key results in a zero syndrome.  Making step 3 more complex is another contribution of this work, explained next. 

In the application of information set decoding attack to McEliece cryptosystem, since the number of errors  is known, one could limit the checking of the syndrome to those cases that the number of detected errors matches the known value. This two-steps verification would save the relatively computational intensive operation of checking the syndrome in each attempt. In most cases, by counting the number of detected errors, attacker can even avoid the need for checking of the syndrome.  In the proposed method, neither the number of inserted errors is known, nor one could compute the syndrome without knowing the positions of discarded columns.

{\bf Remark 1}: In this work, it is assumed that the validity of a potential key cannot be verified by decryption of a message. In other words, it is assumed the encrypted messages appear as a sequence of random bits. To hide any dependencies among bits, one could use a hash function to randomize all bits, append the structure of the hash function itself as a preamble in the resulting bit sequence, and then use the established key to encrypt the combination. The hash function can be changed for each encrypted file, and it can be simply a permutation of bits over sub-blocks of a relatively long length, where the seed used in generating the permutation is included in the permuted sequence of bits prior to encryption.  One can also rely on a nested encryption, where message is encrypted using a random key, then the key is included in the message preamble, and the outcome is encrypted once again using a second key which is established using techniques proposed here.  Randomizing message bits for each message  will also act as a mechanism to counter chosen cipher-text attack~\cite{RN80cip}. 
$\blacksquare$

Let us rely on using the notation $\hat{.}$ to specify discarded columns. 
In the proposed method, the number of added errors is not fixed. However, a valid error sequence would satisfy the constraints imposed by the state diagram in Fig.~\ref{FigS}. Then, verification could potentially start by examining if the error event is a valid sequence corresponding to a valid path in Fig.~\ref{FigS}. Let us consider an iteration of the information set decoding attack, and assume there is an error in solving the selected subset of equations. Let us assume there is at least a single bit in  the estimate of 
$\hat{\mathbf{d}}$ in error. Upon multiplication by 
$\hat{\mathbf{G}}$ in Fig.~\ref{FNSet2}, this results in selecting a column of 
$\hat{\mathbf{G}}_1$, and thereby a column of $\hat{\mathbf{G}}_2$, which in turn selects a linear combination of columns of matrix $\hat{\mathbf{D}}$. 
Note that $\hat{\mathbf{D}}$ and $\hat{\mathbf{G}}_2$ are composed of Independent and Identically Distributed (i.i.d.) bits with a probability 1/2 for zero and one, called a maximum entropy sequence hereafter. According to Theorem~\ref{Th2} in Appendix \ref{Theo1}, when such a vector is added to a vector due to other bits in $\hat{\mathbf{d}}$, it results in a maximum entropy vector. Attacker would examine the resulting vector to see if it could have been generated by the state diagram in Fig.~\ref{FigS}. In some cases, the result of this first step verification is an immediate rejection, since two ones follow each other indicating that the error vector is not generated by the state diagram. In cases that the error vector satisfies the state diagram, expression~\ref{RR3AK22p} shows that the probability of such a vector to be the actual error vector is negligible. This means relying on the result of the first step verification only marginally reduces the complexity in the overall verification. 

Consider the event $\mathcal{E}$ as the event that the detected error vector of length $\mathsf{n}$ satisfies the conditions imposed by the state diagram in Fig.~\ref{FigS}, and a second event that the detected vector is the actual error vector. If the entropy of the state diagram is equal to $\mathbb{H}_{s}(\mu)$, let us define $\mathcal{W}\in[0,1]$ as the indicator of the event that, conditioned on 
$\mathcal{E}$, the detected error sequence is the actual error. 
For large $\mathsf{n}$,  valid error vectors fall within a typical set composed of 
equiprobable sequences with a total probability approaching one \cite{Gallger}. 
The total number of sequences forming the typical set is equal to $2^{\mathsf{n}\mathbb{H}_{s}(\mu)}$ \cite{Gallger}. In other words, 
relying on the concept of typicality in Information Theory, for  a sufficiently large\footnote{The asymptotic result requires $\mathsf{n}\rightarrow\infty$, but values of $\mathsf{n}$ used here are large enough for the conclusions to be approximately valid.}  $\mathsf{n}$, any sequence generated by the state diagram, with a probability close to one, belongs to a typical set composed of $2^{\mathsf{n}\mathbb{H}_{s}(\mu)}$ equiprobable elements~\cite{Gallger}. As a result,   
\begin{equation}
P(\mathcal{W}=1|\mathcal{E}) \approx 2^{-\mathsf{n}\mathbb{H}_{s}(\mu)}.
\label{RR3AK22p}
\end{equation}
Note that in Figs.~\ref{Cross260},\ref{Cross311},\ref{Cross512},\ref{Cross1000} values of $\mathsf{n}$ are quite large,  and consequently, one can rely on expression \ref{RR3AK22p}. In Figs.~\ref{Cross260},\ref{Cross311},\ref{Cross512},\ref{Cross1000} the circled point corresponds to $\mu\approx 0.055$ and from Fig.~\ref{FigPEE} we have $\mathbb{H}_{s}(0.055)\approx 0.1$ bits. 
 Replacing in~\ref{RR3AK22p}, one can conclude that, if a sequence satisfies the restrictions imposed by the state diagram, it will be the actual error vector with a negligible probability. This negligible probability severely limits the attacker in simplifying the  information set decoding. As mentioned earlier, this is unlike McEliece cryptosystem in which the number of introduced errors (weight of error vector) is limited to the error correction capability of the code. This information can be exploited by an attacker to discard any outcome of the information set decoding which does not satisfy the known error weight, simplifying the attack.  

\section{Efficient Use of Random Matrices in Hiding Information} 

Referring to Appendix~\ref{mac}, in McEliece cryptosystem, the use of matrices $\mathbi{P}$ and $\mathbi{A}$ results in a generator matrix  
$\mathbi{PGA}$ which, unlike $\mathbi{G}$, includes randomness and thereby 
will be difficult to decode. However,  $\mathbi{PGA}$ generates the same set of code-words as $\mathbi{G}$, with some rearrangement of coordinates and reordering in the assignment of code-words to encrypted messages/keys. This feature causes two shortcomings: (i) Since the structure of the code, expect for some linear transformations, has remained the same, new decoding methods can be discovered/enhanced over time, which can decode the code $\mathbi{PGA}$ (see \cite{R4} to \cite{R11} as examples for decoding methods of a general linear code). Examples include methods based on iterative decoding, trellis representation with reduced complexity, and decomposition into cycle free structures, etc.  (ii) Information set decoding is simplified since attacker can reject candidates by simply counting the number of discovered errors. 

The method followed in this work enables hiding the generator matrix through addition of random masks. This means the modified (hidden) generator matrix does not generate the same set of code-words. This is explained next. 

\subsection{Masking through Multiplication of Matrices}
This work relies on appending random rows and random columns to matrices that are multiplied. This generates a binary matrix that will be (bit-wise) added to a matrix that should be kept secret. For example, referring to Fig.~\ref{FNSet2}, in the generation of public key, i.e.,  $\mathbf{BG}$, matrix $\mathbf{M=D}\mathbf{G}_2$ has masked the part that includes the original generator matrix, i.e.,   $\mathbf{C}\mathbf{G}_1$. 

\subsection{Maximizing the Entropy of the Product of Matrices} 

Using generic notations, Fig.~\ref{AFF1} shows the general form for a matrix multiplication, where  columns forming $\mathbf{Z}$ and  rows forming 
$\mathbf{U}$ are multiplied and the result, i.e., $\mathbf{Z}\mathbf{U}$, is added to $\mathbf{ST}$, thereby hiding its content. 
This section discusses how $\mathbf{Z}$ and $\mathbf{U}$ should be selected to maximize the entropy of $\mathbf{Z}\mathbf{U}$ (mask) and then the corresponding entropy is computed. 

\noindent
\underline{\bf Note}: In the following, in dealing with $\mathbf{Z}$, the term ``full rank'' means column-wise, i.e., columns of $\mathbf{Z}$ form a basis for $\mathbf{Z}$, and in terms of $\mathbf{U}$, it means row-wise, i.e., rows of $\mathbf{U}$ form a basis for $\mathbf{U}^t$, the transpose of $\mathbf{U}$. 
  \begin{figure}[h]
   \centering
\hspace*{-0.4cm}
   \includegraphics[width=0.5\textwidth]{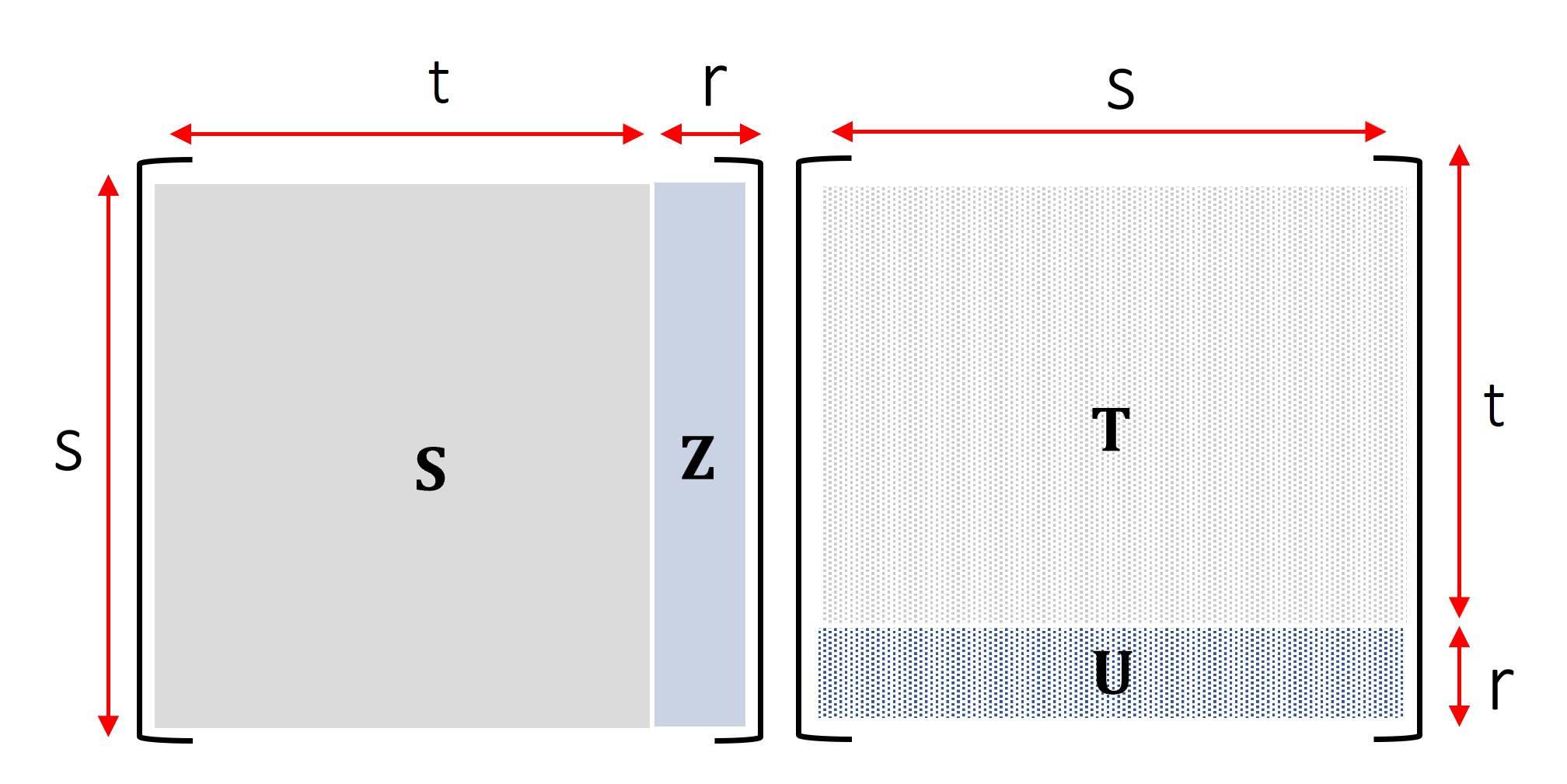}
   \caption{Formation of masking towards deriving the conditions for achieving  maximum entropy. Note that here, the result of the multiplication is 
$\mathbf{ST}+\mathbf{ZU}$, where $\mathbf{ST}$ is a matrix that should be kept secret, e.g., the generator matrix used in key encapsulation. This means $\mathbf{ZU}$ has masked the product $\mathbf{ST}$.}
   \label{AFF1}
 \end{figure}
Let us represent rows of $\mathbf{Z}$ by $\mathbf{r}_i, i=1,\ldots,
=\mathsf{s=r+t}$ and columns of $\mathbf{U}$ by $\mathsf{c}_i, i=1,\ldots,\mathsf{s=r+t}$. We have,
\begin{equation}
\mathbf{Z}\mathbf{U}=\sum_{i=1}^{\mathsf{s=r+t}}\mathbf{c}_i\circledcirc \mathbf{r}_i
\label{Eq2}
\end{equation}
where $\circledcirc$ specifies the multiplication of a column vector by a row vector resulting in an $\mathsf{s}\times \mathsf{s}$ matrix which takes different random realizations. 

Hereafter, a matrix of the form $\mathbf{Z}\mathbf{U}$ is referred to as a masking matrix, and its components of the form $\mathsf{c}_i\circledcirc \mathsf{r}_i$ in \ref{Eq2} are referred to as atomic masks.  A realization for each atomic mask are added to form a realization of $\mathbf{Z}\mathbf{U}$. 
Atomic mask $\mathbf{c}_{\kappa}\circledcirc \mathbf{r}_{\kappa}$
 is obtained by repeating column of $\mathbf{Z}$ indexed by ${\kappa}$ at positions that contain a one in the row of $\mathbf{U}$ indexed by 
${\kappa}$. Obviously, any change in $\mathbf{c}_{\kappa}$ and/or in 
$\mathbf{r}_{\kappa}$ results in a new realization for $\mathbf{c}_{\kappa}\circledcirc \mathbf{r}_{\kappa}$. In the formation of the atomic mask $\mathbf{c}_{\kappa}\circledcirc \mathbf{r}_{\kappa}$, there are $2^\mathsf{s}$ possibilities for $\mathbf{c}_{\kappa}$ and $2^\mathsf{s}$ possibilities for $\mathbf{r}_{\kappa}$, resulting in $2^{2\mathsf{s}}$ single elements in $\mathbf{c}_{\kappa}\circledcirc \mathbf{r}_{\kappa}$, while some of these single elements are equal to each other. For example, if $\mathbf{c}_{\kappa}$ (or  
$\mathbf{r}_{\kappa}$) is zero, it will result in an all-zero atomic mask regardless of the value taken by the other component, i.e., $\mathbf{r}_{\kappa}$  (or $\mathbf{c}_{\kappa}$), respectively. Such repetitions are avoided if the all zero vector is not included in realizations used for generating atomic masks. A second possibility for repetition occurs if, in some realizations of atomic masks, sum of multiple atomic masks add up to zero. 
Theorem~\ref{fullrank} show (with exceptions explained in Remark 4) both these requirements are satisfied iff realizations of  $\mathbf{Z}$ and realizations of $\mathbf{U}$ used in generating a mask are full-rank.

\begin{theorem} \label{fullrank}
 Condition of maximum entropy is satisfied  iff $\mathbf{Z}$ and $\mathbf{U}$ are selected with uniform probabilities over all elements in their respective sets of full-rank realizations (with exceptions explained in Remark 4).    
\end{theorem}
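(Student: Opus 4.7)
The plan is to prove both directions of the iff by reducing the entropy question to a counting argument on the image of the multiplication map $(\mathbf{Z},\mathbf{U}) \mapsto \mathbf{Z}\mathbf{U}$, using the atomic-mask decomposition in \ref{Eq2}. First I would set up notation: treat $\mathbf{Z}\mathbf{U}$ as a random matrix taking values in $\mathbb{F}_2^{\mathsf{s}\times\mathsf{s}}$, with entropy bounded above by $\log_2$ of the size of its support. The key linear-algebraic lever is Sylvester's inequality: when $\mathbf{Z}$ has $\mathsf{s}$ independent columns and $\mathbf{U}$ has $\mathsf{s}$ independent rows, the product $\mathbf{Z}\mathbf{U}$ has rank exactly $\mathsf{s}$; the image of the multiplication map, restricted to full-rank pairs, is therefore the variety of rank-$\mathsf{s}$ matrices.

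For the sufficiency direction I would first use the atomic-mask identity \ref{Eq2} to note that full-rank $\mathbf{Z}$ and $\mathbf{U}$ simultaneously rule out the two collision mechanisms flagged earlier in the paper: no column $\mathbf{c}_\kappa$ or row $\mathbf{r}_\kappa$ is zero (so no atomic mask vanishes), and no nontrivial combination $\sum_i \alpha_i \mathbf{c}_i \circledcirc \mathbf{r}_i$ collapses to zero. Then I would count the fiber over a fixed rank-$\mathsf{s}$ product $P$: any two full-rank factorizations $(\mathbf{Z},\mathbf{U})$ and $(\mathbf{Z}',\mathbf{U}')$ of $P$ are related by $\mathbf{Z}' = \mathbf{Z} M^{-1}$, $\mathbf{U}' = M\mathbf{U}$ for some $M \in \mathrm{GL}_\mathsf{s}(\mathbb{F}_2)$, so each rank-$\mathsf{s}$ matrix in the image has exactly $|\mathrm{GL}_\mathsf{s}(\mathbb{F}_2)|$ preimages in the uniform product measure. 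The pushforward is therefore uniform on the rank-$\mathsf{s}$ image, attaining the maximum of $\log_2$ of its cardinality.

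For the necessity direction I would argue the contrapositive. If $\mathbf{Z}$ puts positive mass on rank-deficient realizations, there exists a nontrivial relation $\sum_i \alpha_i \mathbf{c}_i = 0$; substituting into \ref{Eq2} shows that distinct $\mathbf{U}$ realizations produce coincident products, collapsing the induced distribution onto fewer distinct matrices and strictly reducing entropy (a symmetric argument handles rank-deficient $\mathbf{U}$). For the residual case of both factors full-rank but with a non-uniform distribution, strict concavity of Shannon entropy together with the uniform fiber size established above gives a non-uniform pushforward, again below the maximum.

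The main obstacle is handling the factorization ambiguity cleanly so that the pushforward claim in the sufficiency step is actually uniform; this is where the fiber-counting by $\mathrm{GL}_\mathsf{s}(\mathbb{F}_2)$ has to be invoked carefully, and where the exceptions referenced in Remark~4 must enter. I expect those exceptions correspond to degenerate dimensional regimes in which either $\mathsf{r}<\mathsf{s}$ or $\mathsf{t}<\mathsf{s}$, so that no realization of $\mathbf{Z}$ (resp.\ $\mathbf{U}$) can be full-rank in the stated sense and the fiber structure is no longer $\mathrm{GL}_\mathsf{s}(\mathbb{F}_2)$-homogeneous; in that regime one would redefine ``full rank'' relative to the smaller of the two dimensions and repeat the counting argument with $\mathrm{GL}_{\min(\mathsf{r},\mathsf{t})}(\mathbb{F}_2)$ in place of $\mathrm{GL}_\mathsf{s}(\mathbb{F}_2)$.
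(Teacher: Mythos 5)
Your overall strategy is the same as the paper's: for necessity, a linear dependence among the columns of $\mathbf{Z}$ (or rows of $\mathbf{U}$) lets distinct realizations of the other factor collide onto the same mask, collapsing the support and reducing entropy (the paper does exactly this with $\hat{\mathbf{c}}_3=\hat{\mathbf{c}}_1+\hat{\mathbf{c}}_2$ and two $\mathbf{U}$-realizations differing by a common row shift $\underline{\mathbf{r}}$); for sufficiency, the fiber of the multiplication map over a fixed product is a torsor under an invertible change of basis, so constant fiber size makes the pushforward of the uniform pair distribution uniform on the image. That core is sound and matches the paper's Appendix C.3. However, two concrete corrections are needed. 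First, your dimensions are inverted: in the paper $\mathbf{Z}$ is $\mathsf{s}\times\mathsf{r}$ and $\mathbf{U}$ is $\mathsf{r}\times\mathsf{s}$ with $\mathsf{s}=\mathsf{r}+\mathsf{t}>\mathsf{r}$, so $\mathbf{Z}$ cannot have $\mathsf{s}$ independent columns; ``full rank'' means rank $\mathsf{r}$, the product $\mathbf{Z}\mathbf{U}$ is an $\mathsf{s}\times\mathsf{s}$ matrix of rank exactly $\mathsf{r}$, and the fiber group is $\mathrm{GL}_{\mathsf{r}}(\mathbb{F}_2)$, not $\mathrm{GL}_{\mathsf{s}}(\mathbb{F}_2)$. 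The regime $\mathsf{r}<\mathsf{s}$ that you set aside as ``degenerate'' is the only regime the paper considers (in practice $\mathsf{s}\gg\mathsf{r}$), so the fallback you sketch at the end is in fact the main argument, not an edge case.

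Second, your guess about the exceptions in Remark 4 is wrong: they are not a dimensional degeneracy but precisely the collision $(\mathbf{Z}\mathbf{P},\mathbf{P}^{-1}\mathbf{U})\mapsto\mathbf{Z}\mathbf{U}$ for a permutation matrix $\mathbf{P}$, i.e.\ reordering the atomic masks $\mathbf{c}_i\circledcirc\mathbf{r}_i$ in \ref{Eq2}. This is a subgroup of order $\mathsf{r}!$ inside the very fiber group your argument identifies, so your analysis already subsumes it --- indeed your (correct) observation that the full fiber is $\lvert\mathrm{GL}_{\mathsf{r}}(\mathbb{F}_2)\rvert$ is in tension with the paper's count $\hbar^2/\mathsf{r}!$ of distinct masks in \ref{Eq15}, which mods out only by the permutation subgroup. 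For Theorem \ref{fullrank} itself this discrepancy is harmless, since uniformity of the pushforward needs only that the fiber size be constant, and both accountings give that; but you should state explicitly which quotient you are taking. With the substitution $\mathsf{s}\to\mathsf{r}$ carried through, and the concavity argument you add for the residual non-uniform full-rank case (which the paper leaves implicit), your proof goes through.
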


\begin{proof}
See Appendix~\ref{ProFR}.
\end{proof}

\begin{theorem} \label{fullrank2}
Subject to conditions of Theorem \ref{fullrank}, let us consider a set of columns and a set of rows in the generated mask such that their intersections form a column-wise as well as a row-wise basis, i.e., span an $\mathsf{r}\times \mathsf{r}$ binary space. Then, the rectangular  matrix formed at the intersections  is equally likely to be any invertible binary matrix of size $\mathsf{r}\times \mathsf{r}$. 
\end{theorem}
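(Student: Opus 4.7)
My plan is to first reduce the theorem to a clean statement about uniform distributions on $GL_{\mathsf{s}}(\mathbb{F}_2)$, the group of invertible $\mathsf{s}\times\mathsf{s}$ binary matrices. Theorem \ref{fullrank} tells me that, under the entropy-maximizing hypothesis, $\mathbf{Z}$ and $\mathbf{U}$ are independent and each uniform on $GL_{\mathsf{s}}(\mathbb{F}_2)$. For any fixed invertible $U_{0}$, the map $\mathbf{Z}\mapsto\mathbf{Z}U_{0}$ is a bijection of $GL_{\mathsf{s}}(\mathbb{F}_2)$ onto itself, so $\mathbf{Z}U_{0}$ remains uniform; integrating out $\mathbf{U}$, the mask $\mathbf{M}=\mathbf{Z}\mathbf{U}$ is itself uniformly distributed on $GL_{\mathsf{s}}(\mathbb{F}_2)$. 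The theorem then reduces to the following statement: for any fixed row set $I$ and column set $J$ of size $\mathsf{r}$, the $\mathsf{r}\times\mathsf{r}$ submatrix $\mathbf{M}_{I,J}$, conditioned on being invertible (which is precisely the ``forms a row-wise as well as column-wise basis'' hypothesis, since for an $\mathsf{r}\times\mathsf{r}$ matrix over $\mathbb{F}_2$ either notion of full rank equals invertibility), is uniform on $GL_{\mathsf{r}}(\mathbb{F}_2)$.

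The key step is a block-extension symmetry. For each $P\in GL_{\mathsf{r}}(\mathbb{F}_2)$, I would construct an $\mathsf{s}\times\mathsf{s}$ matrix $\tilde{P}\in GL_{\mathsf{s}}(\mathbb{F}_2)$ that acts as $P$ on the coordinates indexed by $I$ and as the identity on the complementary coordinates. Left multiplication by $\tilde{P}$ is a bijection of $GL_{\mathsf{s}}(\mathbb{F}_2)$ onto itself, so the law of $\mathbf{M}$ is unchanged when it is replaced by $\tilde{P}\mathbf{M}$; the induced action on the submatrix is exactly $\mathbf{M}_{I,J}\mapsto P\,\mathbf{M}_{I,J}$. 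Consequently the marginal distribution of $\mathbf{M}_{I,J}$ is invariant under left multiplication by every $P\in GL_{\mathsf{r}}(\mathbb{F}_2)$. Since $P$ is invertible, it also preserves the event $\{\mathbf{M}_{I,J}\in GL_{\mathsf{r}}(\mathbb{F}_2)\}$, so the same $GL_{\mathsf{r}}(\mathbb{F}_2)$-invariance persists for the conditional law. A parallel column-side argument using an embedded $Q\in GL_{\mathsf{r}}(\mathbb{F}_2)$ gives right invariance, but either one alone suffices.

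To finish, I would invoke the standard fact that $GL_{\mathsf{r}}(\mathbb{F}_2)$ acts transitively on itself by left multiplication, so the only probability measure on $GL_{\mathsf{r}}(\mathbb{F}_2)$ invariant under this action is the uniform measure. This proves that $\mathbf{M}_{I,J}$, conditioned on invertibility, is uniformly distributed over all invertible $\mathsf{r}\times\mathsf{r}$ binary matrices. The main hurdle I expect, and it is more a bookkeeping one than a real obstacle, is justifying the first bridge: that Theorem \ref{fullrank}'s ``maximum-entropy product'' is genuinely equivalent to ``$\mathbf{M}$ uniform on $GL_{\mathsf{s}}(\mathbb{F}_2)$''. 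This follows because, when both factors are required to be invertible, the support of $\mathbf{Z}\mathbf{U}$ is exactly $GL_{\mathsf{s}}(\mathbb{F}_2)$, and maximum entropy over a finite support is achieved uniquely by the uniform law on that support.
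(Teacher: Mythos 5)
Your argument rests on a false identification of the objects involved, and this is more than bookkeeping. In the paper's construction $\mathbf{Z}$ is $\mathsf{s}\times\mathsf{r}$ and $\mathbf{U}$ is $\mathsf{r}\times\mathsf{s}$ with $\mathsf{r}<\mathsf{s}$ (the paper's count $\hbar=\prod_{i=0}^{\mathsf{r}-1}(2^{\mathsf{s}}-2^{i})$ is the number of full-column-rank rectangular matrices, not $|GL_{\mathsf{s}}(\mathbb{F}_2)|$). Consequently the mask $\mathbf{M}=\mathbf{Z}\mathbf{U}$ is an $\mathsf{s}\times\mathsf{s}$ matrix of rank exactly $\mathsf{r}$; it is never invertible, the map $\mathbf{Z}\mapsto\mathbf{Z}U_0$ does not even typecheck as a self-map of $GL_{\mathsf{s}}(\mathbb{F}_2)$, and your closing claim that ``the support of $\mathbf{Z}\mathbf{U}$ is exactly $GL_{\mathsf{s}}(\mathbb{F}_2)$'' is simply wrong. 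The correct support is the set $\mathcal{R}_{\mathsf{r}}$ of rank-$\mathsf{r}$ binary $\mathsf{s}\times\mathsf{s}$ matrices, and uniformity of $\mathbf{M}$ on $\mathcal{R}_{\mathsf{r}}$ does not follow from a maximum-entropy-on-support one-liner: you must check that every element of $\mathcal{R}_{\mathsf{r}}$ has the same number of admissible factorizations, namely $|GL_{\mathsf{r}}(\mathbb{F}_2)|$ pairs $(\mathbf{Z}\mathbf{L},\mathbf{L}^{-1}\mathbf{U})$, which is exactly the multiplicity issue the paper wrestles with in Remark 4 and in the sufficiency part of Theorem \ref{fullrank}. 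This is also why the hypothesis of Theorem \ref{fullrank2} is phrased as it is: since $\mathrm{rank}(\mathbf{M})=\mathsf{r}$, an invertible $\mathsf{r}\times\mathsf{r}$ intersection is a genuine condition on the chosen rows and columns (they must pick out bases of the row and column spaces), not a conditioning event inside $GL_{\mathsf{s}}(\mathbb{F}_2)$.

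That said, your invariance skeleton survives the repair and is worth keeping. Once you establish that $\mathbf{M}$ is uniform on $\mathcal{R}_{\mathsf{r}}$, left multiplication by the block matrix $\tilde{P}$ (acting as $P\in GL_{\mathsf{r}}(\mathbb{F}_2)$ on the rows indexed by $I$ and as the identity elsewhere) is a rank-preserving bijection of $\mathcal{R}_{\mathsf{r}}$, sends $\mathbf{M}_{I,J}$ to $P\,\mathbf{M}_{I,J}$, and preserves the invertibility event; transitivity of the left action of $GL_{\mathsf{r}}(\mathbb{F}_2)$ on itself then forces the conditional law to be uniform. This is essentially a cleaner, coordinate-free version of what the paper does: the paper's proof instead permutes the intersection into the upper-left corner via $\mathbf{P}_1\mathbf{M}\mathbf{P}_2$ and scans the factorization ambiguity $\mathbf{P}_1\mathbf{Z}\mathbf{L}\mathbf{U}\mathbf{P}_2$ over all invertible $\mathbf{L}$ of size $\mathsf{r}\times\mathsf{r}$ to realize every invertible corner block equally often. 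Your route, once corrected, avoids tracking the Ground Set bookkeeping, but as submitted the central reduction is false and the proof does not stand.
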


\begin{proof}
See Appendix~\ref{ProFR2}.
\end{proof}

{\bf Remark 2:} 
Marginal distribution of each column of $\mathbf{Z}$ and each row of $\mathbf{U}$  is composed of $2^{\mathsf{s}}-1$ vectors of equal probabilities. However, due to the constraint on matrix rank, the vectors cannot be selected independently. 
$\blacksquare$

{\bf Remark 3:} If the full rank condition explained in Theorem~\ref{fullrank} is not satisfied, we obtain repeated terms resulting in non-equal probabilities and the overall entropy reduces (see Theorem~2.3.1 of \cite{Gallger}).   All-zero masks are avoided (by avoiding all-zero columns in $\mathbf{Z}$ and/or all-zero rows in $\mathbf{U}$), since, if permitted, it would result in all-zero atomic mask formed from each such all-zero vector. This would result in a high probability for the all-zero mask which lacks information hiding property.  $\blacksquare$

\subsection{Number of Distinct Masks}

As indicated in \ref{Eq2} and Fig.~\ref{AFF1}, the products
$\mathbf{Z}\mathbf{U}$ and $\mathbf{S}\mathbf{T}$ are of size 
$\mathsf{s}\times \mathsf{s}$.  
Matrices $\mathbf{Z}$ and 
$\mathbf{U}$ are  composed of $\mathsf{r}$ columns of size $\mathsf{s}$ and $\mathsf{r}$ rows of size $\mathsf{s}$, respectively. Let us focus on matrix $\mathbf{Z}$. 

Consider a full rank, random binary matrix $\mathbf{Z}$ of size $\mathsf{s}\times \mathsf{r}$ with $\mathsf{s}>\mathsf{r}$. For columns of $\mathbf{Z}$ to span $\mathsf{r}$ dimensions, the first column, $\jmath=0$, can take all $2^{\mathsf{s}}$ combinations except the all zero vector, and columns $\imath=1,...,\mathsf{r}-1$ cannot be a linear combination of columns   
$\jmath=0,...,\imath-2$. Putting these arguments together, it is concluded that for a given $\imath=0,...,\mathsf{r}-1$, the number of such linear combinations to be excluded is equal to $2^{\imath}$. As a result, the total number of ways for selecting matrix $\mathbf{Z}$ is equal to:
\begin{equation}
\hbar=\prod_{i=0}^{\mathsf{r}-1}(2^{\mathsf{s}}-2^{i})
=\prod_{i=0}^{\mathsf{r}-1}2^{\mathsf{s}}(1-2^{i-{\mathsf{s}}}).
\label{Eq14AK}
\end{equation}
\begin{equation}
\log_2(\hbar)=\mathsf{r}\mathsf{s}+\sum_{i=0}^{\mathsf{r}-1}
\log_2(1-2^{i-\mathsf{s}})
\label{LogAK1p}
\end{equation}
since $\mathsf{z}\equiv (1-2^{i-\mathsf{s}})>0$, we can use the well known inequality~\cite{Gallger} (expression 2.3.2 in Theorem 2.3.1), 
\begin{equation}
\ln \mathsf{z} \leq (\mathsf{z}-1),~\forall \mathsf{z}>0.
\label{LogAK13}
\end{equation}
Substituting \ref{LogAK1p} in \ref{LogAK13} results in the following upper bound
\begin{eqnarray} \nonumber 
\log_2(\hbar)& \leq & \mathsf{r}\mathsf{s}-\log_2(e)\sum_{i=0}^{\mathsf{r}-1} 2^{i-\mathsf{s}}\\   \label{LogAK3}
& \leq & \mathsf{r}\mathsf{s}+\log_2(e)2^{-\mathsf{s}}\left(1-2^{\mathsf{r}}\right) 
\end{eqnarray}
where we have used: $\sum_{i=0}^{\mathsf{r}-1} 2^{i}=(2^{\mathsf{r}}-1)$.
Defining, $\mathsf{u}=1/\mathsf{z}>0$ results in, 
\begin{equation}
\ln \mathsf{u} \geq \left(1-\frac{1}{\mathsf{u}}\right),
\end{equation}
or equivalently, 
\begin{equation}
\log_2(\mathsf{u}) \geq \log_2(e)\left(1-\frac{1}{\mathsf{u}}\right).
\label{LogAK2pp}
\end{equation} 
Considering $\mathsf{u}=1-2^{i-\mathsf{s}}$, expressions \ref{LogAK1p} and \ref{LogAK2pp} can be combined, resulting in the following lower bound,
\begin{eqnarray} \nonumber 
\log_2(\hbar)& \geq & \mathsf{r}\mathsf{s}+\log_2(e)\sum_{i=0}^{\mathsf{r}-1}
\left(1-\frac{1}{1-2^{i-\mathsf{s}}}\right)\\  \label{LogAK3pppp}
\log_2(\hbar)& \geq & \mathsf{r}\mathsf{s}-\log_2(e)
\sum_{i=0}^{\mathsf{r}-1}
\frac{2^{i-\mathsf{s}}}{1-2^{i-\mathsf{s}}} \\   \label{LogAK3ppp}
\log_2(\hbar)& \geq & \mathsf{r}\mathsf{s}-
\log_2(e)
\left(\frac{\mathsf{r}2^{\mathsf{r}-1-\mathsf{s}}}{1-2^{\mathsf{r}-1-\mathsf{s}}}\right),
\end{eqnarray}
where \ref{LogAK3ppp} is concluded from \ref{LogAK3pppp} noting that 
\begin{equation}
\frac{2^{i-\mathsf{s}}}{1-2^{i-\mathsf{s}}}
\end{equation} 
is an increasing function of $i$. 
The gap between lower and upper bounds is computed by subtracting the right hand side of  \ref{LogAK3} from the right hand side of 
\ref{LogAK3ppp}, resulting in:
\begin{eqnarray}
\log_2(e)\left(2^{-\mathsf{s}}-2^{\mathsf{r}-\mathsf{s}}+
\frac{\mathsf{r}2^{\mathsf{r}-1-\mathsf{s}}}{1-2^{\mathsf{r}-1-\mathsf{s}}}\right). \label{LogAK3z}
\end{eqnarray}
Figure~\ref{FNAKK} shows examples of the values of the gaps, where 
$\mathsf{s}$ and $\mathsf{r}$ are selected to provide estimates of the gap. It is observed that the gap values are negligible. In practice, values of $\mathsf{s}$ (limiting the gap) are significantly larger. 

    \begin{figure}[h]
   \centering
\hspace*{-0cm}
   \includegraphics[width=0.5\textwidth]{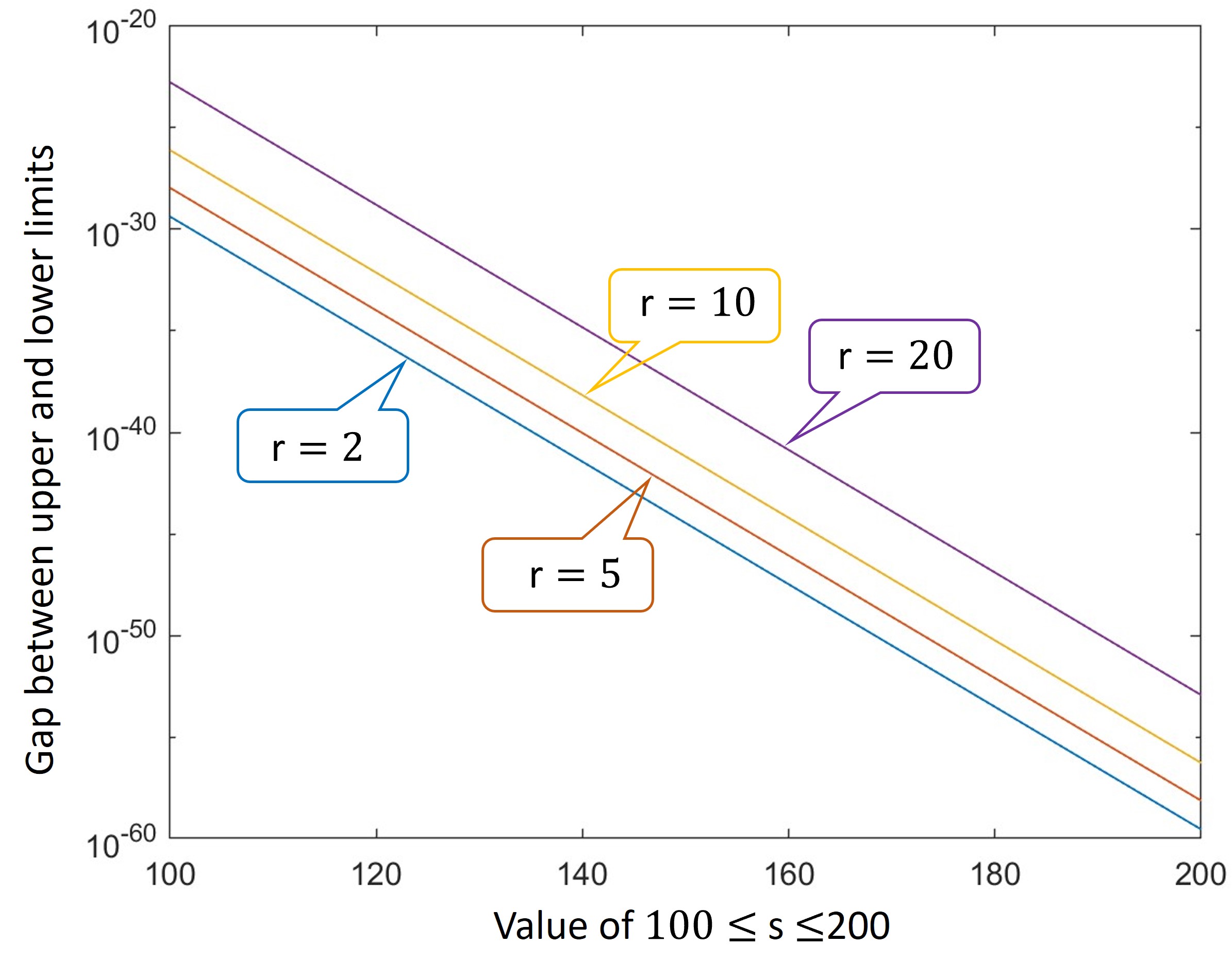}
   \caption{Values of the gap in~\ref{LogAK3z} as a function of $\mathsf{s}$ for different values of $\mathsf{r}$.} 
\label{FNAKK} 
 \end{figure} 

 Since the gap between bounds is negligible (see Fig.~\ref{FNAKK}), in what follows, we simply rely on \ref{LogAK3ppp} to represent an estimate of $\log_2(\hbar)$. We have
\begin{equation} \label{AKQWER}
\mathsf{r}\mathsf{s}-
\log_2(e)
\left(\frac{\mathsf{r}2^{\mathsf{r}-1-\mathsf{s}}}{1-2^{\mathsf{r}-1-\mathsf{s}}}\right)\approx \mathsf{r}\mathsf{s},
\end{equation}
for values of $\mathsf{r}$ and $\mathsf{s}$ used in practice. In numerical results provided in Table~\ref{TabN1AK}, the generic notation $\mathsf{s}$ and $\mathsf{r}$ correspond to $\mathsf{n+m}$ and $\mathsf{m}$, respectively. In practice, $\mathsf{n}$ is selected to be quite larger than  $\mathsf{m}$. This means, in \ref{AKQWER}, the quantity 
$2^{\mathsf{r}-1-\mathsf{s}}\approx 0$, concluding the final approximation, i.e., $\approx \mathsf{r}\mathsf{s}$ in  \ref{AKQWER}.  
Similar arguments can be applied to compute  the number of matrices 
$\mathbf{U}$. 

{\bf Remark 4:} Even if the condition of relying on full rank realizations is followed, there will be repeated masking matrices if  realizations of $\mathbf{c}_i\circledcirc \mathbf{r}_i$ in \ref{Eq2} occur in a different order. This means, if 
columns in a realization of $\mathbf{Z}$ and rows in a realization of
$\mathbf{U}$ are permuted (using the same permutation for both), the expression in \ref{Eq2} results in the same final mask. In this case, for each masking matrix there will be multiple identical copies that simplify the task of exhaustive search by limiting the verification to only one of the identical copies. 
This factor is accounted for by using ${\mathsf{r}}!$ in expression \ref{Eq15}.  
We refer to the subset  of masking matrices composed of a single element as a {\em Ground Set}. There will be ${\mathsf{r}}!$ identical repetitions of the Ground Set. This means there is a partition of the larger set into ${\mathsf{r}}!$ equivalent subsets (any of the repeated subsets can be considered as the  Ground Set). Moving forward, the article  focuses on using a single Ground Set. The conditions for maximizing entropy will not be violated, since all elements in the collection of any given equivalent subsets occur with the same probability. 
$\blacksquare$

Accounting for the repeated elements due to permutation, the total  number of distinct masking matrices in a Ground Set due to $\mathbf{ZU}$ will be
\begin{equation}
\frac{\hbar^2}{{\mathsf{r}!}}\approx\frac{2^{2\mathsf{r}\mathsf{s}}}{\mathsf{r}!}.
\label{Eq15}
\end{equation}
For typical values of $\mathsf{r}$ and $\mathsf{s}$ (selected to hide the key in an efficient manner as summarized in Table~\ref{TabN1AK}), the entropy of the term in \ref{Eq15}, i.e., 
\begin{equation}
2\log_2(\hbar)-\log_2(\mathsf{r}!)\approx 2\mathsf{r}\mathsf{s}-\log_2(\mathsf{r}!)
\label{Eq15AK}
\end{equation}
is much larger than the key entropy and relevant $\mathsf{SEC}$ value 
($\mathsf{SEC}$ is defined in expression \ref{AE4AK00}). This entails an attacker relying on exhaustive search can handle the search over these factors easier than trying to first exhaustively find the masking matrix.

{\bf Remark 5:} Note that each column in $\mathbf{Z}$, and similarly each row in $\mathbf{U}$, can take all possible (non-zero) binary combinations, i.e.,  $2^{\mathsf{s}}-1$ values. This point will be later used in some of  the proofs.  
$\blacksquare$

{\bf Remark 6:} In case that matrices $\mathbf{Z}$ and $\mathbf{U}$ are not rectangular, e.g.,  $\mathbf{Z}$ and $\mathbf{U}$ are of sizes $\mathsf{s}_1\times \mathsf{r}$ and $\mathsf{r}\times \mathsf{s}_2$, respectively, the right hand side in expression  \ref{Eq15AK} changes to:
\begin{equation}
\text{Mask Entropy}\approx \mathsf{r}(\mathsf{s}_1+\mathsf{s}_2)-\log_2(\mathsf{r}!).
\label{Eq15AKKH}
\end{equation}
For typical values of parameters, the quantity in \ref{Eq15AKKH} is larger than key entropy and relevant $\mathsf{SEC}$ ($\mathsf{SEC}$ is defined in expression \ref{AE4AK00}). $\blacksquare$

\subsection{Information Hiding by Incorporating Restricted Randomness}\label{RES1}

First, two simple definitions: We refer to a vector/matrix composed of i.i.d. binary values of probabilities 1/2 for \{0,1\} as ``truly random, or unrestricted random", to a vector/matrix composed of binary values with statistical dependencies among them as ``partially random, or partially restricted" and to binary values completely determined by the rest of the bits as ``fully restricted, or deterministic".

To enable key recovery, Alice should follow a particular structure for incorporating randomness shown (see Fig.~\ref{FNSet1}) while satisfying restrictions shown in Figs.~\ref{FigA},\ref{FNB1},\ref{FNAB},\ref{FNSet2}.   These restrictions reduce the entropy of matrix ${\mathbf B}$, which plays the main role in hiding the secret data embedded in the public key,  as compared to its size in bits, i.e.,  ${\mathbf B}$ is only partially random. Bob introduces randomness by discarding $\mathsf{p}$  randomly selected columns from ${\mathbf P}$ to generate $\hat{\mathbf P}_1$, which is then multiplied by a randomly selected data vector $\hat{\mathbf d}$ of length $\mathsf{d-p}$.  This means randomness introduced by Bob is unrestricted. 

\subsection{Introducing Randomness Supporting Key Recovery}\label{tofollow}

 In original McEliece  cryptosystem, randomness is introduced  by changing the generator matrix to an equivalent code which is difficult to decode. The aim is that  exhaustive search is the only option for an attacker to figure out randomness introduced into the key structure. {\em Information set decoding} is one such exhaustive search, followed by solving a set of linear equations to find the hidden key. In any exhaustive search, upon each new attempt, attacker needs to verify if the attack has been successful. This step is called "verification", hereafter. One should introduce the randomness in a manner that the verification is as difficult as possible.

To keep expressions simple and general, the effect of  discarding columns of the generator matrix (by Bob) is not implicitly included in this section. This means notation $\mathbf{P}$ in some cases represents reduced public generator, i.e.,  $\hat{\mathbf{P}}$ and $\mathbf{d}$ represents shortened data vector $\hat{\mathbf{d}}$. This will be clear from the context. 
 
Let us consider matrices $\mathbf{A}$ and $\mathbf{B}$ defined in
 Figs.~\ref{FigA},\ref{FNSet2},\ref{FNrecover},\ref{FNSet1}. Product $\mathbf{A}$ and $\mathbf{B}$ is of the form shown in Fig.~\ref{FNAB} (also see \ref{FNSet2}). Figure \ref{FNB2} shows a different decomposition of matrix  
$\mathbf{B}$, with sub-matrices  $\mathbf{C}_1$, $\mathbf{C}_2$, $\mathbf{D}_1$ and $\mathbf{D}_2$, that will be used in some of the derivations.  Sub-matrices  
$\mathbf{R}$ and $\mathbf{Q}$ in 
$\mathbf{A}$, and $\mathbf{D}$ (decomposed into $\mathbf{D}_1$ and $\mathbf{D}_2$ in Fig.~\ref{FNSet1}) in $\mathbf{B}$ can be selected unrestricted. The reason is that, for any realization of these matrices, the remaining sub-matrices can  be selected to allow key encapsulation/recovery, i.e., satisfy the condition on the product of    $\mathbf{AB}$ in Figs.~\ref{FNAB},\ref{FNSet2}. 
 
The first step in recovering the key is to multiply the received vector (including the added error vector $\mathbf{e}$) by the matrix $\mathbf{A}$. 
Added error vector $\mathbf{e}$ is of the form shown in Fig.~\ref{FNError}.  
Figure~\ref{FNrecover} shows the result of multiplication of the error vector $\mathbf{e}$ by $\mathbf{A}$. Since matrix $\mathbf{A}$ includes an identity matrix in its upper left sub-matrix, vector  $\mathbf{f}$ remains unchanged, while vector $\mathbf{Rf}$ (see Fig.~\ref{FigA} for definition of $\mathbf{R}$) will be discarded by Alice.

Referring to Fig.~\ref{FNSet1}, expressions (a) to (g) govern the outcome of the  matrix multiplication $\mathbf{AB}$, an operation which is implicitly performed at Alice's side by multiplying the received vector by $\mathbf{A}$. All expressions are straightforward, expect for (f). Using (a) in Fig.~\ref{FNSet1}, we have
\begin{equation}
\textrm{(a)}\rightarrow 
\mathbf{C}_2=\mathbf{Q}^{-1}\mathbf{R}\mathbf{C}_1.
\label{EXXX1}
\end{equation}
Replacing in (b), we obtain
\begin{equation}
\textrm{(a),(b)}\rightarrow \mathbf{C}_1=\mathbf{I}+\mathbf{S}\mathbf{Q}^{-1}\mathbf{R}\mathbf{C}_1
\label{EXXX2}
\end{equation}
or
\begin{equation}
\textrm{(a),(b)}\rightarrow \mathbf{C}_1(\mathbf{I}+\mathbf{S}\mathbf{Q}^{-1}\mathbf{R})=\mathbf{I}.
\label{EXXX3}
\end{equation}
Our goal is to determine which of the matrices involved can be selected  unrestricted. In establishing this property, we can select any subset of matrices unrestricted if, through restricting the remaining matrices, we are able to satisfy the expressions in (a) to (g) in Fig.~\ref{FNSet1}, thereby guaranteeing key recovery.   

Matrices $\mathbf{C}_1$  and $\mathbf{C}_2$ are partially restricted due to their  relationship captured in (a). Product of 
$\mathbf{S}$ and $\mathbf{C}_2$ is an $\mathsf{n}\times \mathsf{n}$ 
matrix, same size as  $\mathbf{C}_1$. However, the information content of $\mathbf{S}$, and likewise that of $\mathbf{C}_2$, is at most $\mathsf{mn}$ bits, adding to $\mathsf{2mn}$ bits in total. On the other hand, size of $\mathbf{C}_1$ is $\mathsf{n}^2$. For parameters used in practice of the proposed techniques, we have $\mathsf{n}^2\gg 2\mathsf{mn}$, since 
$\mathsf{n}\gg 2\mathsf{m}$. This means (a) in Fig.~\ref{FNSet1} includes partially restricted components. On the other hand, referring to (d) to (g), one can select 
$\mathbf{D}_1$, $\mathbf{D}_2$, $\mathbf{Q}$ and $\mathbf{R}$ unrestricted, while, noting (e) in Fig.~\ref{FNSet1}, matrix $\mathbf{S}$ will be deterministic for any realizations of $\mathbf{D}_1$ and  $\mathbf{D}_2$. Same conclusion can be reached by noting (a) where, given $\mathbf{C}_1$ and $\mathbf{C}_2$, matrix $\mathbf{S}$ will be determined. In  (b), one can  select $\mathbf{R}$ and $\mathbf{Q}$ unrestricted, and restrict $\mathbf{C}_1$, $\mathbf{C}_2$ such that (b) is satisfied. Likewise, in (c), one can freely select $\mathbf{D}_1$, $\mathbf{D}_2$ and restrict  $\mathbf{S}$ such that (c) is satisfied. For (d), all components on the right hand side are unrestricted, and as a result, $\mathbf{E}$ will be unrestricted as well. Finally, (f) and (g) capture the fact that $\mathbf{S}$ is fully restricted, while $\mathbf{C}_1$ and $\mathbf{C}_2$ are partially restricted. These conditions are consistent with what appears on the two sides of (f) and (g).

Another objective is to hide the generator matrix, in the sense that the information an eavesdropper requires in order to extract the generator matrix by observing public key is larger than the entropy of the key itself, as well as that of $\mathsf{SEC}$ (given in \ref{AE4AK00}). 
Noting Fig.~\ref{FNSet2},  one can conclude that the public key is equal to:
\begin{equation} \label{AD0}
\mathbf{B}\mathbf{G}=\mathbf{C}\mathbf{G}_1+\mathbf{D}\mathbf{G}_2.
\end{equation}
Since $\mathbf{D}$ and $\mathbf{G}_2$ are unrestricted,  according to Theorem~\ref{Th2} in Appendix \ref{Theo1}, the entropy of the masking matrix, i.e., 
\begin{equation} \label{AD1}
\mathbf{M}=\mathbf{D}\mathbf{G}_2
\end{equation}
will be at least
\begin{equation} \label{AD2}
\mathsf{m(m+n)+md}-\log_2(\mathsf{m}!)~\mbox{bits}.
\end{equation}
From~\cite{Stirling}, we have 
\begin{equation} \label{AD3}
\mathsf{m}!<\sqrt{2\pi \mathsf{m}}\left(\frac{\mathsf{m}}{e}\right)^\mathsf{m}\exp\left(\frac{1}{12\mathsf{m}+1}\right). 
\end{equation}
Replacing in \ref{AD3}, we conclude:
\begin{equation} \label{AD2}
\mathsf{m(m+n+d)}-\log_2(\mathsf{m}!)\gg \mathsf{SEC}
\end{equation}
where $\mathsf{SEC}$ is given in \ref{AE4AK00}. 


\subsection{Some Attack Strategies}

Next, let us consider  a few attack strategies:

\noindent {\bf 1:} Attacker targets the decomposition of $\mathbf{B}=\mathbf{C}\mathbf{G}_1+\mathbf{D}\mathbf{G}_2$ towards extracting information relevant to $\mathbf{C}\mathbf{G}_1$ and thereby relevant to generator $\mathbf{G}_1$.

\noindent {\bf 2:} Attacker targets the vector $\mathbf{B}\mathbf{G}\mathbf{d}+\mathbf{e}$ to extract information about $\mathbf{d}$. 

\noindent {\bf 3:}  Attacker targets matrix $\mathbf{A}$, which can then be used to extract the key. 

 \noindent Referring to Fig.~\ref{FNSet1}, and noting the formation of terms involved in above three cases, the entropy required to successfully conduct the above attacks is, for typical values of $\mathsf{n}$ and $\mathsf{m}$, substantially higher than the $\mathsf{SEC}$ value in \ref{AE4AK00}. 

\noindent{\bf 4:}  Referring to Fig.~\ref{FNSet1}, attacker may rely on public key matrix, i.e., matrix  $\mathbf{P=BG}$, expressed as
\begin{equation}
\mathbf{P}=\mathbf{BG}=\mathbf{C}\mathbf{G}_1+\mathbf{D}\mathbf{G}_2=\mathbf{C}\mathbf{G}_1+\mathbf{M},
\end{equation}
to extract information about generator $\mathbf{G}_1$. However, since $\mathbf{C}\mathbf{G}_1$, which contains all the information about $\mathbf{G}_1$, is masked by matrix $\mathbf{M}=\mathbf{D}\mathbf{G}_2$, which is an unrestricted matrix, the attacker's only strategy can be that of finding 
$\mathbf{M}=\mathbf{D}\mathbf{G}_2$ through an exhaustive search. Noting expression~\ref{Eq15AK}, the entropy of $\mathbf{M}=\mathbf{D}\mathbf{G}_2$ is substantially higher than the length of the key, as well as the $\mathsf{SEC}$ value (given in \ref{AE4AK00}).  Consequently, $\mathbf{C}\mathbf{G}_1$ is protected. 

\subsection{Computation of Security Level}
Note that verification process requires finding the following two items. 

\noindent 
{\bf Item 1} concerns  the location of discarded columns in $\mathbf{P}$ to obtain $\hat{\mathbf{P}}$. For item 1, the $\log_2$ of the number of possibilities is obtained using \ref{additive}. 

\noindent 
{\bf Item 2} concerns values of bits forming the shortened key of length $\mathsf{d-p}$. These are needed to compute the syndrome corresponding to the shortened key using parity generator matrix corresponding to  $\hat{\mathbf{P}}$, obtained in item 1 (also see Remark 1).  There are two options to cover Item 2: 

\noindent 
{\bf Option 1} is based on information set decoding. The $\log_2$ of the number of possibilities for information set decoding is computed using $\log_2$ of the quantity in 
\ref{RR2Ne}, i.e.,  ``entropy of information set decoding''. 
In computing \ref{RR2Ne}, $\mathsf{d}$ is replaced by $\hat{\mathsf{d}}=\mathsf{d-p}$ (number of required equations to find $\mathsf{d}$ for each possible case of selecting the discarded columns),  $\mathsf{m}$ is subtracted to account for $\mathsf{m}$ error-free equations publicly known, and $\mathsf{t}$ is computed as the average number of ones produced by the state diagram in Fig.~\ref{FigS} over a block of length $\mathsf{n}$, namely 
\begin{equation} \label{AE4AK00qwe}
\mathsf{t}=\mathsf{n}\pi_1(1-\mu)  =  \frac{\mathsf{3d(1-\mu)}}{3-2\mu}
\end{equation}
where $\mathsf{n}=3\mathsf{d}$ and $\pi_1$ given in \ref{Ent1AK}. Using average number of erroneous bits is not entirely accurate, however, noting law of large numbers, for values of $\mathsf{n}$ used in practice, it provides a fairly accurate approximation.  

\noindent 
{\bf Option 2} is based on finding the error vector $\mathbf{e}$, 
with an entropy of
\begin{equation} \label{AE4AK00qwert}
\text{Entropy of error vector}=\mathsf{n}\mathbb{H}_{s}(\mu)
\end{equation}
with $\mathbb{H}_{s}(\mu)$ given in \ref{Ent3}. 

Then, $\mathsf{SEC}$ is defined as
\begin{equation} \label{AE4AK00}
\mathsf{SEC}=\min(\mathcal{A},\mathcal{B}).
\end{equation}
Terms $\mathcal{A}$, $\mathcal{B}$ in \ref{AE4AK00} are each the summation of  two terms. 
For $\mathcal{A}$, the two terms are:
Item 1 plus Option 1 for computing Item 2.  
For $\mathcal{B}$, the two terms are:
Item 1 plus Option 2 for computing Item 2.  
The point corresponding to the intersection of the two curves, representing $\log_2(\mathcal{A})$ and 
$\log_2(\mathcal{B})$ (see Figs. \ref{Cross260} to \ref{Cross1000}, and  Table~\ref{TabN1AK}) is selected to maximize the
$\mathsf{SEC}=\min(\mathcal{A},\mathcal{B})$. 

\subsection{Direct Computation of the Key Entropy}

The final key is composed of 
$\hat{\mathsf{d}}$ random bits with $\mathsf{p}$ zeros inserted in the location of discarded bits (positions are random and known neither to Alice nor to the public). Let us compute the entropy value $\mathbb{K}$ as
\begin{equation} \label{AE4AK00abc}
\mathbb{K}=\hat{\mathsf{d}}+ \mathsf{E}
\end{equation}
where $\mathsf{E}$ is given in~\ref{additive}. Note that the term $\mathsf{E}$ reflects the point in Remark 1, stating that the key verification is possible only if the positions of discarded columns are known. Once these positions are known, inserting zeros within $\hat{\mathbf{d}}$ to extend the length would not require any additional information, hence the summation given in \ref{AE4AK00abc} is used instead of the actual key entropy. Note that the entropy of extended key (without accounting for entropy  needed to conduct verification) is equal to:
\begin{equation} \label{AE4AK00abcde}
\hat{\mathsf{d}}+ \log_2{\mathsf{\hat{\mathbf{d}}} \choose. \mathsf{p}}.
\end{equation}
Although \ref{AE4AK00abcde} is less than \ref{AE4AK00abc}, since \ref{AE4AK00abcde} does not include  the entropy required to  perform verification, expression \ref{AE4AK00abc} is used in deriving numerical results. In all cases studied here, the quantity in \ref{AE4AK00abc} is larger than the corresponding $\mathsf{SEC}$ level. 

\subsection{Properties of Error Vector}

To form the public key, Alice first forms $\mathbf{A}$ and $\mathbf{B}$ such that 
\begin{equation}
\mathbf{A}\mathbf{B}=
\begin{bmatrix}
\mathbf{I} & 0 \\
0  & \mathbf{E} 
\end{bmatrix}.
\label{Eq12AKK}
\end{equation}
Then, the public key is formed as (see Fig.~\ref{FNSet1})
\begin{equation}
\mathbf{P}=\mathbf{B}\mathbf{G}.
\label{Eq12}
\end{equation}
Bob removes $\mathsf{p}$ randomly selected columns from $\mathbf{P}$  to obtain the reduced public key matrix $\hat{\mathbf{P}}$. 
Key is encapsulated by Bob using (see Fig.~\ref{FNSet1})
\begin{equation}
\hat{\mathbf{P}}\hat{\mathbf{d}}+\mathbf{e}
\label{Eq12AKKH}
\end{equation}
where $\hat{\mathbf{d}}$ is of size $\mathsf{n-p}$, $\mathbf{e}$ is an error vector of size $\mathsf{n+m}$ formed as
\begin{equation}
\mathbf{e}=
\begin{bmatrix}
\mathbf{f} \\
0  
\end{bmatrix}
\label{Eq12AKKHA}
\end{equation}
and vector $\mathbf{f}$ is an error vector of size $\mathsf{n}$ formed based on the state diagram in Fig.~\ref{FigS}.   

Probability $\mu$ in Fig.~\ref{FigS} adjusts the entropy of the error vector 
$\mathsf{e}$, and the number of  ones (erroneous bits) in the error sequence. The larger is the number of ones, the harder will be the information set decoding. The higher is the entropy of $\mathsf{e}$, the exhaustive search for finding  $\mathsf{e}$  will be harder. This means there is a trade-off between number of errors in $\mathsf{e}$  and entropy of $\mathsf{e}$ . Parameter $\mu$ is selected to optimize this trade-off. Circled points in Figs.~\ref{Cross260},\ref{Cross311},\ref{Cross512},\ref{Cross1000} are selected based on this principle.

Note that in inserting error sequence $\mathsf{e}$,  ordering of repetition codes in columns is not publicly known, and the columns discarded by Bob are not known neither publicly, nor to Alice, but Alice and Bob will be able to arrive at the same key. This is achieved by Bob (privately) inserting zeros in locations corresponding to discarded columns, and at the Alice's side, the reconstructed key (upon multiplication by $\mathsf{A}$ and error correction) automatically inserts zeros in the same locations  (see Figs.~\ref{Pru1},\ref{Pru2}).  

As mentioned earlier, Alice extracts the key by multiplying the vector received from Bob by $\mathbf{A}$. This multiplication does not change the order of bits forming the sequence $\mathsf{e}$,  while it brings back the structure of $\mathsf{B}$ to its original form (excluding discarded columns), and accordingly, Alice can correct for errors caused by $\mathsf{e}$.  

\section{Comparisons with McEliece \& Niederreiter  Cryptosystems} \label{compAK}

Complexity aspects include: (i) Storage requirement for storing the public key and (ii) Computational complexity of key encapsulation and recovery (decoding of the underlying FEC). Our comparisons is different from typical security analysis presented in literature on cryptography. The reason is as follows.  
The main computational complexity in McEliece or Niederreiter 
cryptosystems concerns decoding of the underlying code, e.g., Goppa code. Over time, many elegant  techniques are developed, e.g., \cite{RN25} to \cite{RN73}, which aim at simplifying the attack on McEliece cryptosystem vs. a straightforward information set decoding given in~\cite{set}.  
There are also quite a few different attack methods for McEliece/Niederreiter cryptosystems with different trade-offs in terms of security level vs. attack complexity. In addition, there are many variants of McEliece/Niederreiter cryptosystems  based on different coding schemes, and even more variants for decoding in each case. These factors  make a straightforward comparison difficult. In other words, there is no general data available that would make a perfect comparison (accounting for decoding complexity as well as various matrix multiplications) possible.  {\em For these reasons, we rely on a metric for comparison that is to the disadvantage of the proposed method, as explained next.} In our case, decoding of repetition codes has a trivial complexity. For this reason, our comparison does not include the complexity of decoding of the underlying code. This omission results in significant underestimation of the complexity of McEliece or Niederreiter cryptosystems vs. that of the proposed method.  

To main consistency,  we rely on notations used in publications containing the results
reported in Tables~\ref{TabN1} and \ref{TabN2}. 
Table~\ref{TabN1} provides examples of size of the public key and security level provided by McEliece cryptosystems using Goppa code.  
Table~\ref{TabN2} provides examples of the complexity  and security level provided by McEliece cryptosystems using Goppa code.  
Examples for McEliece cryptosystems are mainly extracted from the proposals submitted to NIST (National Institute of Standards and Technology) \cite{Main-ref}.  
From reference \cite{Main-ref}, public keys in their variants of McEliece cryptosystem are summarized in Table~\ref{TabN1}. Entries in Table~\ref{TabN1} are significantly more complex (in terms of key size and decoding of the underlying error correcting code) as compared to the example of the proposed method in Table~\ref{TabN1AK}. Proposed method also generates longer key lengths.  
\begin{table}
\begin{center}
\begin{tabular}{|c|c|c|c|c|} \hline
Rows & $n$ & $k$  & $\mathsf{SEC}$ & Memory  \\ \hline \hline
1 & 1632 & 1269 & 80 &  0.46 Mbits \\
2 &2960 & 2288 & 128& 1.53 Mbits \\
\underline{3} & \underline{6624} & \underline{5129} & \underline{256} & \underline{7.67 Mbits} \\ \hline
\end{tabular} 
\caption{Examples for Goppa codes used in CCA2-secure variants of the McEliece
cryptosystem \cite{Main-ref}. Memory requirements for public keys are computed based on  a systematic generator matrices composed of $k(n-k)$ bits \cite{Main-ref}.  
$\mathsf{SEC}$ is computed merely based on information set decoding attack, while in the current article, $\mathsf{SEC}$ is the smaller of two quantities defined in expression \ref{AE4AK00}. } \label{TabN1}
\end{center}
\end{table}
Table~\ref{TabN1AK} provides a summary of the performance of the proposed method corresponding to Figs.~\ref{Cross260},\ref{Cross311},\ref{Cross512},\ref{Cross1000}. Comparing to Table~\ref{TabN1} with Table~\ref{TabN1AK}, it is observed that the proposed method produces significantly larger keys, while using much lower memory resources.  

\begin{table}
\begin{center}
\begin{tabular}{|c|c|c|c|} \hline
Size of   & Entropy    &  Size of  &   $\mathbb{K}$   \\
 ${\bf A,B}$  & of $\mathsf{SEC}$ & Public Key  &   from \ref{AE4AK00abc} \\
$\mathsf{n+m}$  & in bits  & $(\mathsf{m}+\mathsf{n})\mathsf{d}$  
 &  in bits  \\ \hline
625 &   \underline{{\bf 258}} & \underline{{\bf 0.128} MBits} & 320 \\	\hline
910 & {\bf 388} & {\bf 0.273} MBits & 467 \\	\hline
1210 & {\bf 524} & {\bf 0.484} MBits & 627\\	\hline
2260 & {\bf 1000} & {\bf 1.695} MBits & 1172 \\	\hline
\end{tabular} 
\caption{Performance and memory requirement of the proposed method where $\mathsf{n=3d}$. The $\mathsf{SEC}$ values defined in expression \ref{AE4AK00} correspond to circled points in 
Figs.~\ref{Cross260},\ref{Cross311},\ref{Cross512},\ref{Cross1000}.} \label{TabN1AK}
\end{center}
\end{table}

As mentioned, error detection in the proposed method has a trivial complexity, and as a result, the main complexity is that of the multiplication of an 
$\mathsf{(n+m)\times (d-p)}$ public key matrix by a data vector of size 
$\mathsf{d-p}$ (performed at the Bob's side for key encapsulation). Multiplication by matrix  $\mathsf{A}$  at the Alice side (for recovering the key) mainly involves an identity matrix forming the upper left corner of $\mathsf{A}$, which is small, and is not included here. The matrix multiplications at the Alice's side for McEliece  and Niederreiter Crypto-Systems are not included either, even though these would be more complex as compared to multiplication by matrix  $\mathsf{A}$ conducted as part of the key recovery in our proposed method. 

Comparisons of computational complexities presented next are based on multiplication of   the public key matrix by the message vector used in key encapsulation.
Relaying on notations used in the article, i.e., $\mathsf{n}$ and $\mathsf{k}$, for McEliece  and Niederreiter Crypto-systems, the corresponding complexity terms include multiplication of matrices of size $\mathsf{n\times k}$ and $\mathsf{n\times (n-k)}$ with message vectors of  size $\mathsf{k}$ and $\mathsf{n-k}$, respectively.  
These terms are computed as:
$\mathsf{n\times k^2}$ and $\mathsf{n\times (n-k)^2}$ bit operations, respectively. In our proposed method, a similar expression would result in a complexity of  $\mathsf{(n+m)\times (d-p)^2}$ bit operations.  Tables~\ref{TabN2AK} and \ref{TabN2} include the corresponding results. 
Underlined entries in row 3 of Tables~\ref{TabN1},\ref{TabN2} should be compared with underlined entries in Tables~\ref{TabN1AK},\ref{TabN2AK}, respectively. These comparisons are summarized in Table~\ref{TabNew1}. 

\begin{table}
\begin{center}
\begin{tabular}{|c|c|c|c|c|c|} \hline
$\mathsf{n}$ &  $\mathsf{m}$ & $\mathsf{d}$ & $\mathsf{p}$ & $\mathsf{SEC}$ & Complexity \\  \hline
620 & 10 &  205 & 80 & \underline{258} & \underline{$0.0783\times 10^{6}$}  \\	\hline
900 & 10 &  300 & 118 & 388 & $0.1656\times 10^{6}$ \\	\hline
1200 & 10 &  400 & 155 & 524 &  $0.2964 \times 10^{6}$ \\	\hline
2250 & 10 &  750 & 302 & 1000 & $1.0125\times 10^{6}$ \\	\hline
\end{tabular} 
\caption{Example for the complexity values for the the proposed method (limited to multiplication of the public key matrix by the message vector for key encapsulation, i.e., $\mathsf{(n+m)(d-p)}$ where $\mathsf{SEC}$ is defined in expression \ref{AE4AK00}.} \label{TabN2AK}
\end{center}
\end{table}

\begin{table} 
\begin{center}
\begin{tabular}{|c|c|c|c|} \hline
Rows & $\mathsf{SEC}$ & McEliece  & Niederreiter   \\  \hline
1 & 80 & $0.0215\times 10^{10}$  & $0.26\times 10^{10}$ \\ \hline
2 &128 & $0.1337\times 10^{10}$  & $1.55\times 10^{10}$ \\ \hline
3 & \underline{256} & \underline{$1.4805\times 10^{10}$}  & \underline{$17.43\times 10^{10}$}\\ \hline
\end{tabular} 
\caption{Example for the complexity values for the McEliece  and Niederreiter Crypto-systems (limited to multiplication by public key for key encapsulation). 
 Rows 1,2,3 correspond to rows 1,2,3 in in Table~\ref{TabN1}. $\mathsf{SEC}$ is computed based on information set decoding attack, while in the current article, $\mathsf{SEC}$ is defined in expression \ref{AE4AK00} (is the minimum of two terms, one of the two capturing information set decoding attack).} \label{TabN2}
\end{center}
\end{table}

\begin{table}
\begin{center}
\begin{tabular}{|c|c|c|c|} \hline
Methods & $\mathsf{SEC}$ & Memory  & Complexity \\  \hline
McEliece  & 256 & 7.67   & $1.4805\times 10^{10}$ \\   \hline 
Niederreiter & 256 & 7.67   & $17.43\times 10^{10}$ \\ \hline
Proposed & 258 & 0.128   & $0.0783\times 10^{6}$   \\ \hline
\end{tabular} 
\caption{Summary of comparisons. Memory is in MBits.} \label{TabNew1}
\end{center}
\end{table}

{\bf Remark 7}: Key recovery in McEliece  and Niederreiter  systems first results in a transformed key, which is then multiplied by the inverse matrix initially used for transformation to recover the original key (see expression \ref{APPr} in Appendix~\ref{mac}). Complexity of this transformation, although being significant, is not included in the above tables. Our proposed technique does not include such a transformation. $\blacksquare$

\newpage
\begin{center}
{\bf\LARGE Appendices}
\end{center}

\begin{appendix}


\section{McEliece Cryptosystem}\label{mac}

\subsection{Preliminaries}\label{mac}

McEliece Cryptosystem \cite{R0} is at the hearth of  randomized (quantum-safe) PKI techniques. The key idea is to deploy a linear code which, if rearranged in a particular form, enjoys a simple decoding method. Relying on communications between Alice and Bob, one of these two parties, say Alice, starts with a generator matrix for the underlying code that lends itself to a simple decoding algorithm. Then, Alice (privately) randomizes the generator matrix. The key idea is that Alice is able to revert the introduced randomness and then benefit from the known simple decoding algorithm.   Randomized generator matrix is known as the public key, and information that are (privately) available to Alice, which would enable removing the introduced randomness, is the private key. Alice sends the public key to Bob who encodes a random binary vector (secret message), introduces errors in the encoded message, and sends the result back to Alice. Then, Alice upon removing the randomness in the generator matrix, recovers the secret message. Classical McEliece Cryptosystem \cite{R0} relies on Goppa codes in which errors are added at the error correction capability of the code. 

In recent times, a number of other channel coding schemes, with their associated decoding methods,  have been applied to McEliece Cryptosystem~\cite{R13} to \cite{RN3}.   Some researchers have devised specialized decoding methods to enhance McEliece Cryptosystem~\cite{RN23} to \cite{RN65}.  A number of research work have examined hardware implementation of McEliece Cryptosystem \cite{RN16}\cite{RN4}, the use of codes other than Goppa code \cite{RN17} to \cite{RN74}, and various techniques to analyze and/or improve McEliece Cryptosystem~\cite{RN75} to \cite{RN80}. However, the memory requirement for the public key and complexity of key recovery remain to be challenging. 

In performing an exhaustive search, one needs a {\em Verification Apparatus} (VA) to test if the unknown key has been indeed discovered. In the following, this concept is explained in conjunction with the classical McEliece Cryptosystem and information set decoding which is the main attack technique for breaking McEliece Cryptosystem and its variations. 

In this Appendix, to simplify notations, discussions related to classical McEliece cryptosystem rely on italic bold fold notations, e.g.,  $\mathbi{A}$,   to represent matrices, $\eth$ represents the message vector, $\epsilon$ represents the added error vector, and notations such as $n$ and $m$ are used to show the underlying vector/matrix sizes.  In discussions related to the proposed method, regular boldface notations, e.g.,  $\mathbf{A}$  are used  to represent matrices, $\mathbf{d}$ represents the message  vector, $\mathbf{e}$ represents the added error vector, and notations such as 
$\mathsf{n}$ and $\mathsf{m}$ are used to show the underlying parameters.

Classical McEliece Cryptosystem is composed of the following components:
Public key is composed of the product of three matrices: $\mathbi{P}\mathbi{G}\mathbi{A}$ where 
$\mathbi{A}$ is a $k\times k$ invertible matrix;
$\mathbi{A}$ maps a $k$-dimensional data vector ${\eth}$ to another $k$-dimensional data vector $\breve{\eth}=\mathbi{A}{\eth}$.
Matrix $\mathbi{G}$ is the $n\times k$ generator matrix of an error correcting code (correcting up to $t$ errors), and $\mathbi{P}$ is an $n\times n$ permutation matrix. 

\subsection{Key Encapsulation and Recovery} 
Alice sends its public key, i.e., $\mathbi{P}\mathbi{G}\mathbi{A}$, to Bob,
Bob selects a random message vector ${\eth}$ and computes 
$\mathbi{W}=\mathbi{P}\mathbi{G}\mathbi{A}\eth+\epsilon$ where $\epsilon$  is an error vector containing $t$ ones and $n-t$ zeros. Bob sends $\mathbi{W}$ to Alice. Alice computes  $\mathbi{P}^{-1}\mathbi{W}=\mathbi{G}\mathbi{A}\,\eth+
{\breve\epsilon}=\mathbi{G}\,\breve{\eth}+{\breve\epsilon}$ where 
${\breve\epsilon}= \mathbi{P}^{-1}\epsilon$ is a permuted version 
of $\epsilon$, i.e., contains $t$ ones and $n-t$ zeros. Alice aims to recover the erroneous vector $\mathbi{G}\breve{\eth}+{\breve\epsilon}$ 
to obtain $\breve{\eth}$.
Alice then computes 
\begin{equation}
\eth=\mathbi{A}^{-1}\breve{\eth}
\label{APPr}
\end{equation}
to recover the encrypted data vector $\eth$. If $t\leq c$, where $c$ is the error correcting capability of the code generated by $\mathbi{G}$, then Alice succeeds in recovering $\eth$, i.e., the Encapsulated key.

\subsection{Information Set Decoding Attack} Noting 
$\mathbi{W}=\mathbi{P}\mathbi{G}\mathbi{A}\eth+\epsilon$, it is concluded that $\mathbi{W}$ and $\mathbi{P}\mathbi{G}\mathbi{A}$ are available to the attacker. Having access to these terms,  attacker aims to find $\eth$. The challenge facing the attacker is that some of the equations formed due to $\mathbi{P}\mathbi{G}\mathbi{A}\eth$ have an erroneous right hand side due to the addition of the error vector $\epsilon$ . In information set decoding, attacker exhaustively searches for a subset of equations of size $\mathbi{G}\mathbi{d}+\epsilon$ for which the components of error vector $\epsilon$ are zero. It then solves the corresponding equations. For every subset that is examined, the attacker needs to verify if its goal is achieved, i.e., {\em all} selected equations are error free. This is explained next.

\subsubsection{Verification Apparatus (VA) in Information Set Decoding}
Verification Apparatus can rely on one of two methods explained next.
\begin{enumerate}
 \vspace{-0.3cm}\item If data vector $\eth$ generates a code-word from code 
$\mathbi{P}\mathbi{G}$, it confirms that attack has ended in success. For each iteration of information set decoding, this is checked by: (a) computing 
$\mathbi{P}\mathbi{G}\mathbi{A}\eth$, (b) comparing the result with 
$\mathbi{W}=\mathbi{P}\mathbi{G}\mathbi{A}\eth+\epsilon$ to find $\epsilon$, 
(c) computing $\mathbi{P}\mathbi{G}\mathbi{A}\eth$ by removing $\epsilon$ from $\mathbi{W}$ and (d) computing the syndrome of $\mathbi{P}\mathbi{G}\mathbi{A}\eth$ by computing $\mathbi{s}=\mathbi{C}\mathbi{P}\mathbi{G}\mathbi{A}\eth$ where $\mathbi{C}$ is the parity check matrix associated with the generator matrix 
$\mathbi{P}\mathbi{G}\mathbi{A}$. Expressing the generator matrix and its associated parity check matrix in systematic forms, the computation of syndrome involves multiplication of an $(n-k)k$ matrix with a vector of size $n$ 
(this ignores the multiplication with the identity matrix included in $\mathbi{C}$). 
\vspace{-0.3cm}\item In a simpler alternative, $\mathbi{P}\mathbi{G}\mathbi{A}\eth$ is computed and compared with $\mathbi{W}=\mathbi{P}\mathbi{G}\mathbi{A}\eth+\epsilon$ to find $\epsilon$. Then, the number of bit errors in $\epsilon$ is counted. In most cases, one relies on forward error correcting codes with a fixed error correction capability (equal to the number of errors $t$). A condition for the attack to be successful is that, the number of bit errors, i.e., number of ones in $\epsilon$, should be equal to $t$.

\end{enumerate}

\section{Increasing Complexity of the Information Set Decoding} \label{APPP1}
In the current article, the forward error correcting code used in key encapsulation/recovery is composed of a concatenation of repetition codes of length 3. Alice randomly assigns the codes to columns of the generator matrix to construct matrix $\mathbf{G}_1$, and constructs  matrix $\mathbf{G}$ by appending random rows, $\mathbf{G}_2$, to $\mathbf{G}_1$ (see Fig.~\ref{FNSet2}). Public key is formed as  
$\mathbf{B}\mathbf{G}$ (see Fig.~\ref{FNSet2}). This is a matrix of size 
$\mathsf{(n+m)}\times \mathsf{d}$, which would be multiplied by a message vector 
$\mathbf{d}$ of size $\mathsf{d}$, and then an error vector $\mathbf{e}$ is added to the result. Such a construction follows all the steps used in the classical McEliece cryptosystem. 

Information set decoding works as follows: An attacker, having access to the public key, $\mathbf{B}\mathbf{G}$ , and vector $\mathbf{B}\mathbf{G}\mathbf{d}+\mathbf{e}$, exhaustively forms a system of linear equations by selecting a subset of of size $\mathsf{d}$ from the set of $\mathsf{(n+m)}$ equations.  Note that in the proposed method, unlike alternative schemes, $\mathbf{e}$ contains a random number of erroneous bits.
On the other hand, in the proposed method, columns of $\mathbf{G}$ each form a separate repetition code which are concatenated, resulting a single one in each row of 
the matrix $\mathbf{G}_1$. This feature allows Bob to select a random subset of columns from the public key $\mathbf{B}\mathbf{G}$ and discards them, reducing the column size and accordingly the size of the data vector to $\hat{\mathsf{d}}<\mathsf{d}$. In this case, the attacker will not be able to form the system of equations that is required in information set decoding attack, unless the positions of the discarded columns are first found through an exhaustive search. 

Using generic notations, consider  a chain of matrix products, say $\mathbi{QP}$, where $\mathbi{Q}$ is of size  $\upsilon\times \upsilon$ and  $\mathbi{P}$ is of size  $\upsilon\times \hat{\upsilon}$. 
Theorem \ref{ThNAK1} in Appendix \ref{ThNAK11} establishes how discarding of columns propagates in the chain. 

In classical McEliece Cryptosystem, the higher is the error correction capability of the code ${\bf G}$, i.e., $c$, the higher will be the number of equations in 
$\mathbi{P}\mathbi{G}\mathbi{A}\eth+\epsilon$ which can be turned erroneous by selecting a value of 1 for a component of $\epsilon$  in $\mathbi{W}=\mathbi{P}\mathbi{G}\mathbi{A}\eth+\epsilon$. This in turn adds to the complexity of information set decoding~\cite{set}. In this article, it is shown that by introducing memory among bits forming the error vector, the number of errors can be substantially increased, thereby adding to the complexity of information set decoding, while keeping the number of errors random. In particular, (i) by deploying a concatenation of repetition codes of length 3, and (ii) by relying on the state diagram in Fig.~\ref{FigS} for introducing memory among bits forming the error vector,  the number of errors can be increased, while the memory introduced by the state diagram results in detecting all inserted bit errors.   

In classical McEliece Cryptosystem, randomness is introduced by Alice who forms the public key. In the current article, Alice and Bob both introduce randomness. In the case of Alice, this is achieved by selecting the masking matrices and by permuting columns of the matrix generating the concatenation of repetition codes. In the case of Bob, randomness is introduced by discarded the columns of the public key and adding the error vector. 

Another difference is that, in classical McEliece Cryptosystem, to verify the success in a given round of information set decoding, the attacker can check the weight of the error event.  In the current article, attacker is not able to benefit from a first step verification by counting the number of errors (since it is random), necessitating a verification using the syndrome. This feature adds to the complexity of the attack. 

 In this work, a simple technique is applied such that the validity of a potential key cannot be verified by the decryption of a message (see Remark 1).

\section{Theorems}\label{Theo1}

To allow reusing the notations in different contexts, in this Appendix, we rely on generic notations  different from those used in the main body of the article.

\subsection{Discarding Columns in the Public Key Generator Matrix}\label{ThNAK11}

\begin{theorem} \label{ThNAK1}
Let us consider the product $\mathbi{\em QP}$, where matrix $\mathbi{\em Q}$ is of size $\alpha\times \beta$ and matrix $\mathbi{\em P}$  is of size 
$\beta\times \gamma$. The column indexed by $\nu$ is discarded from 
$\mathbi{\em P}$, i.e., columns indexed by $\nu+1$ to 
$\gamma$ replace columns 
$\nu$ to $\gamma-1$ and then column $\gamma$ is removed. Let us refer to the resulting matrix of size $\alpha\times (\gamma-1)$ as $\hat{\mathbi{\em P}}$.  A similar discarding and replacement  occurs in columns of $\mathbi{\em QP}$ for the same set of indices as in the case of $\mathbi{\em P}\rightarrow \hat{\mathbi{\em P}}$.
\end{theorem}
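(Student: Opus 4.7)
The plan is to exploit the most elementary property of matrix multiplication: the $j$-th column of a product depends only on the $j$-th column of the right factor, namely $(\mathbi{QP})_{:,j} = \mathbi{Q}\,\mathbi{P}_{:,j}$. Because the action on each column index is independent, any operation that permutes, deletes, or shifts columns of $\mathbi{P}$ should propagate verbatim to the columns of $\mathbi{QP}$. So the proof will be a direct bookkeeping argument built on this identity.

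First I would fix notation by letting $\mathbi{p}_j$ denote column $j$ of $\mathbi{P}$ for $j = 1, \ldots, \gamma$, and $\hat{\mathbi{p}}_j$ denote column $j$ of $\hat{\mathbi{P}}$ for $j = 1, \ldots, \gamma - 1$. The description of the discarding/shifting operation translates into the concrete rule
\[
\hat{\mathbi{p}}_j \;=\;
\begin{cases} \mathbi{p}_j, & 1 \le j < \nu, \\ \mathbi{p}_{j+1}, & \nu \le j \le \gamma - 1. \end{cases}
\]
This is the single identity that encodes the entire column manipulation, and establishing it carefully (in particular, verifying that the final column of $\hat{\mathbi{P}}$ corresponds to $\mathbi{p}_\gamma$) is the only place where indexing errors could creep in.

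Next I would apply $\mathbi{Q}$ on the left to both sides of the rule. Using $(\mathbi{Q}\hat{\mathbi{P}})_{:,j} = \mathbi{Q}\hat{\mathbi{p}}_j$ and $(\mathbi{QP})_{:,j} = \mathbi{Q}\mathbi{p}_j$, linearity gives
\[
(\mathbi{Q}\hat{\mathbi{P}})_{:,j} \;=\;
\begin{cases} (\mathbi{QP})_{:,j}, & 1 \le j < \nu, \\ (\mathbi{QP})_{:,j+1}, & \nu \le j \le \gamma - 1. \end{cases}
\]
This is exactly the description of the discarded/shifted version of $\mathbi{QP}$, so $\mathbi{Q}\hat{\mathbi{P}}$ equals $\widehat{\mathbi{QP}}$ (the matrix obtained from $\mathbi{QP}$ by the same column deletion), proving the claim.

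There is no serious obstacle; the statement is essentially the column-wise definition of matrix multiplication dressed in index-shifting language. The mildly delicate step is keeping the index bookkeeping consistent between the two descriptions (deletion of column $\nu$ followed by a left-shift of columns $\nu+1, \ldots, \gamma$ versus the compact rule $\hat{\mathbi{p}}_j = \mathbi{p}_{j+1}$ for $j \ge \nu$). The extension to removing several columns, which is what the main text actually invokes, follows immediately by iterating the single-column result, with the indices of the remaining columns being relabeled appropriately at each step; likewise, the analogous statement for a longer chain $\mathbb{M}_\alpha \cdots \mathbb{M}_1$ follows by grouping $\mathbb{M}_\alpha \cdots \mathbb{M}_2$ into a single left factor and applying the theorem to $\mathbb{M}_1$.
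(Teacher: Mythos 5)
Your proposal is correct and follows essentially the same route as the paper's proof, which likewise rests on the identity $\mathbi{c}_\nu(\mathbi{QP})=\mathbi{Q}\,\mathbi{c}_\nu(\mathbi{P})$ and then observes that deleting a column of $\mathbi{P}$ deletes the corresponding column of $\mathbi{QP}$ and shifts the rest. Your version merely spells out the index bookkeeping more explicitly, which is a harmless refinement of the same argument.
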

\begin{proof}
Let us use the notation $\mathbi{c}_\nu(\mathbi{P})$ and  $\mathbi{c}_\nu(\mathbi{PQ})$ to represent the column indexed by $\nu$ in matrices $\mathbi{P}$ and $\mathbi{QP}$, respectively. $\mathbi{c}_\nu(\mathbi{QP})$ is obtained by expanding $\mathbi{c}_\nu(\mathbi{P})$ over $\mathbi{Q}$, i.e., 
\begin{equation}
\mathbi{c}_\nu(\mathbi{QP})=\mathbi{Q}\times\mathbi{c}_\nu(\mathbi{P}).
\end{equation}
Consequently, removing $\mathbi{c}_\nu(\mathbi{P})$, results in removing $\mathbi{c}_\nu(\mathbi{QP})$ and shifting columns $\mathbi{c}_{\nu+1}(\mathbi{QP})$  to positions $\nu, ...\hat{\upsilon}-1$. 
\end{proof}

{\bf Remark 8:} Theorem \ref{ThNAK1} remains valid if the chain is formed by the product of multiple (more than two) matrices. For example, in a product of the form $\mathbi{S=RQP}$, if the column indexed by $\nu$ is removed from 
$\mathbi{P}$, then, in $\mathbi{S}$, columns indexed by $\nu+1$ to $\gamma$  replace columns $\nu$ to $\gamma-1$ and column $\gamma$ is removed. It is also easy to see that Theorem \label{ThNAK1} can be applied recursively to discard multiple columns.  
$\blacksquare$

{\bf Remark 9:} Matrix $\mathbi{P}$ in Theorem \ref{ThNAK1} represents the public key generator. Bob removes a randomly selected subset of columns from $\mathbi{P}$, and then multiply the  resulting reduced generator,  say  $\tilde{\mathbi{P}}$, with a message vector, which is accordingly shortened, before adding error. Then, Bob (in secret) extends the shortened data vector by inserting zeros in locations that were discarded (to produce an extended vector). At the Alice side, the vector received from Bob is multiplied by a second matrix, which results in three consecutive zeros in bit positions discarded from the data vector.  The vector formed at the Alice's side will be the same as a vector that would be obtained if, instead of discarding bits, Bob had selected a value of zero for them.  Consequently, Alice will be able to correct the error bits, and obtain Bob's extended data vector. 
 $\blacksquare$

\subsection{Sum of Binary Vectors}
\begin{theorem}\label{Th1}
Assume $\mathbf{x}$ and $\mathbf{y}$ are two independent binary vectors of length $k$ where $\mathbf{x}$ is unrestricted (is of entropy $k$). Regardless of $\mathbf{y}$ being restricted or not,  $\mathbf{z}=\mathbf{x}+\mathbf{y}$ will be unrestricted. 
\end{theorem}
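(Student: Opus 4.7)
The plan is to use the classical one-time-pad style argument: uniform randomness absorbs any independent source under XOR. The key fact is that, for a fixed binary vector $\mathbf{y}_0 \in \{0,1\}^k$, the map $\mathbf{x}\mapsto \mathbf{x}+\mathbf{y}_0$ is a bijection (in fact, an involution) on $\{0,1\}^k$, so it preserves the uniform distribution. This immediately upgrades to the marginal distribution of $\mathbf{z}$ by the law of total probability once independence of $\mathbf{x}$ and $\mathbf{y}$ is invoked.

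Concretely, I would proceed as follows. First, recall that ``$\mathbf{x}$ is unrestricted of entropy $k$'' means $\mathbf{x}$ is uniformly distributed over $\{0,1\}^k$, i.e., $\Pr(\mathbf{x}=\mathbf{a})=2^{-k}$ for every $\mathbf{a}\in\{0,1\}^k$. Next, fix any realization $\mathbf{y}=\mathbf{y}_0$ with $\Pr(\mathbf{y}=\mathbf{y}_0)>0$. By the independence of $\mathbf{x}$ and $\mathbf{y}$, the conditional distribution of $\mathbf{x}$ given $\mathbf{y}=\mathbf{y}_0$ is still uniform on $\{0,1\}^k$. For any target vector $\mathbf{c}\in\{0,1\}^k$,
\begin{equation}
\Pr(\mathbf{z}=\mathbf{c}\mid \mathbf{y}=\mathbf{y}_0)=\Pr(\mathbf{x}=\mathbf{c}+\mathbf{y}_0\mid \mathbf{y}=\mathbf{y}_0)=2^{-k},
\end{equation}
since $\mathbf{c}+\mathbf{y}_0$ is just some fixed element of $\{0,1\}^k$.

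The final step is to remove the conditioning by summing over all possible values of $\mathbf{y}$:
\begin{equation}
\Pr(\mathbf{z}=\mathbf{c})=\sum_{\mathbf{y}_0}\Pr(\mathbf{y}=\mathbf{y}_0)\,\Pr(\mathbf{z}=\mathbf{c}\mid \mathbf{y}=\mathbf{y}_0)=2^{-k}\sum_{\mathbf{y}_0}\Pr(\mathbf{y}=\mathbf{y}_0)=2^{-k}.
\end{equation}
Thus $\mathbf{z}$ is uniformly distributed over $\{0,1\}^k$, which is precisely the definition of an unrestricted vector of entropy $k$. Any statistical dependencies that restrict $\mathbf{y}$ are washed out because the conditional distribution of $\mathbf{z}$ given $\mathbf{y}=\mathbf{y}_0$ does not depend on $\mathbf{y}_0$.

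There is no real obstacle: the only subtlety worth flagging is the need to invoke independence explicitly when passing from the unconditional uniformity of $\mathbf{x}$ to its conditional uniformity given $\mathbf{y}=\mathbf{y}_0$. Without independence the conclusion would fail, as one could correlate $\mathbf{y}$ with $\mathbf{x}$ to force $\mathbf{z}$ to be constant; hence the independence hypothesis is essential and should be stated as the place where the proof hinges.
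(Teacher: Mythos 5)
Your proof is correct, but it takes a genuinely different route from the paper's. The paper argues information-theoretically: it writes $I(\mathbf{z};\mathbf{y}) = H(\mathbf{z}) - H(\mathbf{z}|\mathbf{y})$, observes that $H(\mathbf{z}|\mathbf{y}) = H(\mathbf{x}|\mathbf{y}) = H(\mathbf{x}) = k$ (since, given $\mathbf{y}$, the map $\mathbf{x}\mapsto\mathbf{z}$ is a bijection, and independence gives $H(\mathbf{x}|\mathbf{y})=H(\mathbf{x})$), and then squeezes $H(\mathbf{z})$ between the lower bound $k$ coming from $I(\mathbf{z};\mathbf{y})\geq 0$ and the upper bound $k$ coming from the fact that a length-$k$ binary vector has entropy at most $k$. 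You instead compute the distribution of $\mathbf{z}$ directly via the one-time-pad argument: conditional uniformity of $\mathbf{z}$ given each $\mathbf{y}=\mathbf{y}_0$, followed by the law of total probability. Your version is more elementary and more explicit --- it exhibits the uniform distribution of $\mathbf{z}$ rather than inferring maximality of its entropy, and it isolates exactly where independence is used (the passage from unconditional to conditional uniformity of $\mathbf{x}$), a point the paper leaves implicit in the step $H(\mathbf{z}|\mathbf{y})=H(\mathbf{x})$. The paper's version is symbolically shorter and phrased in the entropy language the rest of the article relies on, which is presumably why it was chosen; in the binary setting the two conclusions (uniformity versus entropy equal to length) coincide, so nothing is lost either way.
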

\begin{proof}
\begin{eqnarray} \label{Ar1}
I(\mathbf{z};\mathbf{y}) & = & H(\mathbf{z})-H(\mathbf{z}|\mathbf{y}) \\
& = & H(\mathbf{z})-H(\mathbf{x}) \\
&=& H(\mathbf{z})-k.
\end{eqnarray}
Since $I(\mathbf{z};\mathbf{y})\geq 0$, we conclude $H(\mathbf{z})\geq k$ and since the entropy of a binary vector cannot be larger than its length, i.e.,  $H(\mathbf{z})\leq k$, we conclude $H(\mathbf{z})=k$, i.e., $\mathbf{z}$ is unrestricted. 
\end{proof}

\begin{theorem}\label{Th2}
Assume $\mathbf{X}$ and $\mathbf{Y}$ are two independent binary matrices of the same size, and  
\begin{equation} \label{EQR1}
\mathbf{Z}=\mathbf{X}+\mathbf{Y}.
\end{equation} 
We have
\begin{eqnarray} \label{Ar01}
H(\mathbf{X}|\mathbf{Z}) & = &  H(\mathbf{Y}) \\ \label{Ar02}
H(\mathbf{Y}|\mathbf{Z}) & = &  H(\mathbf{X}). 
\end{eqnarray}
\end{theorem}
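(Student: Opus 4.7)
The approach will be a short exercise in entropy algebra that exploits the $\mathrm{GF}(2)$ structure of the sum in \ref{EQR1}. The key observation is that, working modulo $2$, the three matrices $\mathbf{X},\mathbf{Y},\mathbf{Z}$ are pairwise determined by one another: any two of them determine the third via $\mathbf{Y}=\mathbf{X}+\mathbf{Z}$ and $\mathbf{X}=\mathbf{Y}+\mathbf{Z}$. I would use this to establish measure-preserving bijections between the pairs $(\mathbf{X},\mathbf{Z})$ and $(\mathbf{X},\mathbf{Y})$, and between $(\mathbf{Y},\mathbf{Z})$ and $(\mathbf{X},\mathbf{Y})$. Since Shannon entropy is invariant under bijective relabelings of the alphabet, the plan begins by writing
\begin{equation}
H(\mathbf{X},\mathbf{Z})=H(\mathbf{X},\mathbf{Y}),\qquad H(\mathbf{Y},\mathbf{Z})=H(\mathbf{X},\mathbf{Y}).
\end{equation}

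Next I would invoke the independence hypothesis to factor $H(\mathbf{X},\mathbf{Y})=H(\mathbf{X})+H(\mathbf{Y})$, and then apply the chain rule in the form
\begin{equation}
H(\mathbf{X}\mid\mathbf{Z})=H(\mathbf{X},\mathbf{Z})-H(\mathbf{Z}),\qquad H(\mathbf{Y}\mid\mathbf{Z})=H(\mathbf{Y},\mathbf{Z})-H(\mathbf{Z}).
\end{equation}
Combining these with the bijection identities yields
\begin{equation}
H(\mathbf{X}\mid\mathbf{Z})=H(\mathbf{X})+H(\mathbf{Y})-H(\mathbf{Z}),
\end{equation}
and the same expression for $H(\mathbf{Y}\mid\mathbf{Z})$. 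From here the plan is to close the argument by importing the conclusion of Theorem~\ref{Th1} (extended coordinate-wise to matrices): when one of $\mathbf{X}$ or $\mathbf{Y}$ is unrestricted, the sum $\mathbf{Z}$ is also unrestricted, and in that regime $H(\mathbf{Z})$ collapses to the entropy of the unrestricted summand, which is exactly what is needed to cancel one of the two entropy terms on the right-hand side and deliver \ref{Ar01} and \ref{Ar02}.

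The step I expect to be the main obstacle is precisely this final simplification, because the bare chain-rule computation only yields $H(\mathbf{X}\mid\mathbf{Z})=H(\mathbf{Y})+[H(\mathbf{X})-H(\mathbf{Z})]$, and the bracketed defect vanishes only under the ``unrestricted summand'' hypothesis inherited from Theorem~\ref{Th1}. So my plan is to make this hypothesis explicit up front (treating the matrix case as an entry-wise application of the vector result), and then the two claimed identities follow symmetrically. If instead the theorem is to be read without that hypothesis, the cleanest fallback is to prove only the symmetric identity $H(\mathbf{X}\mid\mathbf{Z})=H(\mathbf{Y}\mid\mathbf{Z})=H(\mathbf{X})+H(\mathbf{Y})-H(\mathbf{Z})$, which is the sharpest information-theoretic statement the $\mathrm{GF}(2)$ bijection alone can support.
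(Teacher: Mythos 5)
Your plan is mathematically sound, and your diagnosis of the obstacle is exactly right: the bijection-plus-chain-rule computation yields $H(\mathbf{X}\mid\mathbf{Z})=H(\mathbf{Y}\mid\mathbf{Z})=H(\mathbf{X})+H(\mathbf{Y})-H(\mathbf{Z})$, and the stated identity \ref{Ar01} follows only when $H(\mathbf{Z})=H(\mathbf{X})$, which by Theorem~\ref{Th1} is guaranteed when $\mathbf{X}$ is unrestricted. The paper's own proof takes a different and shorter route: for each fixed $\mathbf{Z}=\mathbf{z}$, the map $\mathbf{X}\mapsto\mathbf{Y}=\mathbf{z}+\mathbf{X}$ is a bijection carrying the conditional law of $\mathbf{X}$ given $\mathbf{Z}=\mathbf{z}$ onto that of $\mathbf{Y}$ given $\mathbf{Z}=\mathbf{z}$, and it reads off \ref{Ar01} directly from this correspondence. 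What that argument actually establishes is $H(\mathbf{X}\mid\mathbf{Z})=H(\mathbf{Y}\mid\mathbf{Z})$ --- precisely your fallback symmetric identity --- and the paper then passes from $H(\mathbf{Y}\mid\mathbf{Z})$ to $H(\mathbf{Y})$ without comment, a step that requires $\mathbf{Y}$ and $\mathbf{Z}$ to be independent. That is the same missing hypothesis you isolated; it does hold in the paper's applications, where one summand (e.g., $\mathbf{D}\mathbf{G}_2$) is an i.i.d.\ uniform ``maximum entropy'' matrix, making $\mathbf{Z}$ a one-time-pad encryption of the other summand. So your joint-entropy route buys an explicit accounting of the defect term $H(\mathbf{X})-H(\mathbf{Z})$ that the paper's conditional-bijection route leaves implicit, at the cost of needing to invoke Theorem~\ref{Th1} to close the argument.

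One small correction to your closing step: a single unrestricted summand does not deliver both identities simultaneously. If only $\mathbf{X}$ is unrestricted, then $H(\mathbf{Z})=H(\mathbf{X})$ gives \ref{Ar01}, but \ref{Ar02} becomes $H(\mathbf{Y}\mid\mathbf{Z})=H(\mathbf{Y})$, which equals $H(\mathbf{X})$ only if $\mathbf{Y}$ is unrestricted as well. Each identity needs the hypothesis on its ``own'' summand, and both hold together exactly when $H(\mathbf{X})=H(\mathbf{Y})=H(\mathbf{Z})$. Stating the theorem with that hypothesis made explicit, or retreating to the unconditional symmetric identity, are the two consistent options --- your proposal already contains both.
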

\begin{proof}
Consider $\mathbf{Z}=\mathbf{X}+\mathbf{Y}$ and rewrite it as
$\mathbf{X}=\mathbf{Z}+\mathbf{Y}$.  
This means, for a given $\mathbf{Z}$, there is a one-to-one correspondence between 
$\mathbf{X}$ and $\mathbf{Y}$. In other words, for a given $\mathbf{Z}$, if $\mathbf{X}$ is changed to 
$\mathbf{X}_1=\mathbf{X}+\mathbf{A}\neq \mathbf{X}$, 
then, to realize the same $\mathbf{Z}$ in \ref{EQR1}, $\mathbf{Y}$ should be changed to  $\mathbf{Y}_1=\mathbf{Y}+\mathbf{A}\neq \mathbf{Y}$. As a result, for a given $\mathbf{X}$ and $\mathbf{Z}$, there is a unique $\mathbf{Y}$ satisfying~\ref{Ar01}. This establishes equality~\ref{Ar01}. Likewise, by exchanging roles of $\mathbf{X}$ and $\mathbf{Y}$, equality~\ref{Ar02} is established. 
\end{proof}

\subsection{Proof of Theorem~\ref{fullrank}} \label{ProFR}

{\bf Necessary Condition:} For simplicity of notations, let us assume $\mathsf{r}=3$. 
For $\mathsf{r}=3$, the expression in \ref{Eq2} reduces to:
\begin{equation}
\mathbf{Z}\mathbf{U}=\mathbf{c}_1\circledcirc \mathbf{r}_1+\mathbf{c}_2\circledcirc \mathbf{r}_2+\mathbf{c}_3\circledcirc \mathbf{r}_3.
\label{Eq2NNA}
\end{equation}
To have maximum entropy in expression~\ref{Eq2NNA}, any change in any of the three terms (atomic masks) should change the result of $\mathbf{Z}\mathbf{U}$. Consider a particular realization of matrix  $\mathbf{Z}$, denoted as 
${\cal R}_1^{\mathbf{Z}}$,  where 
\begin{eqnarray} \label{XY1}
\mathbf{c}_1 & = & \hat{\mathbf{c}}_1 \\ \label{XY2}
\mathbf{c}_2 & = & \hat{\mathbf{c}}_2 \\ \label{XY3}
\mathbf{c}_3 & = & \hat{\mathbf{c}}_3.  
\end{eqnarray}
Let us also assume there is a linear combination among columns of $\mathbf{Z}$ in realization ${\cal R}_1^{\mathbf{Z}}$ of the form
\begin{equation}
\hat{\mathbf{c}}_3=\alpha_1 \hat{\mathbf{c}}_1+\alpha_2 \hat{\mathbf{c}}_2.
\label{Eq2NNAK}
\end{equation}
Since matrix $\mathbf{Z}$ will be later restricted to have no all-zero column, and for a simple repetition of columns proof would follow similarly, we set 
$\alpha_1=\alpha_2=1$, i.e.,
 \begin{equation}
\hat{\mathbf{c}}_3=\hat{\mathbf{c}}_1+\hat{\mathbf{c}}_2.
\label{Eq2NNAK0}
\end{equation}
Let us fix $\mathbf{Z}$ at the realization ${\cal R}_1^{\mathbf{Z}}$; and sweep through possible values of $\mathbf{U}$, i.e., realizations of $\mathbf{r}_1$, $\mathbf{r}_2$ and $\mathbf{r}_3$. To maximize entropy, each realization of $\mathbf{U}$ should create a new mask. Let us  consider one such realization, ${\cal R}_1^{\mathbf{U}}$, as
\begin{eqnarray}   \label{XY4}
\mathbf{r}_1 & = & \hat{\mathbf{r}}_1 \\ \label{XY5}
\mathbf{r}_2 & = & \hat{\mathbf{r}}_2 \\ \label{XY6}
\mathbf{r}_3 & = & \hat{\mathbf{r}}_3
\end{eqnarray} 
where ${\cal R}_1^{\mathbf{U}}$ is full rank. 
For realization ${\cal R}_1^{\mathbf{U}}$, we have
\begin{equation}
\hat{\mathbf{c}}_1\circledcirc \hat{\mathbf{r}}_1 + 
\hat{\mathbf{c}}_2\circledcirc \hat{\mathbf{r}}_2 +
\hat{\mathbf{c}}_3\circledcirc \hat{\mathbf{r}}_3
\label{Eq2NNAK1}
\end{equation}
and
\begin{equation}
\hat{\mathbf{c}}_1\circledcirc \hat{\mathbf{r}}_1 + 
\hat{\mathbf{c}}_2\circledcirc \hat{\mathbf{r}}_2 +
(\hat{\mathbf{c}}_1+\hat{\mathbf{c}}_2)\circledcirc 
\hat{\mathbf{r}}_3,
\label{Eq2NNAK2}
\end{equation}
respectively.  Expression~\ref{Eq2NNAK2} reduces to:
\begin{equation}
\hat{\mathbf{c}}_1\circledcirc (\hat{\mathbf{r}}_1 + \hat{\mathbf{r}}_3)+
\hat{\mathbf{c}}_2\circledcirc (\hat{\mathbf{r}}_2 + \hat{\mathbf{r}}_3).
\label{Eq2NNAK2pp}
\end{equation}

Binary vectors produced by the term $\hat{\mathbf{r}}_1+\hat{\mathbf{r}}_3$ for different values of $\hat{\mathbf{r}}_1$ and $\hat{\mathbf{r}}_3$ 
form a group.  As a result, 
$\hat{\mathbf{r}}_1+\hat{\mathbf{r}}_3=\grave{\mathbf{r}}$ is a single binary vector of size $\mathsf{s}$ with at most $2^{\mathsf{s}}$ distinct  values, same is the case for 
$\hat{\mathbf{r}}_2+\hat{\mathbf{r}}_3=\acute{\mathbf{r}}$. Replacing in expression \ref{Eq2NNAK2pp}, we conclude 
 \begin{equation}
\hat{\mathbf{c}}_1\circledcirc (\hat{\mathbf{r}}_1 + \hat{\mathbf{r}}_3)+
\hat{\mathbf{c}}_2\circledcirc (\hat{\mathbf{r}}_2 + \hat{\mathbf{r}}_3)=
\hat{\mathbf{c}}_1\circledcirc \grave{\mathbf{r}}+\hat{\mathbf{c}}_2\circledcirc \acute{\mathbf{r}}.
\label{Eq2NNAK1pp}
\end{equation}
Now consider a second realization of  $\mathbf{U}$, denoted as 
${\cal R}_2^{\mathbf{U}}$, where
\begin{eqnarray}   \label{XYYY4}
\mathbf{r}_1 & = & \hat{\mathbf{r}}_1+\underline{\mathbf{r}} \\ \label{XYYY5}
\mathbf{r}_2 & = & \hat{\mathbf{r}}_2+\underline{\mathbf{r}} \\ \label{XYYY6}
\mathbf{r}_3 & = & \hat{\mathbf{r}}_3+\underline{\mathbf{r}}.
\end{eqnarray}
Expressions \ref{XYYY4}, \ref{XYYY5} and \ref{XYYY5} are for a given non-zero binary vector $\underline{\mathbf{r}}$ where $\underline{\mathbf{r}}\notin \{\hat{\mathbf{r}}_1,\hat{\mathbf{r}}_2,\hat{\mathbf{r}}_3\}$. It follows that
 \begin{equation}
\hat{\mathbf{r}}_1+\underline{\mathbf{r}}+\hat{\mathbf{r}}_3+\underline{\mathbf{r}}=\hat{\mathbf{r}}_1+\hat{\mathbf{r}}_3=\grave{\mathbf{r}}
\label{LLLAKK2}
 \end{equation}
 \begin{equation}
\hat{\mathbf{r}}_2+\underline{\mathbf{r}}+\hat{\mathbf{r}}_3+\underline{\mathbf{r}}=\hat{\mathbf{r}}_2+\hat{\mathbf{r}}_3=\acute{\mathbf{r}}.
\label{LLLAKKH2}
 \end{equation}
Replacing expressions \ref{XY1} to \ref{XY3} (capturing ${\cal R}_1^{\mathbf{Z}}$) and \ref{XYYY4} to \ref{LLLAKKH2} (capturing ${\cal R}_2^{\mathbf{U}}$)
in expression \ref{Eq2NNA} results in the same expression as in \ref{Eq2NNAK1pp}. Recall that expression \ref{Eq2NNAK1pp} was derived assuming realizations ${\cal R}_1^{\mathbf{Z}}$ and ${\cal R}_1^{\mathbf{U}}$. The final conclusion is that pair of realizations  $({\cal R}_1^{\mathbf{Z}}, {\cal R}_1^{\mathbf{U}})$ and 
 $({\cal R}_1^{\mathbf{Z}}, {\cal R}_2^{\mathbf{U}})$ have produced the same mask. 

Excluding realizations of $\mathbf{Z}$ which include all-zero columns(s), and likewise excluding realizations of $\mathbf{U}$ which include all-zero row(s), changes the conditions explained above\footnote{This means discarding realizations with linear combination(s) among columns of $\mathbf{Z}$ and/or among rows of $\mathbf{U}$.} to a stronger condition that acceptable realization of $\mathbf{Z}$ and $\mathbf{U}$ are limited to those with full rank, i.e., $\mathbf{r}$, in all acceptable realizations of $\mathbf{Z}$ and $\mathbf{U}$. 

\vspace{0.25cm}
\noindent 
{\bf Sufficient Condition:} Let us use the notations
${\bf \complement}_{\mathbf{Z}}$ to refer to the set of column-wise full-rank matrices of size $\mathsf{s}\times\mathsf{r}$, ${\bf \complement}_{\mathbf{U}}$ as the set of row-wise full-rank matrices of size $\mathsf{r}\times\mathsf{s}$,  and 
${\bf \complement}_{\mathbf{ZU}}$ as all elements formed when an element from 
${\bf \complement}_{\mathbf{Z}}$ is multiplied by an element from  
${\bf \complement}_{\mathbf{U}}$. Sufficient condition requires that the elements of  
${\bf \complement}_{\mathbf{ZU}}$  are unique.  Obviously, 
\begin{eqnarray}
\mathbf{Z}\in {\bf \complement}_{\mathbf{Z}} & \Longleftrightarrow & \mathbf{Z}\mathbf{L}\in {\bf \complement}_{\mathbf{Z}} \\
\mathbf{U}\in {\bf \complement}_{\mathbf{U}} & \Longleftrightarrow & \mathbf{L}\mathbf{U}\in {\bf \complement}_{\mathbf{U}}
\end{eqnarray}
where $\mathbf{L}\in {\bf \complement}_{\mathbf{L}}$ 
is an invertible matrix of size $\mathsf{r}\times \mathsf{r}$. 
Let us consider two realizations $(\hat{\mathbf{Z}}_1,\hat{\mathbf{U}}_1)$  and 
$(\hat{\mathbf{Z}}_2,\hat{\mathbf{U}}_2)$  of $(\mathbf{Z},\mathbf{U})$, where 
$\mathbf{Z}\in {\bf \complement}_{\mathbf{Z}}$ and 
$\mathbf{U}\in {\bf \complement}_{\mathbf{U}}$, such that
there is a unique $\acute{\mathbf{L}}\in {\bf \complement}_{\mathbf{L}}$ and a unique 
$\grave{\mathbf{L}}\in {\bf \complement}_{\mathbf{L}}$ satisfying 
\begin{eqnarray}
\hat{\mathbf{Z}}_2 & = &  \hat{\mathbf{Z}}_1\acute{\mathbf{L}}    \\
\hat{\mathbf{U}}_2 & = &  \grave{\mathbf{L}}\hat{\mathbf{U}}_1.
\end{eqnarray}
For ${\bf \complement}_{\mathbf{ZU}}$ to have repeated elements, e.g., 
$\hat{\mathbf{Z}}_1\hat{\mathbf{U}}_1=\hat{\mathbf{Z}}_2\hat{\mathbf{U}}_2$, we require
\begin{eqnarray}
\hat{\mathbf{Z}}_1\hat{\mathbf{U}}_1& =& \hat{\mathbf{Z}}_2\hat{\mathbf{U}}_2 \\
\hat{\mathbf{Z}}_1\acute{\mathbf{L}} \grave{\mathbf{L}}\hat{\mathbf{U}}_1& = & \hat{\mathbf{Z}}_2\hat{\mathbf{U}}_2\\
\hat{\mathbf{Z}}_1\tilde{\mathbf{L}}\hat{\mathbf{U}}_1& = & \hat{\mathbf{Z}}_2\hat{\mathbf{U}}_2
\end{eqnarray}
where $\tilde{\mathbf{L}}=\acute{\mathbf{L}} \grave{\mathbf{L}}$ is a unique element in ${\bf \complement}_{\mathbf{L}}$. Since $\tilde{\mathbf{L}}\in {\bf \complement}_{\mathbf{L}}$, we conclude $\hat{\mathbf{Z}}_2\hat{\mathbf{U}}_2$ is the same as an element obtained when $\mathbf{Z}$ is fixed at  $\hat{\mathbf{Z}}_1$, i.e., $\hat{\mathbf{Z}}_2=\hat{\mathbf{Z}}_1$, 
 and $\hat{\mathbf{U}}_2$ is set to $\tilde{\mathbf{L}}\hat{\mathbf{U}}_1$, or equivalently, when $\mathbf{U}$ is fixed $\hat{\mathbf{U}}_1$, i.e., 
$\hat{\mathbf{U}}_2=\hat{\mathbf{U}}_1$,  and $\hat{\mathbf{Z}}_2$ is set to 
 $\tilde{\mathbf{L}}\hat{\mathbf{Z}}_1$. 
These two equivalent cases produce a single mask matrix.
The only exception is the case that $\acute{\mathbf{L}}$ is a permutation matrix, and  
$\grave{\mathbf{L}}$ is equal  to its transpose (inverse). 
In this case, 
$\tilde{\mathbf{L}}=\acute{\mathbf{L}} \grave{\mathbf{L}}$ will be the identity matrix, even though the columns of $\mathbf{Z}_1$ and rows of $\mathbf{U}_1$ are changed (permuted using the same permutation). If $\acute{\mathbf{L}}$ and/or   
$\grave{\mathbf{L}}$ have more than a single one in a column and/or in a row, either 
$\hat{\mathbf{Z}}_1\acute{\mathbf{L}}$ and/or   
$\grave{\mathbf{L}}\hat{\mathbf{U}}_1$ would change in a non-trivial manner. This excludes the case of using a permutation matrix for $\acute{\mathbf{L}}$ and its inverse (transpose)  for $\grave{\mathbf{L}}$.

The collection of masks that relate to a product of the form 
$\hat{\mathbf{Z}}\hat{\mathbf{U}}$ by fixing $\hat{\mathbf{Z}}$ and changing 
$\hat{\mathbf{U}}$ to $\hat{\mathbf{L}}\hat{\mathbf{U}}$ (or vice versa) form a 
subset of the Ground Set, composed of unique elements, and the subset is closed when one scans through different 
$\hat{\mathbf{L}}\in {\bf \complement}_{\mathbf{L}}$. Now consider a 
$\bar{\mathbf{Z}}\bar{\mathbf{U}}$ that is not in the subset corresponding
to $\hat{\mathbf{Z}}\hat{\mathbf{U}}$. Such an element is unique and generates another subset of the Ground Set that does not have any common elements with the subset generated by  $\hat{\mathbf{Z}}\hat{\mathbf{U}}$.
This proves the uniqueness of elements in the Ground Set, when $\mathbf{Z}$ is column-wise full rank, and $\mathbf{U}$ is row-wise full rank.  
 $\square$

\subsection{Proof of Theorem~\ref{fullrank2}} \label{ProFR2}
Let us consider a masking matrix $\hat{\mathbf{M}}$ produced by a realization  
$\hat{\mathbf{Z}}\hat{\mathbf{U}}$ satisfying the conditions of Theorem~\ref{fullrank}, i.e, $\hat{\mathbf{M}}=\hat{\mathbf{Z}}\hat{\mathbf{U}}$ is of  rank $\mathsf{r}$. There are permutation matrices $\mathbf{P}_1$ and $\mathbf{P}_2$ which will relocate the matrix elements at the intersections (defined in Theorem~\ref{fullrank2}) to the upper left corner of  $\mathbf{P}_1\hat{\mathbf{M}}\mathbf{P}_2$.
The upper left corner of $\mathbf{P}_1\hat{\mathbf{M}}\mathbf{P}_2$ spans the space formed by columns, as well by rows, of $\hat{\mathbf{M}}$.
We have 
\begin{equation}
\mathbf{P}_1\hat{\mathbf{M}}\mathbf{P}_2=
\mathbf{P}_1{\mathbf{ZU}}\mathbf{P}_2
\end{equation}
where $\mathbf{P}_1{\mathbf{Z}}$ and ${\mathbf{U}}\mathbf{P}_2$ are another permissible realization of ${\mathbf{Z}}$ and ${\mathbf{U}}$, respectively. Discarding the trivial case that 
$\mathbf{P}_1$ and $\mathbf{P}_2$ are both identity matrices, we have 
$\mathbf{P}_1{\mathbf{Z}}\neq {\mathbf{Z}}$ and/or 
${\mathbf{U}}\mathbf{P}_2\neq {\mathbf{U}}$. This means, such a mask matrix occurs once in the Ground Set (assuming the trivial cases explained in Remark 4 are removed from the Ground Set). To verify this claim, let us focus on 
$\mathbf{P}_1{\mathbf{Z}}$ and ${\mathbf{U}}\mathbf{P}_2$. In $\mathbf{P}_1{\mathbf{Z}}$, columns corresponding to intersections in 
$\hat{\mathbf{M}}$ are moved to the upper $\mathsf{r}\times \mathsf{r}$ matrix in 
$\mathbf{P}_1{\mathbf{Z}}$.  In ${\mathbf{U}}\mathbf{P}_2$, rows corresponding to intersections in $\hat{\mathbf{M}}$ are moved to the right $\mathsf{r}\times \mathsf{r}$ matrix in ${\mathbf{U}}\mathbf{P}_2$. Let us write the matrices ${\mathbf{Z}}$ and 
${\mathbf{U}}$ as shown in Fig.~\ref{FigAAKK1}. Now let us consider all masks due to 
\begin{equation} \label{MMQW1}
\mathbf{P}_1{\mathbf{Z}}\mathbf{L}{\mathbf{U}}\mathbf{P}_2.
\end{equation} 
Scanning through all invertible $\mathsf{r}\times \mathsf{r}$ matrices $\mathbf{L}$ in
\ref{MMQW1} results in all possible realizations of the invertible $\mathsf{r}\times \mathsf{r}$ matrix $\check{\mathbf{Z}}\check{\mathbf{U}}$ in the upper left corner of $\mathbf{P}_1{\mathbf{ZU}}\mathbf{P}_2$ in Fig.~\ref{FigAAKK1}. Going through other  subsets of the Ground Set (due to other realizations of ${\mathbf{Z}}$ and/or 
${\mathbf{U}}$) that are not yet included, and for which the same intersection points form an $\mathsf{r}\times \mathsf{r}$ invertible matrix, result in repeating realizations of the matrix in the upper corner of the resulting mask matrix. Since each mask matrix in the Ground Set occurs once, given a mask ${\mathbf{M}}$ and 
non-trivial cases of $\mathbf{P}_1$ and $\mathbf{P}_2$, we have
\begin{equation} \label{MMQW2}
\mathbf{P}_1{\mathbf{M}}\mathbf{P}_2\neq {\mathbf{M}}.
\end{equation}
It follows that  each realization of the invertible $\mathsf{r}\times \mathsf{r}$ matrix 
in the upper left corner of the mask matrix $\mathbf{P}_1{\mathbf{ZU}}\mathbf{P}_2$ occurs with the same probability. $\square$

\begin{figure}[h]
   \centering
\hspace*{-0.8cm}
   \includegraphics[width=0.45\textwidth]{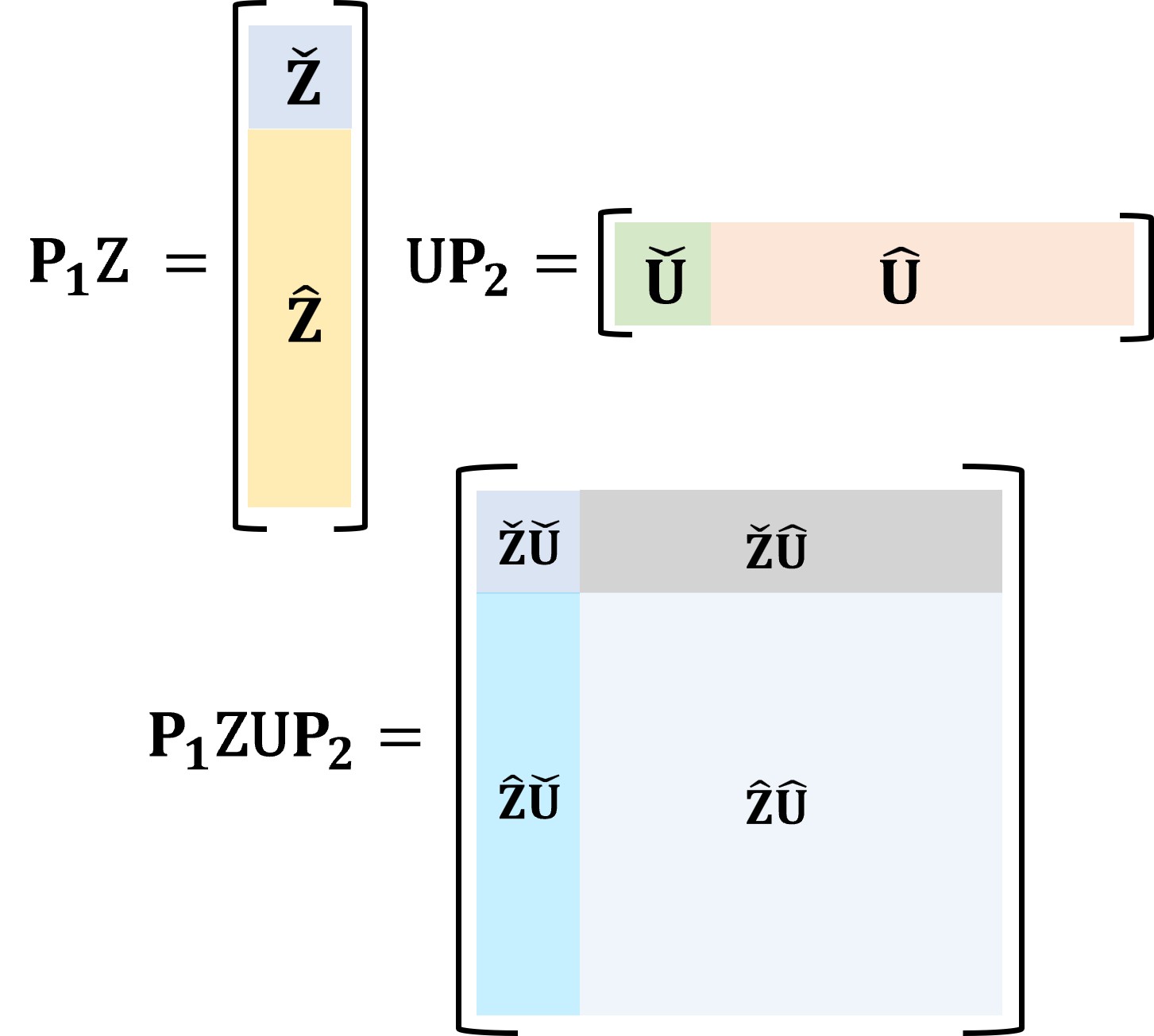}
   \caption{Matrix $\mathbf{P}_1{\mathbf{Z}}{\mathbf{U}}\mathbf{P}_2$.}
   \label{FigAAKK1}
 \end{figure}

\end{appendix}

\begin{small}

\end{small}

\end{document}